\documentclass[11pt,a4paper]{article}

\usepackage[a4paper,margin=1in]{geometry}
\usepackage{amsmath,amssymb,amsthm,mathtools,bm}
\usepackage{array,booktabs}
\usepackage{algorithm}
\usepackage{algpseudocode}
\usepackage{microtype}
\usepackage{enumitem}
\usepackage[numbers,sort&compress]{natbib}
\usepackage[hidelinks]{hyperref}
\usepackage{xcolor}
\usepackage{titlesec}

\allowdisplaybreaks
\setlist[itemize]{topsep=2pt,itemsep=1pt,leftmargin=2em}
\setlist[enumerate]{topsep=2pt,itemsep=1pt,leftmargin=2em}

\titleformat{\paragraph}[runin]{\normalfont\normalsize\bfseries}{}{0pt}{}
\titlespacing*{\paragraph}{0pt}{0.8ex plus .2ex minus .1ex}{0.6em}

\newtheorem{theorem}{Theorem}[section]
\newtheorem{proposition}[theorem]{Proposition}
\newtheorem{lemma}[theorem]{Lemma}
\newtheorem{corollary}[theorem]{Corollary}
\newtheorem{definition}[theorem]{Definition}
\newtheorem{remark}[theorem]{Remark}
\newtheorem{example}[theorem]{Example}

\newcommand{\succeqB}{\succeq_{\mathrm B}}
\newcommand{\simB}{\sim_{\mathrm B}}

\newcommand{\R}{\mathbb R}

\newcommand{\PD}{\mathbb S_{++}}

\newcommand{\OPT}{\operatorname{OPT}}
\newcommand{\Rfull}{R_{\mathrm{full}}}

\newcommand{\email}[1]{\texttt{#1}}

\title{Arbitrage-free Data Pricing}
\author{
Yihang Wu\\
 Zhejiang University\\
 \email{yhwu\_is@zju.edu.cn}
\and
Zhengyu Jin\\
 Zhejiang University\\
 \email{clovers2333@zju.edu.cn}
\and
Yicheng Fu\\
 Zhejiang University\\
 \email{fuycc@zju.edu.cn}
\and
Jinfei Liu\thanks{Corresponding author.}\\
 Zhejiang University\\
 \email{jinfeiliu@zju.edu.cn}
\and
Kui Ren\\
 Zhejiang University\\
 \email{kuiren@zju.edu.cn}
}
\date{}

\begin{document}
\maketitle
\thispagestyle{empty}

\begin{abstract}

We study optimal pricing of versioned data products when buyers can combine multiple purchases. A monopoly seller offers a menu of data products, and a buyer’s value for data is the improvement to their expected utility in a Bayesian decision problem. Since a buyer may purchase any finite bundle of products, including repeated copies of the same product, versioning creates arbitrage opportunities: a bundle of cheaper products may be more valuable than a product with a higher price. We formulate the arbitrage-free data selling problem which has infinite arbitrage-free constraints in general and possibly infinite state space, and prove its computational intractability: the problem admits no PTAS even for instances in which the state space is finite, and the problem admits no polynomial time constant factor approximation for succinct high-dimensional instances. On the positive side, when the numbers of buyer types and actions are constant, we give an additive FPTAS that handles possibly infinite state spaces and infinitely many arbitrage-free constraints, and empirically validate the algorithm on realistic synthetic data trading scenarios. We also analyze the posted pricing algorithm for selling only complete information and prove a tight approximation factor. We further identify a threshold utility regime in which arbitrage-freeness reduces to Blackwell dominance, which unifies known arbitrage-free conditions for dataset query and machine learning model pricing. Under this regime, we design efficient algorithms for several structured data menus common in practice.

\end{abstract}

\newpage
\setcounter{page}{1}

\section{Introduction} \label{sec:introduction}

The growing demand for data in advertising, finance, and machine learning has given rise to active markets for data products. Two prominent data product forms are datasets and machine learning models \cite{zhang2024surveydatamarkets}. Dataset markets like AWS Data Exchange, Snowflake Marketplace, and Dawex broker subscriptions and licenses for large catalogs of data products. Model markets like AWS SageMaker and Microsoft Azure offer pre-trained models and model training services.

From an economic perspective, buyers purchase data because it improves their decisions. An advertiser uses demographic and behavioral datasets to allocate a campaign budget; a retailer uses demand forecasts to choose inventory and staffing. In both cases, the value of data comes from its effect on a downstream decision problem. This suggests a natural viewpoint: selling data is fundamentally the sale of information that is useful for decision making.

Data differs from ordinary goods in its production cost structure. Collecting, cleaning, labeling, and training on data can be expensive, but once a useful product has been created, additional copies are often nearly costless. This makes versioning an effective selling strategy \cite{shapiro1998versioning}. A dataset seller may offer different queries instead of selling only the full database \cite{koutris2015query,koutris2013toward,deep2017design}. A model seller may offer versions with different accuracy \cite{chen2019towards,liu2021dealer}. Such versioning allows the seller to serve heterogeneous buyers and extract more revenue than selling a single product.

Versioning, however, creates \emph{arbitrage}. If a menu is not designed carefully, a buyer can combine multiple cheap products and recover the information contained in an expensive one.

\begin{example}[Query arbitrage] \label{ex:query-arbitrage}
  An advertiser wants to purchase a complete table $D$ of target users for a marketing campaign. Let $D = D_{\mathrm{male}} \cup D_{\mathrm{female}}$, where \(D_{\mathrm{male}}\) and \(D_{\mathrm{female}}\) contain the male and female target users, respectively. Suppose the seller offers three query products: $Q_{\mathrm{male}}(D) = D_{\mathrm{male}}$, $Q_{\mathrm{female}}(D) = D_{\mathrm{female}}$, $Q_{\mathrm{all}}(D) = D$. If the posted prices are $p(Q_{\mathrm{male}}) = p(Q_{\mathrm{female}}) = 1000$, $p(Q_{\mathrm{all}}) = 3000$, then the advertiser would purchase \(Q_{\mathrm{male}}\) and \(Q_{\mathrm{female}}\) at total cost \(2000 < 3000\). Hence the menu admits arbitrage.
\end{example}

\begin{example}[Model arbitrage] \label{ex:model-arbitrage}
 A buyer wants a model for predicting next month's sales. Let \(y^*\in\R\) denote the prediction produced by the seller's best trained model. The seller versions this model by adding Gaussian noise and offers $M_1 = y^* + Z_1$, where $Z_1\sim\mathcal N(0, 1)$, at price \(500\), and $M_2 = y^* + Z_2$, where $Z_2\sim\mathcal N(0, 2)$, at price \(200\). Suppose repeated purchases are independent conditional on \(y^*\). Then if the buyer purchases two copies of \(M_2\), their average is $y^* + \frac{Z_2^{(1)} + Z_2^{(2)}}{2} \sim \mathcal N(y^*, 1)$. Thus the average of two \(M_2\) purchases has exactly the same distribution as \(M_1\), but costs only \(200 + 200 = 400 < 500\). This is again arbitrage.
\end{example}

Existing work on arbitrage-free pricing for query products \cite{chawla2019revenue} and model products \cite{chen2019towards,liu2021dealer} typically studies seller revenue maximization under the assumption that the buyer's value for each data product is exogenously given. However, in practice the seller only has incomplete information about the buyer's downstream task or utility function. Moreover, once value is tied to decision making, the arbitrage-free pricing rules studied in previous work (as the two examples above) need not be sufficient: they only consider the cases of exact reconstruction where the cheaper products yield the expensive one. However, if a bundle of products cannot reconstruct another product, it may still be more valuable under some decision problems.

A parallel literature on selling information starts from Bayesian decision problems \cite{babaioff2012optimal,bergemann2018design,cai2020sell,chen2020selling,li2025sell}, but it usually assumes that a buyer selects one product from the menu. However, a buyer can create multiple accounts or repeatedly query an API. Example~\ref{ex:screening-arbitrage-gap} shows that if arbitrage constraints are ignored, a menu that is optimal under standard information selling constraints may still allow buyers to arbitrage, thereby reducing the seller's revenue.

\subsection{Our contributions}

Accordingly, we investigate the seller's optimal data menu design and pricing problem where pricing satisfies arbitrage-free (AF) constraints and where the value of data is derived from the improvement to the buyer's decision utility. We first unify versioned datasets and models as experiments, and formulate AF constraints for information selling. In this formulation, buyers may purchase several products (a bundle), including repeated copies of the same product, and each bundle induces a composite experiment. We prove that incentive compatibility and individual rationality are special cases of AF constraints, thus we strictly generalize the previous information selling setting. We also establish a revelation principle adapted to AF constraints: without loss of generality, one may restrict attention to direct AF and responsive menus.

We then formulate the general AF problem, which has infinitely many AF constraints in general and possibly infinite state space, and prove its computational intractability: even explicit finite instances admit no PTAS, and succinct high-dimensional instances admit no polynomial time constant factor approximation. On the positive side, if we treat the numbers of buyer types and actions as constant, we can give an additive FPTAS that addresses the two main sources of infinitude: it groups the state space into finite cells with similar utility profiles and truncates the bundle space by removing low price products. Then the problem is a finite convex program over cellwise recommendation profiles that can be solved by the ellipsoid method. We empirically validate the algorithm on realistic synthetic data trading scenarios. Finally, we use the revelation principle to get the direct AF menu. We also prove a tight approximation guarantee for a simple complete information posted pricing algorithm.

We further identify that under threshold utilities, where a constant threshold value is obtained exactly when an experiment suffices for a target task, AF reduces to Blackwell AF: no cheaper bundle may Blackwell dominate the assigned experiment. Thus we can unify the AF conditions for query and model pricing in previous works as special cases of Blackwell AF, and indicate that the AF conditions in previous works are not sufficient for data value generated from Bayesian decision problems.

Finally, we study optimal pricing for structured data menus under threshold utilities. We first show that the Blackwell AF pricing problem remains hard, and the main difficulty is that pricing chooses which buyer types receive nonempty information. We give solutions from three perspectives: (1) we find two practical conditions under which full buyer surplus extraction is possible; (2) we give a polynomial time logarithmic approximation when Blackwell AF violations can be detected in polynomial time; (3) we find two realistic menu structures that yield dynamic programs, one of which generalizes the algorithms in \cite{chen2019towards,liu2021dealer}. We relate all the results to real world data products and thus our results can be applied to practical data pricing problems.

\subsection{Related work}

We focus on the three strands of literature most closely related to this paper: arbitrage-free pricing, information selling and multiproduct monopoly and bundling. For surveys on other data and information pricing topics, see \cite{pei2020survey,cong2022data,zhang2024surveydatamarkets,bi2024when,bergemann2019markets,bergemann2021information}.

\paragraph{Arbitrage-free data pricing.} Arbitrage-free data pricing was first introduced by \cite{balazinska2011data}. \cite{koutris2015query} was the first to formalize arbitrage-freeness in query markets and studied the computational complexity of a special class of arbitrage-free pricing functions. Building on this line of work, \cite{li2014theory} studied arbitrage-free pricing when noise is added to queries. \cite{lin2014arbitrage,deep2017design} considered more general query pricing functions, and in particular \cite{deep2017design} provided necessary and sufficient conditions for arbitrage-free pricing. \cite{chawla2019revenue} studied seller revenue guarantees for heuristic arbitrage-free pricing algorithms when buyer utilities are given. \cite{chen2019towards} was the first to introduce arbitrage-freeness into machine learning model pricing; they characterized the necessary and sufficient conditions for arbitrage-free pricing and studied algorithms for maximizing seller revenue given buyers' utilities. \cite{liu2021dealer} used differential privacy to add noise thus protecting model privacy and further discussed the revenue allocation problem among data providers based on the Shapley value.

Our work unifies query and model pricing within the general information pricing problem. We show that the arbitrage-free conditions proposed in these prior works are special cases of Blackwell dominance, and therefore are not sufficient in the general Bayesian decision setting.

\paragraph{Selling information.} \cite{babaioff2012optimal} was the first to study the sale of information with multiple rounds of interaction between buyer and seller. They focused on revelation principles under both independent and correlated buyer-seller information structures and studied the associated computational complexity. \cite{chen2020selling} further incorporated budget constraints and interpreted optimal information selling mechanisms as practical consulting mechanisms. \cite{liu2021optimal} showed that, in a special class of decision problems and buyer utility specifications, the optimal mechanism for selling information takes the form of a natural threshold mechanism. \cite{bergemann2018design} studied the menu-based sale of information, in which buyer and seller interact only once, and characterized the optimal menu when the state space is binary. \cite{cai2020sell} studied the corresponding optimal information pricing problem from an algorithmic perspective. \cite{li2025sell} proposed a sampling-based algorithm to compute the optimal menu when the state space is large. \cite{bergemann2022selling} examined the approximate optimality of selling full information. \cite{li2022selling} studied mechanisms for selling information when data buyers possess endogenous information.

Our work builds on the menu-based model of selling information in \cite{bergemann2018design} and introduces arbitrage-free constraints into the information pricing problem, which is a non-trivial generalization of previous works and brings new computational challenges.

\paragraph{Multiproduct monopoly and bundling.} Our work is also related to multiproduct monopoly and bundling. The classical literature mainly studies how a monopolist should package existing ordinary goods and choose prices, including pure and mixed bundling \cite{adams1976commodity,mcafee1989multiproduct}, multiproduct nonlinear pricing and multidimensional screening \cite{wilson1993nonlinear,armstrong1996multiproduct,rochet1998ironing}, and structural or approximation results for multi-item revenue maximization and simple bundling mechanisms \cite{manelli2006bundling,daskalakis2015strong,hart2017approximate,haghpanah2021pure,ghili2023characterization,yang2025nested}. A smaller line explicitly considers arbitrage problem induced by the failure of purchase monitoring: \cite{mcafee1989multiproduct} distinguish monitored and unmonitored purchases, and \cite{carvajal2010bundling} study bundling with resale opportunities, where buyers may choose arbitrary combinations of offered bundles.

Our setting differs since we do not optimize how to bundle a fixed set of ordinary goods. Our focus is arbitrage-freeness of data product menus against arbitrary finite bundle deviations, rather than optimal package design for existing goods.

\section{Preliminaries} \label{sec:preliminary}

\subsection{Model} \label{subsec:model}

\paragraph{Basic setup.} We consider a monopoly data seller selling data to a data buyer who faces a decision problem under uncertainty. The state of the world $\omega$ lies in a measurable state space $\Omega$, is drawn according to a publicly known common prior $\mu \in \Delta(\Omega)$, and can be observed by the data seller. The buyer has a private type $\theta$ drawn from a finite set $\Theta$, and chooses an action $a$ from a finite set $A$. The buyer's ex-post utility from taking action $a$ when the state is $\omega$ and the type is $\theta$ is a measurable function $u_\theta(\omega, a) \in [0, 1]$. Let the prior mass of type \(\theta\) be \(\lambda_\theta > 0\).

\paragraph{Experiment.} The data seller offers a menu of experiments (or signaling schemes) about the state $\omega$. An experiment \(E = (S_E, K_E)\) is a measurable signal space \(S_E\) and a Markov kernel $K_E(\cdot\mid\omega) \in \Delta(S_E), \forall \omega \in \Omega$. When the state of the world is \(\omega\), the experiment draws \(s\sim K_E(\cdot\mid\omega)\) and sends \(s\) to the buyer.

\cite{bergemann2018design,cai2020sell} treat the experiments as decision information that directly affects the buyer's payoff under uncertainty, and we will prove in Section~\ref{subsec:arbitrage} that query answers and model predictions can also be represented in the same framework. Thus the experiment provides a unifying framework for all different types of data products.

Denote $\mathcal{M} = \{(E_j, t_j)\}_{j = 1}^N$ as the menu posted by the seller, where $E_j$ is an experiment and $t_j$ is its price. In our setting, the buyer may purchase several experiments simultaneously (including repeated purchases of the same experiment), or equivalently create multiple accounts and buy experiments separately under different identities. When a buyer purchases multiple experiments, we assume that conditional on the true state of the world, the seller sends independent signals across purchases. This independence assumption is natural: because the buyer can always split purchases across different accounts, each purchase can be viewed as a separate draw from the corresponding experiment. The following definition formalizes the information generated by an arbitrary finite bundle of purchases.

\begin{definition}
  Let \(B = (i_1, \ldots, i_\ell) \in [N]^{\ell}\) be a list of indices of experiments in the menu. The composite experiment associated with \(B\), denoted by $E_B := \bigotimes_{r = 1}^{\ell} E_{i_r}$ has signal space $S_B := \prod_{r = 1}^{\ell} S_{E_{i_r}}$ and kernel
  \[ K_{E_B}(s_1, \ldots, s_\ell \mid \omega) = \prod_{r = 1}^{\ell}K_{E_{i_r}}(s_r \mid \omega), \ \forall (s_1, \ldots, s_\ell) \in S_B, \omega \in \Omega. \] 
 Besides, the empty bundle gives the uninformative experiment.
\end{definition}

Based on the notation above, the interaction between the buyer and the seller proceeds as follows:
\begin{enumerate}
 \item The seller posts a menu $\mathcal{M}$.
 \item The true state $\omega$ and the buyer's type $\theta$ are realized.
 \item The buyer selects a composite experiment $E_B$ from the menu and pays the corresponding total price $t_B = \sum_{r = 1}^{\ell} t_{i_r}$.
 \item The seller sends the buyer a signal $s \in S_B$ from the experiment $E_B$ according to the distribution $K_{E_B}(\cdot\mid\omega)$.
 \item The buyer chooses an action $a$ and obtains utility $u_\theta(\omega, a)$.
\end{enumerate}

\paragraph{The value of a composite experiment.} Before purchasing any experiment, type $\theta$ chooses the action that maximizes expected utility under the prior $\mu$ and obtains \footnote{If $\Omega$ or $S_E$ is infinite, then the summation can be written as an integral. We use the summation notation for simplicity, and the results in the paper hold in the more general setting as well.}
\[ U_\theta^0 = \max_{a\in A} \sum_{\omega\in\Omega} \mu(\omega) u_\theta(\omega, a). \]
After receiving a signal $s$ from experiment $E_B$, the buyer updates his belief to 
\[ \mu(\omega \mid s) = \frac{\mu(\omega) K_{E_B}(s \mid \omega)}{\sum_{\omega' \in \Omega} \mu(\omega') K_{E_B}(s \mid \omega')}, \]
and chooses the action that maximizes expected utility based on the posterior belief $\mu(\cdot \mid s)$, obtaining expected utility
\[ u_\theta(s) = \max_{a \in A} \sum_{\omega \in \Omega} \frac{\mu(\omega) K_{E_B}(s \mid \omega)}{\sum_{\omega' \in \Omega} \mu(\omega') K_{E_B}(s \mid \omega')} u_\theta(\omega, a). \]
Since the probability of receiving signal $s$ is $\sum_{\omega \in \Omega} \mu(\omega) K_{E_B}(s \mid \omega)$, the expected utility of experiment $E_B$ to type $\theta$ is
\[ W_\theta(E_B) = \sum_{s \in S_{E_B}} \max_{a \in A} \sum_{\omega \in \Omega} \mu(\omega) K_{E_B}(s \mid \omega) u_\theta(\omega, a). \]
Thus the value of a composite experiment \(E_B\) to type \(\theta\) is the increase in expected utility from purchasing \(E_B\) compared to purchasing nothing:
\[ V_\theta(E_B) = W_\theta(E_B) - U_\theta^0. \]

\paragraph{AF and responsive menu.} A direct menu $\mathcal{M} = \{E_\theta, t_\theta\}_{\theta \in \Theta}$ assigns an experiment $E_\theta$ and a price $t_\theta$ to each type $\theta$. We call a direct menu \(\mathcal M\) \textit{arbitrage-free (AF)} if for every type \(\theta\), the intended experiment \(E_\theta\) is the optimal composite experiment for type \(\theta\):
\begin{equation} \label{eq:AF}
 V_\theta(E_\theta) - t_\theta \geqslant V_\theta(E_B) - t_B, \forall B \neq (\theta), \theta \in \Theta.
\end{equation}

Note that AF is a natural extension of incentive compatibility (IC, where $B$ is a singleton) to the setting with composite experiments. Besides, if $B = \varnothing$, then the AF constraint reduces to individual rationality (IR), which requires that the buyer's utility from purchasing the intended experiment is at least the utility from purchasing nothing.

For \(\alpha \geqslant 0\) and a collection \(\mathcal B\) of bundles, we say that a menu \(\mathcal M\) is \emph{\(\alpha\)-AF against \(\mathcal B\)} if \(V_\theta(E_\theta) - t_\theta \geqslant V_\theta(E_B) - t_B - \alpha\) for every \(\theta \in \Theta\) and every \(B \in \mathcal B \setminus \{(\theta)\}\). When \(\mathcal B\) is the set of all finite bundles, we simply say that \(\mathcal M\) is \emph{\(\alpha\)-AF}. Thus \(0\)-AF coincides with AF.

The next proposition is an extension of the classical revelation principle to the setting with composite experiments, and it shows that we can focus on direct menus without loss of generality. The omitted proofs in this section are deferred to Appendix~\ref{app:omitted-proofs-sec2}.

\begin{proposition}[Revelation principle] \label{prop:bundle-revelation}
 For any menu $\mathcal{M}$, there is a direct menu $\mathcal{M}'$ that has the same expected seller revenue and is AF.
\end{proposition}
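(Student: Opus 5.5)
Given any menu $\mathcal M=\{(E_j,t_j)\}_{j=1}^N$, the plan is to let each type choose an optimal finite bundle in $\mathcal M$, collapse that bundle into a single composite experiment, and post it at its total price. For each $\theta\in\Theta$, let $B^\theta$ be a bundle maximizing $V_\theta(E_B)-t_B$ over all finite bundles of $\mathcal M$, with ties broken in the seller's favour (largest $t_B$). This is the bundle type $\theta$ actually buys, so the seller's revenue from $\mathcal M$ is $\sum_\theta\lambda_\theta t_{B^\theta}$. Define the direct menu $\mathcal M'=\{(E'_\theta,t'_\theta)\}_{\theta\in\Theta}$ with $E'_\theta:=E_{B^\theta}$ and $t'_\theta:=t_{B^\theta}$. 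If type $\theta$ buys its intended item in $\mathcal M'$, it pays exactly what it paid in $\mathcal M$. So once $\mathcal M'$ is shown to be AF, its expected revenue equals that of $\mathcal M$.

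The key step is to show that every bundle of $\mathcal M'$ is, informationally and in price, a bundle of $\mathcal M$. Take a bundle $B'=(\theta_1,\dots,\theta_\ell)$ of $\mathcal M'$, and let $\hat B$ be the concatenation of the index lists $B^{\theta_1},\dots,B^{\theta_\ell}$; this is a finite bundle of $\mathcal M$. Prices add, so $t'_{B'}=\sum_r t_{B^{\theta_r}}=t_{\hat B}$. The tensor product is associative, and all component signals are drawn independently conditional on $\omega$. Hence the composite $\bigotimes_r E_{B^{\theta_r}}$ has, up to a measurable regrouping of coordinates of the signal space, the same kernel as $E_{\hat B}$. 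The two experiments therefore induce the same joint distribution of (state, posterior), and $V_\theta(E'_{B'})=V_\theta(E_{\hat B})$ for every $\theta$.

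With this, AF follows directly: for any type $\theta$ and any $B'\neq(\theta)$,
\[V_\theta(E'_\theta)-t'_\theta=V_\theta(E_{B^\theta})-t_{B^\theta}\ \geqslant\ V_\theta(E_{\hat B})-t_{\hat B}=V_\theta(E'_{B'})-t'_{B'},\]
by optimality of $B^\theta$ among all bundles of $\mathcal M$. The empty bundle is covered as well: $\hat B=\varnothing$, so IR is inherited.

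The main obstacle is the existence of a maximizer $B^\theta$, since there are infinitely many bundles. Values are bounded, $V_\theta\le 1$, so when all prices are positive only bundles with $t_B\le 1$ matter. Only finitely many bundles satisfy this, and a maximum exists. Zero-priced experiments are the difficulty: repeated copies can give a supremum that is not attained. To handle them, I would use the fact that $V_\theta(E_j^{\otimes n})$ is monotone in $n$ and bounded, so it converges. I would then allow $E'_\theta$ to be the countable product (the limit experiment) $E_j^{\otimes\infty}$, which is well defined by the Kolmogorov extension theorem and has price $0$. Its value equals the limit by martingale convergence of posteriors. Alternatively, one can adopt the standing convention that the buyer's choice in $\mathcal M$ is well defined, meaning a best response exists, which is implicit in speaking of "the seller revenue" of $\mathcal M$; I would state this as the assumption. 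The remaining steps, namely seller-favourable tie-breaking and the measurability of the regrouping, are routine.
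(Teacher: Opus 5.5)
Your proposal is correct and follows essentially the same route as the paper: fix each type's seller-favourably tie-broken optimal bundle $B^\theta$ in $\mathcal M$, post $E_{B^\theta}$ at price $t_{B^\theta}$, and observe that any bundle of $\mathcal M'$ unfolds by concatenation into a bundle of $\mathcal M$ with the same value and total price, so AF is inherited from the optimality of $B^\theta$. Your discussion of whether a best response exists goes beyond the paper, which leaves this implicit. With positive prices only finitely many bundles matter; zero prices are the problematic case, and your fallback of assuming a best response exists matches the paper's tacit convention.
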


Besides AF, we also need to impose obedience constraints for the intended types to restrict the number of signals that the seller needs to design for each experiment. We call a menu \textit{responsive} if for every type $\theta$, every signal $s \in S_{E_{\theta}}$ of the intended experiment $E_\theta$ leads type $\theta$ to a different optimal action. The following proposition shows that we can focus on responsive direct menus without loss of generality.

\begin{proposition}[Generalization of Proposition~1 of \cite{bergemann2018design}] \label{prop:obedience-revelation}
 For any AF direct menu $\mathcal{M}$, there is a responsive direct menu $\mathcal{M}'$ that has the same expected seller revenue and is AF.
\end{proposition}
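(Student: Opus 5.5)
We would prove Proposition~\ref{prop:obedience-revelation} by the standard garbling construction of \cite{bergemann2018design}, and then check that it preserves every AF constraint, not just IC and IR. Fix an AF direct menu $\mathcal M=\{(E_\theta,t_\theta)\}$. For each $\theta$, fix a measurable optimal decision rule $\sigma_\theta:S_{E_\theta}\to A$ for type $\theta$ under $E_\theta$, breaking ties by a fixed order on the finite set $A$. Define the garbled experiment $E'_\theta$ with signal space $A_\theta:=\sigma_\theta(S_{E_\theta})\subseteq A$ and kernel $K_{E'_\theta}(a\mid\omega)=K_{E_\theta}(\sigma_\theta^{-1}(a)\mid\omega)$. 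In words, $E'_\theta$ recommends to type $\theta$ the action it would have taken. Keep the prices, $t'_\theta=t_\theta$, and let $\mathcal M'=\{(E'_\theta,t_\theta)\}$.

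\textbf{Step 1 (own value and responsiveness).} After receiving recommendation $a$, the posterior of type $\theta$ is a mixture of the posteriors on $\sigma_\theta^{-1}(a)$. Action $a$ is optimal at each of these, so it is optimal at the mixture. Hence $W_\theta(E'_\theta)=W_\theta(E_\theta)$ and $V_\theta(E'_\theta)=V_\theta(E_\theta)$. If two recommendations $a\neq a'$ induce the same optimal action, we merge them; merging preserves obedience by the same convexity argument, so the value is unchanged. Iterating this yields responsiveness.

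\textbf{Step 2 (AF is preserved).} Each $E'_\theta$ is a garbling of $E_\theta$. For a bundle $B'=(\theta_1,\dots,\theta_\ell)$ in $\mathcal M'$, the composite $E'_{B'}=\bigotimes_r E'_{\theta_r}$ is therefore a garbling of $E_B=\bigotimes_r E_{\theta_r}$: apply $\sigma_{\theta_r}$ coordinatewise. This works because the product kernel is built from conditionally independent draws. By Blackwell's theorem, or directly because a decision rule on garbled signals can be simulated on the original signals, $V_\theta(E'_{B'})\le V_\theta(E_B)$ for every $\theta$. The price $t_{B'}$ equals $t_B$. Combining with the AF property of $\mathcal M$ and Step~1 gives
\[
V_\theta(E'_\theta)-t_\theta=V_\theta(E_\theta)-t_\theta\ge V_\theta(E_B)-t_B\ge V_\theta(E'_{B'})-t_{B'}.
\]
There is one subtlety. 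If $B'$ is a singleton $(\theta)$ repeated, or contains $\theta$ together with other items, it still counts as a bundle $B\neq(\theta)$ in $\mathcal M$, so the inequality applies as long as $B'\neq(\theta)$. The empty bundle gives IR directly.

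\textbf{Step 3 (revenue).} Each type still weakly prefers its intended item at the unchanged price, so the revenue $\sum_\theta\lambda_\theta t_\theta$ is the same.

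\textbf{Main obstacle.} The key step is showing that garbling commutes with the product construction, so that $E'_{B'}$ is Blackwell-dominated by $E_B$ for arbitrary repeated bundles. This holds here because each repeated copy is an independent draw, so garbling copy by copy is legitimate. In the infinite-state or infinite-signal case, this also needs measurability of $\sigma_\theta$, which we obtain from finiteness of $A$ and the fixed tie-breaking rule.
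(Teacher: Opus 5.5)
Your proof is correct and follows the paper's argument: you garble each $E_\theta$ into action recommendations via a tie-broken optimal rule, observe that the intended value is preserved, and show that $E'_{B}$ is a garbling of $E_B$ through the coordinatewise product kernel, so every AF constraint (including IR via the empty bundle) transfers at unchanged prices. The extra merging step in your Step~1 is harmless but unnecessary: with a fixed tie-breaking order, the action chosen at each pooled posterior is still the recommended one, so distinct recommendations already induce distinct optimal actions.
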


The proof is included in Appendix~\ref{app:omitted-proofs-sec2}. This proposition implies that for each type \(\theta\), the seller only needs to design an experiment \(E_\theta\) whose signals can be identified with recommended actions. If the menu is responsive, then the buyer will always follow the recommendation of the intended experiment, and we call the buyer \textit{obedient}.

\begin{remark}[Relation to \cite{bergemann2018design,cai2020sell,babaioff2012optimal,chen2020selling,li2025sell}]
 Our model is closest to the menu-based sale of experiments in \cite{bergemann2018design,cai2020sell,li2025sell}. We assume that buyers have different utility functions, but share a common prior distribution over the state of the world, which is the same assumption as in \cite{li2025sell}. In contrast, \cite{bergemann2018design,cai2020sell} assume that all buyers have the same utility function, but may have different prior distributions over the state of the world. As \cite{li2025sell} points out, so long as the changes of measure $\textup{d}\mu_{\theta} / \textup{d}\mu$ exist, the problem with different prior distributions $\mu_{\theta}$ and utility functions $u_{\theta}$ is equivalent to the problem with a shared prior $\mu$ and utility functions $u_{\theta}(\omega, a) \textup{d}\mu_{\theta} / \textup{d}\mu$.
 
 The works \cite{babaioff2012optimal,chen2020selling} study generic interactive protocols for selling information. Such protocols are more general than experiment menus: they may sell either an experiment ex ante or a realized signal ex post. When all buyer types have the same prior, namely $\mu_\theta = \mu, \forall \theta \in \Theta$, the optimal generic interactive protocol reduces to a menu of experiments, so our model captures that case as well. In contrast, when buyers have heterogeneous priors, the optimal generic interactive protocol need not be representable as a menu of experiments. Extending AF constraints to this more general setting is an interesting direction for future work.
\end{remark}

\subsection{Arbitrage-free data pricing} \label{subsec:arbitrage}

In Section~\ref{sec:program}, we will show that the general arbitrage-free information selling problem is hard to solve directly. Thus in Section~\ref{sec:blackwell}, we will focus on a special case where the arbitrage constraints can be reduced to the following intuitive and more tractable condition.

\paragraph{Query pricing.} Let \(\mathcal D\) be a database-instance space (all possible database instances), a (deterministic) query \(Q\) is a map $Q: \mathcal D\to\mathcal A_Q$. For every database instance \(D\), the query $Q$ can be written as an experiment \(E_Q\) with signal space \(\mathcal A_Q\) and kernel
\[ K_Q(a\mid D) = \mathbf{1}\{Q(D) = a\}, \forall a \in \mathcal A_Q, D \in \mathcal D. \]

To state the arbitrage constraints in query pricing, we need to compare the informativeness of different queries. A natural way is to look at the partitions of \(\mathcal D\) induced by the queries \cite{deep2017design}.

\begin{definition}[query-induced partition]
 Define the equivalence relation
 \[ D \sim_Q D' \iff Q(D) = Q(D'), \]
 and let \(P_Q\) denote the corresponding partition of \(\mathcal D\).
\end{definition}

If \(P_{Q_1}\) refines \(P_{Q_2}\), then \(Q_1\) is more informative than \(Q_2\). This is quite intuitive. For example, consider the query that returns the entire database, which is the most informative query. The induced partition places each database instance in its own singleton cell; hence it is the finest partition. By contrast, consider a constant query that returns a fixed value on every database instance. Such a query conveys no information at all, and its induced partition consists of a single cell containing the entire database-instance space; hence it is the coarsest partition.

\cite{deep2017design} prove that a pricing function \(p\) is arbitrage-free if and only if it satisfies the following conditions for every pair of queries \(Q_1, Q_2, Q\):
\begin{align*}
 P_{Q_1} \preceq P_{Q_2} &\Longrightarrow p(Q_1) \leqslant p(Q_2), \\
 P_Q = P_{Q_1} \vee P_{Q_2} &\Longrightarrow p(Q) \leqslant p(Q_1) + p(Q_2),
\end{align*}
where \(P_{Q_1}\preceq P_{Q_2}\) means that \(P_{Q_2}\) refines \(P_{Q_1}\), and \(P_{Q_1}\vee P_{Q_2}\) is the coarsest common refinement of \(P_{Q_1}\) and \(P_{Q_2}\).

\paragraph{Model pricing.} Let \(\theta^*\in\mathbb R^d\) denote the trained model. In model pricing \cite{chen2019towards,liu2021dealer}, the seller versions this model by releasing Gaussian noisy instances of \(\theta^*\): a model product with noise level \(\delta>0\) returns
\[ \hat \theta = \theta^* + Z, \qquad Z \sim \mathcal N(0, \delta I_d / d), \]
where $I_d$ is the $d$-dimensional identity matrix. This noisy model can be viewed as an experiment \(E_\delta\) with signal space \(\mathbb R^d\) and kernel
\[ K_{E_\delta}(\hat \theta \mid \theta^*) = \frac{1}{(2\pi\delta/d)^{d/2}} \exp\left(-\frac{d}{2\delta} \|\hat \theta - \theta^*\|^2\right), \forall \hat \theta, \theta^* \in \mathbb R^d. \]

To state the arbitrage constraints in model pricing, denote the precision of noise as \(x = 1 / \delta\). If \(x_1 \geqslant x_2\), then the model with precision \(x_1\) is intuitively more informative than the model with precision \(x_2\). \cite{chen2019towards} prove that a pricing function \(p\) is arbitrage-free if and only if it satisfies the following conditions for every pair of precisions \(x_1,x_2\):
\begin{align*}
 x_1 \leqslant x_2 &\Longrightarrow p(x_1) \leqslant p(x_2), \\
 p(x_1+x_2) &\leqslant p(x_1)+p(x_2).
\end{align*}

\paragraph{Blackwell order.} As discussed above, query and model pricing are special cases of information pricing since queries and noisy models can be represented as experiments. In Section~\ref{sec:blackwell}, we will show that the arbitrage constraints in query pricing and model pricing can be unified through the Blackwell order on experiments.

\begin{definition}[Blackwell dominance]
 Experiment \(E\) Blackwell-dominates experiment \(F\), written \(E\succeqB F\), if there is a Markov kernel \(\Gamma:S_E\to\Delta(S_F)\) such that
 \[ K_F(t \mid\omega) = \sum_{s \in S_E} \Gamma(t \mid s)K_E(s \mid \omega), \forall t \in S_F, \omega \in \Omega. \]
 We write \(E \simB F\) when both \(E \succeqB F\) and \(F \succeqB E\) hold.
\end{definition}

Blackwell's theorem \cite{blackwell1953equivalent} says that \(E\succeqB F\) if and only if \(E\) gives weakly higher value than \(F\) for every prior and decision problem. In particular,
\[ E\succeqB F \Longrightarrow W_\theta(E)\geqslant W_\theta(F), \forall\theta \in \Theta. \]

The following example illustrates that if we ignore arbitrage constraints, then a menu that is optimal under standard information selling constraints may admit arbitrage behavior of buyers, thereby reducing the seller's revenue.

\begin{example} \label{ex:screening-arbitrage-gap}
Let \(\Omega = \{\omega_1, \omega_2, \omega_3, \omega_4\}\), \(A = \{a_1, a_2, a_3, a_4\}\), and let every type share the same uniform prior. Define the three type-dependent payoff matrices directly as
\[
u_A=
\begin{array}{c|cccc}
& a_1&a_2&a_3&a_4\\\midrule
\omega_1&1&0&1/10&0\\
\omega_2&0&0&0&0\\
\omega_3&1/30&0&1/3&0\\
\omega_4&0&0&0&0
\end{array},
\qquad
u_B=
\begin{array}{c|cccc}
& a_1&a_2&a_3&a_4\\\midrule
\omega_1&1&0&1/10&0\\
\omega_2&0&1/3&0&1/30\\
\omega_3&0&0&0&0\\
\omega_4&0&0&0&0
\end{array},
\]
\[
u_C=
\begin{array}{c|cccc}
& a_1&a_2&a_3&a_4\\\midrule
\omega_1&8/15&0&4/75&0\\
\omega_2&0&4/15&0&2/75\\
\omega_3&2/75&0&4/15&0\\
\omega_4&0&2/75&0&4/15
\end{array}.
\]

Let \(E_1=(\{s_1,s_2\},K_{E_1})\) and \(E_2=(\{t_1,t_2\},K_{E_2})\), where the rows of the following matrices are indexed by signals and the columns are indexed by states \(\omega_1,\omega_2,\omega_3,\omega_4\):
\[
K_{E_1}=
\begin{pmatrix}
1&1&0&0\\
0&0&1&1
\end{pmatrix},
\qquad
K_{E_2}=
\begin{pmatrix}
1&0&1&0\\
0&1&0&1
\end{pmatrix}.
\]
Thus \(E_1\) reveals whether the state lies in \(\{\omega_1,\omega_2\}\) or \(\{\omega_3,\omega_4\}\), and \(E_2\) reveals whether the state lies in \(\{\omega_1,\omega_3\}\) or \(\{\omega_2,\omega_4\}\). Let \(E_3\) be the fully informative experiment. A direct calculation of posterior optimal actions gives the following surplus table under the common prior:
\[
\begin{array}{c|ccc}
& E_1&E_2&E_3\\\midrule
A&3/40&0&3/40\\
B&0&1/12&1/12\\
C&3/50&11/150&29/150
\end{array}.
\]
Set prices \(t_1=3/40\), \(t_2=1/12\), and \(t_3=29/150\). Type \(A\) buys \(E_1\), type \(B\) buys \(E_2\), and type \(C\) buys \(E_3\), each with zero net surplus under the intended assignment. This is the optimal menu without AF constraint due to the IR constraint.

However, \(E_1\otimes E_2\) is Blackwell equivalent to \(E_3\), and
\[
t_1+t_2=\frac{3}{40}+\frac{1}{12}=\frac{19}{120}<\frac{29}{150}=t_3.
\]
Thus a buyer can synthesize the fully informative experiment more cheaply than its posted price. The violation arises under a common prior; the heterogeneity is entirely in the payoff functions.
\end{example}

\section{General Arbitrage-free Information Selling} \label{sec:program}

In this section, we study the general arbitrage-free information selling problem. We first formulate the mathematical program and give hardness results for the problem. We then give an additive FPTAS for the regime in which the number of buyer types and buyer actions are constants. Finally, we study simple approximation algorithms obtained by restricting the experiments offered for sale.

\subsection{The mathematical program and hardness results} \label{subsec:unrestricted-program}

Equation~\eqref{eq:AF} requires that, for every type, no bundle of experiments yields higher utility than the intended experiment. In general, we should consider infinitely many possible bundle deviations: repeated purchases of the same experiment may strictly increase informativeness, as the following example shows.

% , while Section~\ref{subsec:simple-approx} shows that restricting attention to special experiment classes that simplify the AF constraints may incur poor approximation guarantees

\begin{example} \label{ex:repeated-purchase-strict}
 Let \(\Omega = \{0,1\}\), let the signal space be \(S = \{0, 1\}\), and consider the binary symmetric experiment
 \[ K_E = \begin{pmatrix} 1 - \varepsilon & \varepsilon \\ 
   \varepsilon & 1 - \varepsilon
  \end{pmatrix}, \qquad 0 < \varepsilon < 1 / 2. \]
 Consider the decision problem in which the buyer wishes to guess the state, receives payoff one for a correct guess and zero otherwise, and has a uniform prior. Then for every odd \(m\), $E^{\otimes(m + 2)} \succ_B E^{\otimes m}$.
\end{example}

The omitted proofs of this section are deferred to Appendix~\ref{app:omitted-proofs-sec3}. Based on the revelation principle and the discussion above, we then formulate the mathematical program for the general arbitrage-free information selling problem, where the seller aims to maximize expected revenue subject to AF and obedience constraints.

Since all signals are conditionally independent and exchangeable, it is convenient to index bundles by count vectors rather than ordered lists. Let $n = |\Theta|$. For a count vector \(c = (c_1, \ldots, c_n) \in \mathbb Z_+^\Theta\), where $c_{\theta}$ is the number of times that experiment \(E_\theta\) is purchased, we denote
\[ E_c := \bigotimes_{\theta \in \Theta}E_{\theta}^{\otimes c_{\theta}}, \quad c \cdot t := \sum_{\theta \in \Theta}c_{\theta}t_{\theta}. \]
Let the unit vector \(e_\theta\) denote the intended purchase of product \(\theta\), and \(c = 0\) denotes the outside option.

By Proposition~\ref{prop:obedience-revelation}, we may focus on responsive direct menus. Let \(m = |A|\), denote \(A = \{a_1, \ldots, a_m\}\), and let $\pi_{\omega, a}(\theta)$ be the probability that product \(\theta\) recommends action \(a\) in state \(\omega\). Then the expected utility of type \(\theta\) from purchasing the intended product can be written as
\[ W_\theta = \sum_{a \in A}\sum_{\omega \in \Omega} \mu(\omega)\pi_{\omega, a}(\theta) u_\theta(\omega, a). \]

For a count vector \(c\), denote $z_{\theta, a}$ as the number of times that \(E_\theta\) recommends action \(a\) in the bundle \(c\). We can then define the count space
\[ \mathcal Z_\Theta(c) := \left\{z = (z_{\theta, a})_{\theta, a} \in \mathbb Z_+^{\Theta \times A}: \sum_{a \in A}z_{\theta, a} = c_{\theta}, \forall \theta \in \Theta\right\}. \]
For every \(z\in\mathcal Z_\Theta(c)\), define the probability that bundle \(c\) produces the count vector \(z\) in state \(\omega\)
\begin{equation} \label{eq:count-prob}
 P(z \mid c,\omega) = \prod_{\theta \in \Theta} \left[ \frac{c_{\theta}!}{\prod_{a \in A}z_{\theta, a}!} \prod_{a \in A}\pi_{\omega, a}(\theta)^{z_{\theta, a}} \right].
\end{equation}
The AF information selling problem can now be written as the following mathematical program:
\begin{equation} \label{eq:unrestricted-program}
 \begin{aligned}
  \max_{\pi, t, y} \quad & \sum_{\theta \in \Theta} \lambda_\theta t_\theta \\
  \text{s.t.} \quad & W_\theta - t_\theta \geqslant \sum_{z \in \mathcal Z_\Theta(c)} y_{\theta, c, z} - c \cdot t, && \forall \theta \in \Theta,\ \forall c \in \mathbb Z_+^\Theta \setminus \{e_\theta\}, \\
  & y_{\theta, c, z} \geqslant \sum_{\omega \in \Omega} \mu(\omega)P(z \mid c, \omega) u_\theta(\omega, a), && \forall \theta, c, z, \ \forall a \in A, \\
  & \sum_{\omega \in \Omega}\mu(\omega)\pi_{\omega, a}(\theta)u_\theta(\omega, a) \geqslant \sum_{\omega \in \Omega} \mu(\omega) \pi_{\omega, a}(\theta) u_\theta(\omega, a'), && \forall \theta, \ \forall a, a' \in A,\\
  &\sum_{a \in A}\pi_{\omega, a}(\theta) = 1, \quad \pi_{\omega, a}(\theta)\geqslant 0, && \forall \theta, \omega, a, \\
  & t_\theta \geqslant 0, y_{\theta, c, z} \geqslant 0, && \forall \theta, c, z
 \end{aligned}
\end{equation}
Note that $y_{\theta, c, z} = \max_{a \in A} \sum_{\omega \in \Omega} \mu(\omega)P(z \mid c, \omega) u_\theta(\omega, a)$ is the value of deviation bundle \(c\) after observing count vector \(z\). We use $y_{\theta, c, z}$ as an auxiliary variable to linearize the program. The objective is to maximize expected revenue. The first line of constraints is AF against any count vector. The third line is the obedience constraint. The last two lines are feasibility constraints. This program is infinite because \(c\) ranges over \(\mathbb Z_+^\Theta\) and because, in some data applications, \(\Omega\) itself may be continuous (the summations can be replaced by integrals in that case).

Although the most immediate difficulties in Program~\eqref{eq:unrestricted-program} come from the infinite family of AF constraints and the possibility of a continuous state space, the first part of the next theorem (and its proof), shows that even when both the state space and the AF constraints are finite, no PTAS exists. The second part shows that one can construct a succinct implicit description of an AF information selling problem for which, unless \(P = NP\), Program~\eqref{eq:unrestricted-program} admits no polynomial time constant factor approximation algorithm.

\begin{theorem} \label{thm:af-revenue-hardness}
 \begin{enumerate}[label=(\roman*)]
  \item Unless \(P = NP\), there is no polynomial time approximation scheme (PTAS) for \eqref{eq:unrestricted-program}.
  \item There exists a succinct high dimensional description of the general AF information selling problem such that, unless \(P = NP\), there is no polynomial time constant factor approximation algorithm for \eqref{eq:unrestricted-program}.
 \end{enumerate}
\end{theorem}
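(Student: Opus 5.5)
Both parts will be proved by gap-preserving reductions from known hardness results for selling information without arbitrage constraints. The idea is to build instances where the AF constraints add nothing beyond IC and IR, so the AF program \eqref{eq:unrestricted-program} coincides with the standard menu-pricing program.

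\textbf{Key device: fully revealing experiments.} Suppose every experiment in a candidate menu is either fully informative or the responsive recommendation of a fully informative experiment. Then a bundle's value is determined by the join of the partitions its components induce. The cleanest case is an instance where, at optimality, every intended experiment is a deterministic partition of $\Omega$. Then $E^{\otimes k}\simB E$ for every $k\geqslant 1$, so repeated copies add no information, and a count vector $c$ is equivalent to its support. This leaves only finitely many constraints, at most $2^n$ supports.

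To ensure no strict combinatorial joins arise, I will use instances where the optimal menu is nested (a chain of partitions). There the join of any bundle equals its finest member, and the bundle costs at least that member's price, so AF reduces to IC and IR. The first thing to verify is that this restriction loses at most a small factor of revenue. I would handle this by choosing gadgets where each type's utility depends only on a binary question, so responsive experiments can be taken deterministic without loss.

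\textbf{Part (i).} I would reduce from an APX-hard problem. The plan is to follow the reduction of \cite{cai2020sell} (or a Max-Cut/3-dimensional matching style gadget). Types correspond to elements or constraints, and states correspond to the underlying combinatorial objects. A good solution to the combinatorial problem yields a revenue $R$, and every $(1-\varepsilon_0)$-approximate menu can be rounded into a combinatorial solution of value at least $(1-c\varepsilon_0)\OPT$. The state space and the number of types are polynomial, and AF is then checked against the finitely many support bundles. The completeness direction will construct an explicit AF menu of partition experiments. The soundness direction uses that AF menus are in particular IC and IR, so any upper bound on IC-IR revenue transfers.

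\textbf{Part (ii).} Here I would use a succinct description: the state space $\Omega=\{0,1\}^d$ under the uniform prior, with utilities given by circuits. The target is a reduction from a problem with no constant-factor approximation, such as maximum independent set or label cover. Each type $\theta$ corresponds to a vertex and wants to learn one coordinate-dependent predicate. Experiments that reveal predicates of adjacent vertices are made to interfere, so that high-priced disjoint information can be sold only to an independent set. Any menu earning revenue $\rho\cdot\OPT$ then yields an independent set of proportional size.

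\textbf{Main obstacle.} The hardest step is the soundness direction in part (ii). An approximate menu may use noisy (non-partitional) experiments, which can gain informativeness under repetition (Example~\ref{ex:repeated-purchase-strict}), and I must show these cannot beat the combinatorial bound. I would attack it via Blackwell monotonicity. The revenue from type $\theta$ is at most $\lambda_\theta V_\theta(E_\theta)$, and AF against the bundle of all products caps prices. I would then argue through an LP-duality or averaging argument that the total extracted surplus is bounded by the fractional independent-set value. For that the gadget must have a small integrality gap, which is why I would use label cover with gap amplification.
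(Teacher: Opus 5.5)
\textbf{Part (i).} Your guiding principle is to choose instances on which AF adds nothing beyond IC and IR. This rules hardness out rather than in.

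For explicit finite $\Omega,\Theta,A$, the IC/IR problem over responsive menus is \eqref{eq:unrestricted-program} restricted to $c\in\{0\}\cup\{e_{\theta'}\}$. For these $c$, every $P(z\mid c,\omega)$ is either $1$ or a single coordinate $\pi_{\omega,a}(\theta')$. So the problem is a polynomial-size linear program, solvable exactly in polynomial time. There is therefore no hardness ``for selling information without arbitrage'' to import in this regime.

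Your soundness step (``AF menus are IC and IR, so IC/IR bounds transfer'') is self-defeating for the same reason. Suppose every IC/IR menu with near-optimal revenue could be rounded to a near-optimal combinatorial solution. Then you could apply the rounding to the LP-optimal IC/IR menu and solve the APX-hard problem in polynomial time. The hardness must come from bundles that genuinely combine distinct products, where $P(z\mid c,\omega)$ is multilinear in $\pi$. That is precisely what your nested-partition design removes.

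The paper does the opposite. From bounded-degree Maximum Independent Set it builds vertex and edge types on $\Omega=\{\varnothing\}\cup\{\{v\}:v\in V\}$, so that edge type $e=\{u,v\}$ can deviate to $F_u\otimes F_v$. AF against that bundle gives $p_e\leqslant Ca-a[Cx_ux_v-K(x_u+x_v)]_+$ with $x_v=p_v/a$, so pricing both endpoints of an edge high is penalized. Soundness then follows by thresholding at $x_v\geqslant\tau$ and deleting one endpoint of each edge in $G[H]$. This bound is proved for arbitrary noisy vertex experiments $F_v$, via $V_v(F_v)\leqslant a r_v$ and the contamination bound on $K_v(S_v\mid\{v\})$. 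It does not assume that partition experiments are without loss.

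\textbf{Part (ii).} You give no concrete gadget. The soundness route you sketch, bounding extracted surplus by a fractional independent-set value, cannot carry a superconstant gap, because that relaxation always has value at least $|V|/2$. Appealing to label cover with gap amplification does not supply the missing construction.

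The idea you are missing is that, in a succinct instance, hardness can come entirely from evaluating the value of information, with a single type and no arbitrage at all. The paper takes $\Omega=\{0,1\}^{2n}$ with the uniform prior, actions $\{0,1\}$, and $u((x,y),1)=\mathbf 1\{\varphi(x)=1,\ y=0^n\}=1-u((x,y),0)$. Then $q=\Pr(E_\varphi)\leqslant 2^{-n}\leqslant 1/2$. Every experiment has value at most $q$, and full information has value exactly $q$, so $\OPT=q$. This is $0$ if $\varphi$ is unsatisfiable and at least $2^{-2n}$ otherwise. A zero-versus-positive gap defeats every multiplicative approximation, so neither gap amplification nor any analysis of noisy experiments under repetition is needed.
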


The complete proof is in Appendix~\ref{app:omitted-proofs-sec3}. The main idea in part~(i) is a reduction from Maximum Independent Set on bounded degree graphs, which is known to be APX-hard \cite{alimonti2000some}. The construction turns vertices and edges into buyer types and uses deterministic subset queries so that a buyer's value depends only on whether the union of the purchased queries covers its target vertex or edge. We then derive lower and upper bounds of the optimal revenue that are close to each other. Based on the bounds, we can give a polynomial time reduction from a PTAS for Maximum Independent Set to a PTAS for the AF information selling problem, which yields the desired contradiction. Part~(ii) uses a succinctly represented high-dimensional prior and utilities, and the optimal revenue is exactly the probability of a satisfying assignment to a SAT instance, distinguishing zero from positive revenue then rules out any polynomial time constant factor approximation.

\subsection[An FPTAS for constant |Theta| and |A|]{An FPTAS for constant \(|\Theta|\) and \(|A|\)} \label{subsec:grid-price}

The hardness results established in the previous subsection indicate that it is difficult to obtain an algorithm that is both computationally efficient and close to optimal. We therefore consider, in this subsection, a special but common setting in real data markets, where the number of buyer types $n = |\Theta|$ and the number of buyer actions $m = |A|$ are both small. In fact, the buyer's action space is small in many applications, e.g., a binary decision \cite{liu2021optimal}. A lender either approves or rejects a loan application; a retailer either orders a new batch or does not; an advertiser either shows an advertisement to a potential consumer or withholds it. The buyer's type space is often small as well: buyers of facial data, for instance, may naturally be grouped into researchers, governmental agencies
and companies \cite{chen2022selling}; alternatively, buyer's profitability or risk tolerance may be discretized into a few levels.

In this regime, we treat $n$ and $m$ as constants, while allowing the state space $\Omega$ to be very large or even continuous, which is practical in query and model pricing, and obtain an FPTAS.

\begin{theorem} \label{thm:constant-additive-fptas}
 Fix constants \(n = |\Theta|\) and \(m = |A|\). Given accuracy \(\varepsilon \in (0, 1]\), there exists an additive FPTAS that computes a direct AF menu that has revenue at least \(\OPT - \varepsilon\) in \((1/\varepsilon)^{O(nm)}\) time, where \(\OPT\) is the optimal revenue of Program~\eqref{eq:unrestricted-program}.
\end{theorem}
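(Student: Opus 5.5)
The plan follows the two sources of infinitude named in the contributions: discretize the state space into finitely many cells, and truncate the bundle space by forcing every offered nonempty product to have price at least a threshold $\beta$. The result should be a finite program whose size is $(1/\varepsilon)^{O(nm)}$.

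\textbf{Step 1: cells.} Fix $\delta = \Theta(\varepsilon)$. Map each state $\omega$ to its utility profile $(u_\theta(\omega,a))_{\theta,a}\in[0,1]^{n\times m}$. Round each coordinate to the grid $\delta\mathbb Z$, and let the cells $C$ be the level sets of the rounded profile. There are at most $(1/\delta)^{nm}$ cells. Call a menu \emph{cellwise} if $\pi_{\omega,a}(\theta)$ depends on $\omega$ only through its cell.

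For a cellwise menu, every composite experiment $E_c$ has a kernel that depends only on the cell. So for every bundle $c$ its value $W_\theta(E_c)$ equals the value computed in a finite instance, with states the cells, prior $\mu(C)$, and utilities $\bar u_\theta(C,a)=\mathbb E_\mu[u_\theta(\omega,a)\mid C]$. The same holds for $U^0_\theta$ and for the obedience constraints. Hence a cellwise menu is exactly AF in the true instance iff it is AF in the cell instance, and no error is introduced when we \emph{evaluate} cellwise menus.

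The loss appears only when we \emph{restrict} to cellwise menus. We must show that the optimal AF menu $(\pi^*,t^*)$ can be replaced by a cellwise AF menu with revenue at least $\OPT-O(\varepsilon)$. The route is to compare both menus with the instance whose utility $\tilde u$ is the rounded profile. Since $|u-\tilde u|\le\delta$ pointwise, every bundle value $V_\theta(E_c)$ changes by at most $2\delta$ when $u$ is replaced by $\tilde u$, uniformly over $c$. Under $\tilde u$ the state matters only through its cell. In the $\tilde u$-instance, the optimal program over cellwise menus should therefore be no worse than the program over arbitrary menus. To see this, apply the revelation principle (Proposition~\ref{prop:bundle-revelation} and Proposition~\ref{prop:obedience-revelation}) to the cell-level problem, viewing a general menu as a menu over a richer signal space that the cell instance can also induce.

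This yields a cellwise menu that is $O(\delta)$-AF for the true instance, with revenue at least $\OPT-O(\delta)$. It remains to convert $O(\delta)$-AF into exact AF, which I expect to be \textbf{the main obstacle}. Lowering prices makes deviations cheaper too, so the usual IC nudging does not transfer directly to bundles. What I would try first is to run the optimization on the cell instance with a slack-tightened constraint, requiring the intended surplus to exceed every deviation by $4\delta$. I would then argue the optimum of the tightened program loses only $O(\sqrt\delta)$. The idea is to mix each $\pi_\theta$ slightly toward full information within its cell and scale prices by $(1-\sqrt\delta)$, so that the intended surplus grows by order $\sqrt\delta\, t_\theta$ while deviations gain less. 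Finally I would rescale $\delta$ to get an additive $\varepsilon$ loss.

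\textbf{Step 2: truncation.} Guess the set $S\subseteq\Theta$ of types that receive a nonempty product; there are $2^n$ choices. For types outside $S$, set the product to be uninformative with price $0$. For $\theta\in S$, impose $t_\theta\ge\beta$ with $\beta=\varepsilon/n$. Any optimal menu can be modified to satisfy this by making cheap products uninformative and free, at revenue loss at most $n\beta$. The modified menu remains AF because each replaced product is a garbling of the original at a price change of at most $\beta$; absorb this into the slack of Step 1.

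Since all values lie in $[0,1]$, a bundle with $c\cdot t>1$ is never a profitable deviation. So only bundles with $|c|_1\le 1/\beta$ matter, giving $(n/\varepsilon)^{n}$ count vectors and, for each, $(1/\varepsilon)^{O(nm)}$ outcome vectors $z$.

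\textbf{Step 3: solving.} Enumerate prices on a grid of width $\varepsilon/n$, which has $(n/\varepsilon)^n$ points. For fixed $t$, checking AF for the finitely many remaining $c$ is a separation problem over the cellwise recommendation profiles $\pi\in\Delta(A)^{\Theta\times\text{cells}}$. A cell-level analogue of Program~\eqref{eq:unrestricted-program} with bundle values $y_{\theta,c,z}$, restricted to these finitely many $c$ and $z$, can then be solved by the ellipsoid method in time polynomial in its size $(1/\varepsilon)^{O(nm)}$. Should convexity in $\pi$ of the bundle-value side fail, I would instead enumerate $\pi$ on a grid as well. The number of cells, and hence of $\pi$ coordinates, is only $(1/\varepsilon)^{O(nm)}$, but a full grid over $\pi$ would take time exponential in the number of cells. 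So this fallback would require a further reduction, such as merging cells with equal recommendation patterns, to keep the total running time within $(1/\varepsilon)^{O(nm)}$.
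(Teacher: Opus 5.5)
\textbf{Main gap: Step~1.} The step that fails is your claim that, in the $\tilde u$-instance, restricting to cellwise menus loses nothing.

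Purchases are independent conditional on $\omega$, not conditional on the cell. Within-cell variation of a general kernel therefore acts as a randomization device shared by every copy in a bundle. Replacing $K_\theta(\cdot\mid\omega)$ by its cell average preserves every single-product value. But it changes the law of a bundle given the cell from $\mathbb E[\prod_r K_{i_r}(\cdot\mid\omega)\mid C]$ to $\prod_r \mathbb E[K_{i_r}(\cdot\mid\omega)\mid C]$, and this can make deviations strictly more valuable.

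For example, let $\omega=(b,\xi)$ with payoffs depending only on the bit $b$. The product $s=b\oplus\mathbf 1\{\xi<q\}$ is deterministic given $\omega$, so extra copies are worthless. Its cell average is the binary symmetric channel of Example~\ref{ex:repeated-purchase-strict}, whose repeated copies become strictly more informative and converge to full information. So the averaged menu need not be AF. Propositions~\ref{prop:bundle-revelation} and~\ref{prop:obedience-revelation} cannot repair this, because they act within a fixed instance and never change the conditional-independence structure.

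\textbf{The missing idea.} The paper keeps the within-cell mixture as a variable. For each cell it uses the averaged monomial vector $r_g=\mathbb E[\phi_H(x(\omega))\mid \omega\in g]\in\operatorname{conv}\{\phi_H(x): x\in\prod_\theta\Delta(A)\}$, where $H$ bounds the size of relevant bundles. Every bundle with $\|c\|_1\le H$ has its value determined by these moments, up to the cell-width error. Every point of the hull can be implemented via Carath\'eodory, by splitting cells, or on the extension $\Omega^+$ when $\Omega$ is discrete. Your cellwise menus correspond only to the extreme points $\phi_H(x)$.

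The same lifting removes the obstacle you flag in Step~3. In $\pi$ the constraints are polynomial and nonconvex, and your grid fallback is exponential in the number of cells. In $r$ the intended values are linear and the deviation values are sums of maxima of linear functions, so the program is convex. Separation over the hull reduces to a degree-$H$ polynomial optimization in only $nm$ variables, at cost $H^{O(nm)}$.

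\textbf{Exact-AF repair.} Your repair is also not established, for two reasons:
\begin{itemize}
\item Mixing $E_\theta$ toward full information raises the value, for the other types, of every bundle containing it.
\item Scaling prices by $1-\eta$ raises a deviation's surplus by $\eta\, c\cdot t$. This exceeds the intended gain $\eta t_\theta$ whenever $c\cdot t>t_\theta$, so ``deviations gain less'' fails exactly for expensive deviations.
\end{itemize}
The paper does not try to keep the original assignment AF. After scaling, each type re-chooses its best bundle, and Proposition~\ref{prop:bundle-revelation} makes the result exactly AF by construction. Comparing the re-chosen bundle $B$ with the $\alpha$-AF inequality gives $t_B\ge t_\theta-\alpha/\eta$. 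Hence the revenue is at least $(1-\eta)(R-\alpha/\eta)$, where $R$ is the approximate menu's revenue. This costs $O(\varepsilon)$ with $\eta=\Theta(\varepsilon)$ and $\alpha=\Theta(\varepsilon^2)$, i.e.\ cells of width $\Theta(\varepsilon^2)$, of which there are still $(1/\varepsilon)^{O(nm)}$.

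\textbf{Step~2 needs the same re-choice.} Making a cheap product uninformative and free is harmless for the other types. But the type that held it loses its whole surplus $V_\theta(E_\theta)-t_\theta$, so its AF constraint can fail by far more than your threshold $\beta$. That type must be allowed to re-optimize over the retained products, with its revenue counted as zero, rather than having the change absorbed into slack.
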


The key is to eliminate the dependence on the large (even continuous) state space \(\Omega\), and to replace the infinitely many AF constraints by finitely many constraints. This addresses the two main obstacles in Program~\eqref{eq:unrestricted-program}. The construction has four steps. First, we partition \(\Omega\) into finitely many cells by grouping states with similar utility profiles. Second, we reduce the infinite family of AF constraints to a finite family by removing low price products. Third, we replace the state-dependent variables \(\pi_{\omega, a}(\theta)\) by cellwise recommendation profiles; combining the first three steps yields a finite dimensional convex program. Finally, we convert the menu computed by the convex program into an AF direct menu by removing low price products, scaling prices slightly, and applying revelation principle.

\paragraph{Step I: Partition the state space.} For each state define its utility profile
\[ u(\omega) = (u_\theta(\omega, a))_{\theta \in \Theta, a \in A} \in [0, 1]^{nm}. \]
For $\beta > 0$, partition each coordinate interval \([0, 1]\) into \(\lceil 1 / \beta\rceil\) subintervals of length at most \(\beta\), thereby partitioning \([0, 1]^{nm}\) into at most \(\lceil 1 / \beta\rceil^{nm}\) cells of \(\ell_\infty\)-diameter at most \(\beta\), thus yielding a finite measurable partition \(\mathcal G_\beta\) of \(\Omega\) such that
\begin{equation} \label{eq:rank-cell-diameter}
 \|u(\omega) - u(\omega')\|_\infty \leqslant \beta, \quad \forall \omega, \omega' \in g \in \mathcal G_\beta,
\end{equation}
where $g$ is a cell of the partition. Let $N_\beta = |\mathcal G_\beta|$. By construction,
\begin{equation} \label{eq:rank-cover-upper}
 N_\beta \leqslant \left\lceil \frac{1}{\beta} \right\rceil^{nm} \leqslant \left(1 + \frac{1}{\beta}\right)^{nm}.
\end{equation}
For each cell \(g\), type \(\theta\), and action \(a\), define
\[ M_{g, \theta, a} = \sum_{\omega \in g}\mu(\omega)u_\theta(\omega, a). \]
Thus \(M_{g, \theta, a}\) is the contribution of cell \(g\) to type \(\theta\)'s expected payoff from action \(a\). 

\paragraph{Step II: Reduce the infinite AF constraints to a finite family.} The unrestricted AF constraints range over all \(c \in \mathbb Z_+^\Theta\). To reduce this infinite family to a finite family, we choose a price threshold \(\delta\) and remove all products priced below \(\delta\). Then any bundle with more than $H_\delta = \lceil1/\delta\rceil$ items costs more than one, while the value of any information is at most one. Such a bundle cannot be profitable. Therefore, it is enough to check
\[ \mathcal C_\delta(\Theta) = \{c \in \mathbb Z_+^\Theta: \|c\|_1 \leqslant H_\delta\}. \]
Let $\mathcal Z_{H_\delta} = \left\{z = (z_{\theta, a})_{\theta, a} \in \mathbb Z_+^{\Theta \times A}: \sum_{\theta, a}z_{\theta, a} \leqslant H_\delta \right\}$. Let \(B_\delta = \sum_{c\in\mathcal C_\delta(\Theta)}|\mathcal Z_\Theta(c)|\). Then
\begin{equation} \label{eq:feature-dimension}
 B_\delta = \left|\mathcal Z_{H_\delta}\right| = \binom{nm + H_\delta}{H_\delta}.
\end{equation}
The first equality identifies \(B_\delta\) with the size of \(\mathcal Z_{H_\delta}\): specifying a pair \((c, z)\) with \(c \in \mathcal C_\delta(\Theta)\) and \(z \in \mathcal Z_\Theta(c)\) is equivalent to specifying a nonnegative integer vector \(z \in \mathbb Z_+^{\Theta \times A}\) of total mass at most \(H_\delta\), and this is exactly an element of \(\mathcal Z_{H_\delta}\). The second equality is the standard stars and bars count for nonnegative integer vectors in \(nm\) coordinates whose summation is at most \(H_\delta\).

\paragraph{Step III: Cellwise recommendation profiles and finite dimensional convex program.} After Steps I and II, the remaining dependence on $\Omega$ comes from \(\pi_{\omega, a}(\theta)\). Rewrite $x_{\theta, a}(\omega) = \pi_{\omega, a}(\theta)$, and let $x(\omega) = (x_{\theta, a}(\omega))_{\theta, a} \in \mathcal X: =\prod_{\theta \in \Theta} \Delta(A)$. Hence \(x(\omega)\) is the recommendation profile used at state \(\omega\). Besides, \eqref{eq:count-prob} can be rewritten as
\[ P_x(z \mid c) = \prod_{\theta \in \Theta}\left[\frac{c_\theta!}{\prod_{a \in A}z_{\theta, a}!}\prod_{a \in A}x_{\theta, a}^{z_{\theta, a}}\right]. \] 

Using the index set \(\mathcal Z_{H_\delta}\) from Step II, for \(x \in \mathcal X\), write \(x^z = \prod_{\theta, a}x_{\theta, a}^{z_{\theta, a}}\). Define $\phi_{H_\delta}(x) = (x^z)_{z \in \mathcal Z_{H_\delta}} \in \mathbb R^{B_\delta}$, and define
\begin{equation} \label{eq:coord-body}
 \mathcal R_{H_\delta}(\mathcal X) = \operatorname{conv}\{\phi_{H_\delta}(x): x \in \mathcal X\} \subseteq \mathbb R^{B_\delta}.
\end{equation}
Hence, for each cell \(g\), the vector \(r_g \in \mathcal R_{H_\delta}(\mathcal X)\) is a convex combination of \(\phi_{H_\delta}(x)\) for \(x \in \mathcal X\). In other words, \(r_g\) can be seen as an expected value of \(\phi_{H_\delta}(x(\omega))\) for \(\omega \in g\), and thus it can be interpreted as a recommendation profile for cell \(g\). Note that $r_g$ is independent of the states, thus we call it a cellwise recommendation profile. The convex hull \(\mathcal R_{H_\delta}(\mathcal X)\) is the set of all possible cellwise recommendation profiles.

Denote $\gamma(c, z) = \prod_{\theta \in \Theta}\frac{c_{\theta}!}{\prod_{a \in A} z_{\theta, a}!}$, so that \(P_x(z \mid c) = \gamma(c, z) x^z\). Given \(r = (r_g)_{g \in \mathcal G_\beta}\), define
\begin{equation} \label{eq:compressed-intended}
 \widehat W_\theta(r) = \sum_{g \in \mathcal G_\beta}\sum_{a \in A}M_{g, \theta, a}\,r_{g, \mathbf e_{\theta, a}},
\end{equation}
where \(\mathbf e_{\theta,a}\) is the vector with a one in the \((\theta, a)\) coordinate and zeros elsewhere. Equation~\eqref{eq:compressed-intended} is the state-free utility obtained by following the recommendation of the intended product \(\theta\). For a deviation bundle \(c \in \mathcal C_\delta(\Theta) \setminus \{e_\theta\}\), define
\begin{equation} \label{eq:compressed-deviation-action}
 \widehat V_{\theta, c, z, a}(r) = \sum_{g \in \mathcal G_\beta}M_{g, \theta, a}\, \gamma(c, z)\, r_{g, z}
\end{equation}
and
\begin{equation} \label{eq:compressed-deviation}
 \widehat V_{\theta, c}(r) = \sum_{z \in \mathcal Z_\Theta(c)}\max_{a \in A}\widehat V_{\theta, c, z, a}(r).
\end{equation}
Then \(\widehat V_{\theta, c, z, a}(r)\) is the state-free utility from first observing count vector \(z\) after buying \(c\), and then choosing the action \(a\). Equation~\eqref{eq:compressed-deviation} then takes the maximum over the action for each realized count vector \(z\) and sums over all possible count vectors \(z\), so it is the state-free utility of deviation bundle \(c\). Using the quantities above, we obtain the finite dimensional convex program
\begin{equation} \label{eq:coord-price}
 \begin{aligned}
  \operatorname{FinPrice}^{\sigma}_{\delta, \beta}: \qquad
  \max_{r, t} \quad &\sum_{\theta \in \Theta} \lambda_\theta t_\theta \\
  \text{s.t.} \quad &r_g \in \mathcal R_{H_\delta}(\mathcal X), && \forall g \in \mathcal G_\beta,\\
  & t_\theta \geqslant 0, && \forall \theta \in \Theta, \\
  & \widehat W_\theta(r) - t_\theta \geqslant \widehat V_{\theta, c}(r) - c \cdot t - \sigma, && \forall \theta \in \Theta, \ \forall c \in \mathcal C_\delta(\Theta) \setminus \{e_\theta\}.
 \end{aligned}
\end{equation}
The first line of constraints is the cellwise recommendation profile constraint. The last line is the state-free AF and obedience constraint: the intended utility of type \(\theta\) must dominate the value of every deviation bundle \(c \in \mathcal C_\delta(\Theta)\), up to slack \(\sigma\). The feasible region is convex because \(\mathcal R_{H_\delta}(\mathcal X)\) is convex by definition, \(\widehat W_\theta\) and \(\widehat V_{\theta, c, z, a}\) are linear in \(r\), and \(\widehat V_{\theta, c}\) is a sum of pointwise maxima of linear functions. The program is also finite dimensional: the variable \(r\) has \(N_\beta B_\delta\) coordinate entries, and \(t\) has \(n\) prices.

The next lemma shows the correspondence between program~\eqref{eq:coord-price} and the original AF information selling program~\eqref{eq:unrestricted-program}. 

\begin{lemma} \label{lem:coord-compression}
 Let \(\Gamma_\beta = 2\beta\).
 \begin{enumerate}[label=(\roman*)]
  \item Every exact AF direct menu induces \(r_g \in \mathcal R_{H_\delta}(\mathcal X)\) such that the same prices satisfy all constraints of \(\operatorname{FinPrice}^{\Gamma_\beta}_{\delta, \beta}\).
  \item Every feasible \((r, t)\) of \(\operatorname{FinPrice}^{\alpha}_{\delta, \beta}\) admits a finite support menu implementation that is \((\alpha + \Gamma_\beta)\)-AF against \(\mathcal C_\delta(\Theta)\). If \(\Omega\) is discrete, the menu is defined on an extension \(\Omega^+\) of \(\Omega\).
 \end{enumerate}
\end{lemma}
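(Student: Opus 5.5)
For part (i), take an exact AF direct menu, which by Proposition~\ref{prop:obedience-revelation} we may assume responsive. Its recommendation probabilities give $x(\omega)=(\pi_{\omega,a}(\theta))_{\theta,a}\in\mathcal X$. For each cell $g$ with $\mu(g)>0$ I would set
\[ r_g=\frac{1}{\mu(g)}\sum_{\omega\in g}\mu(\omega)\,\phi_{H_\delta}(x(\omega)), \]
a convex combination, so $r_g\in\mathcal R_{H_\delta}(\mathcal X)$. Cells of measure zero get an arbitrary point. The key comparison is between the exact quantities and the hatted ones. Fix a reference state $\omega_g\in g$ and write $\bar u_{\theta,a}(g)=u_\theta(\omega_g,a)$. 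By \eqref{eq:rank-cell-diameter}, $|M_{g,\theta,a}-\mu(g)\bar u_{\theta,a}(g)|\le\beta\mu(g)$. In fact $\widehat W_\theta(r)$ should equal $W_\theta$ exactly when $r_g$ is the weighted average and $M$ is computed with the true utilities, though this identity needs checking. If so, $\widehat W_\theta$ is not where the error arises. The error arises in $\widehat V_{\theta,c}$, because $\widehat V_{\theta,c,z,a}$ uses $M_{g,\theta,a}\,r_{g,z}$, which factorizes $\mu(\omega)u_\theta(\omega,a)x(\omega)^z$ incorrectly within the cell.

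The bound I would prove is $|\widehat V_{\theta,c,z,a}-\sum_\omega\mu(\omega)P(z\mid c,\omega)u_\theta(\omega,a)|\le\beta\gamma(c,z)\sum_{g}\mu(g)r_{g,z}$. Both terms lie within $\beta\gamma r_{g,z}\mu(g)$ of $\mu(g)\bar u\gamma r_{g,z}$. Summing over $z$ after taking the max over $a$ (the max is $1$-Lipschitz) and using $\sum_{z\in\mathcal Z_\Theta(c)}\gamma(c,z)x^z=1$ gives total error at most $2\beta=\Gamma_\beta$. The same averaging bounds any discrepancy in $\widehat W_\theta$ by $2\beta$. Exact AF for the bundles $c\in\mathcal C_\delta(\Theta)$ then yields the $\Gamma_\beta$-relaxed constraints with the same prices.

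For part (ii), each $r_g$ lies in a convex hull in $\mathbb R^{B_\delta}$. By Carathéodory I would write $r_g=\sum_{k}w_{g,k}\phi_{H_\delta}(x^{g,k})$ with finitely many points. To implement this, I would split cell $g$ into sub-states: each $\omega\in g$ is replaced by copies $(\omega,k)$ with mass $\mu(\omega)w_{g,k}$, and at copy $(\omega,k)$ the profile $x^{g,k}$ is used. This is the extension $\Omega^+$. Equivalently, with a continuous $\Omega$, we can partition $g$ into measurable pieces of relative mass $w_{g,k}$. The payoff-relevant state is still $\omega$, so buyers' values are computed on $\Omega^+$ with utilities $u_\theta(\omega,a)$. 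The menu's true averages then reproduce $r_g$, and running the comparison of part (i) in reverse shows that true intended utilities and bundle values differ from the hatted ones by at most $\Gamma_\beta$ in total. That gives $(\alpha+\Gamma_\beta)$-AF against $\mathcal C_\delta(\Theta)$.

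The main obstacle is the accounting of the error constant. Spending $\beta$ on the intended side and $\beta$ on the deviation side must be done carefully so that the total is exactly $2\beta$ and not $3\beta$ or $4\beta$. Obedience also has to be handled, since FinPrice has no explicit obedience line. I would argue that obedience is implied by AF against the single bundle $e_\theta$ with a re-mapped action, or else is simply not required because the buyer's best response is already included in the value.
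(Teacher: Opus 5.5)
Your route is the paper's. You average the responsive menu's monomials $x(\omega)^z$ over each cell to get $r_g$ and compare hatted and true quantities cell by cell; your $1$-Lipschitz argument for $\max_a$ together with $\sum_z\gamma(c,z)x^z=1$ is equivalent to the paper's comparison of action rules. For (ii) you use Carath\'eodory and either split cells or pass to $\Omega^+$.

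The gap is the constant, which is the step you flagged but did not resolve. The hoped-for identity $\widehat W_\theta(r)=W_\theta$ is false. On a cell, $\widehat W_\theta$ contributes $\mu(g)^{-1}\bigl(\sum_{\omega\in g}\mu(\omega)u_\theta(\omega,a)\bigr)\bigl(\sum_{\omega\in g}\mu(\omega)x_{\theta,a}(\omega)\bigr)$, a product of cell averages. The true value contributes $\sum_{\omega\in g}\mu(\omega)x_{\theta,a}(\omega)u_\theta(\omega,a)$, and the two differ by a within-cell covariance. So the intended side also incurs error; the factorization problem is not special to deviations, since $\widehat W_\theta$ is just the $z=\mathbf e_{\theta,a}$ case.

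With your centering at a reference state $u_\theta(\omega_g,a)$, each side costs $2\beta$: one $\beta$ for $M_{g,\theta,a}$ versus $\mu(g)u_\theta(\omega_g,a)$, and one for the true term. The chain then reads $\widehat W_\theta-t_\theta\ge V_\theta(E_\theta)-2\beta-t_\theta\ge V_\theta(E_c)-c\cdot t-2\beta\ge\widehat V_{\theta,c}-c\cdot t-4\beta$. That is feasibility for $\operatorname{FinPrice}^{4\beta}$, not $\operatorname{FinPrice}^{\Gamma_\beta}$. Running the comparison ``in reverse'' for (ii) likewise gives $\alpha+4\beta$. This is not cosmetic, because Theorem~\ref{thm:constant-additive-fptas} sets $\sigma=\Gamma_\beta$ exactly.

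The missing idea is to center at the cell mean $\bar u_{g,\theta,a}:=M_{g,\theta,a}/\mu(g)$ rather than at a reference state. Averaging \eqref{eq:rank-cell-diameter} over the cell still gives $|u_\theta(\omega,a)-\bar u_{g,\theta,a}|\le\beta$ on $g$. This choice also makes the hatted term an exact identity:
$M_{g,\theta,a}\gamma(c,z)r_{g,z}=\bar u_{g,\theta,a}\gamma(c,z)\sum_{\omega\in g}\mu(\omega)x(\omega)^z$, and similarly with $x_{\theta,a}$ in place of $x^z$ for $\widehat W_\theta$. Hence, for any $f:g\to[0,1]$, the only error is
$\bigl|\sum_{\omega\in g}\mu(\omega)f(\omega)(u_\theta(\omega,a)-\bar u_{g,\theta,a})\bigr|\le\beta\sum_{\omega\in g}\mu(\omega)f(\omega)$.
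That costs $\beta$ per side and $2\beta$ in total, in both parts.

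On obedience in (ii), your first suggestion does not work: $e_\theta$ is excluded from the constraint set of $\operatorname{FinPrice}$, and nothing there forces obedience. Your second suggestion is the right one, and it should be written as the one-sided bound $V_\theta(E_\theta)\ge W_\theta^{\rm rec}\ge\widehat W_\theta(r)-\beta$, which holds because the buyer may ignore the recommendation. Together with $V_\theta(E_c)\le\widehat V_{\theta,c}(r)+\beta$, that is all (ii) needs.
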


By Carath\'eodory's theorem \cite{rockafellar1997convex}, every feasible \(r_g \in \mathcal R_{H_\delta}(\mathcal X) \subseteq \mathbb R^{B_\delta}\) has a decomposition with \(K_g \leqslant B_\delta + 1\) components, i.e., there exist \(x_{g, \ell} \in \mathcal X\) and \(\alpha_{g, \ell} \geqslant 0\) for \(\ell = 1, \ldots, K_g\) such that
\[ r_g = \sum_{\ell = 1}^{K_g} \alpha_{g, \ell} \phi_{H_\delta}(x_{g, \ell}), \quad \sum_{\ell = 1}^{K_g} \alpha_{g, \ell} = 1. \]

This decomposition determines the finite support menu. If the state space is continuous, then for every cell \(g\) we choose a measurable partition \(g = \bigsqcup_{\ell = 1}^{K_g} g_\ell\) with \(\mu(g_\ell) = \alpha_{g, \ell}\mu(g)\), and use recommendation profile \(x_{g, \ell}\) on \(g_\ell\). Thus product \(\theta\) is the experiment \(E_\theta = (A, K_\theta)\) with \(K_\theta(a \mid \omega) = x_{g, \ell, \theta, a}\) whenever \(\omega \in g_\ell\).

If \(\Omega\) is discrete, the cell cannot in general be split with arbitrary weights. We then implement the same decomposition on an extension \(\Omega^+\) of \(\Omega\). For every \(\omega \in g\), create substates \((\omega, \ell)\), \(\ell = 1, \ldots, K_g\), and define $\Omega^+ = \{(\omega, \ell): \omega \in g \in \mathcal G_\beta,\ \ell = 1, \ldots, K_g\}$, $\mu^+(\omega, \ell) = \mu(\omega) \alpha_{g, \ell}$, with payoff \(u^+_\theta((\omega, \ell), a) = u_\theta(\omega, a)\). The buyer faces the menu \(\mathcal M^+ = \{(E^+_\theta, t_\theta)\}_{\theta \in \Theta}\) on \(\Omega^+\), where \(E^+_\theta = (A, K^+_\theta)\) and $K^+_\theta(a \mid \omega, \ell) = x_{g, \ell, \theta, a}$, $\forall \omega \in g$. Note that the index \(\ell\) is part of the realized extended state and is therefore fixed across all purchases; repeated purchases are independent conditional on \((\omega, \ell)\). Consequently, for every count vector \(c\), every \(z \in \mathcal Z_\Theta(c)\), and every action \(a\), the utility contribution of cell \(g\) is
\[
 \sum_{\omega \in g}\sum_{\ell = 1}^{K_g}\mu^+(\omega, \ell)\gamma(c, z)x_{g, \ell}^{z}u^+_\theta((\omega, \ell), a)
 = M_{g, \theta, a}\gamma(c, z)r_{g, z}.
\]
Thus the decomposition gives the menu with the same utility as in~\eqref{eq:compressed-deviation-action}.

\begin{remark}
 Although the quantities \(M_{g, \theta, a}\) are dependent on \(\Omega\), they can be computed efficiently by numerical methods to arbitrarily high precision. Hence, the precision of the numerical evaluation of \(M\) can be made negligible relative to the final \(\varepsilon\)-guarantee. In particular, if we compute \(M\) to accuracy \(\kappa / (2N_\beta)\), then every AF expression in~\eqref{eq:coord-price} changes by at most \(\kappa\). We will therefore treat the numerical evaluation of \(M\) as exact for simplicity.
\end{remark}

We now describe the separation oracle used to solve~\eqref{eq:coord-price}. First, consider an AF constraint for a fixed pair \((\theta, c)\). Given \((r, t)\), compute \(a_z^* \in \arg\max_a \widehat V_{\theta, c, z, a}(r)\) for every count vector \(z\). If $\widehat W_\theta(r) - t_\theta < \sum_{z \in \mathcal Z_\Theta(c)} \widehat V_{\theta, c, z, a_z^*}(r) - c \cdot t - \sigma$, then the linear inequality
\begin{equation} \label{eq:benders-cut}
 \widehat W_\theta(r) - t_\theta \geqslant \sum_{z \in \mathcal Z_\Theta(c)} \widehat V_{\theta, c, z, a_z^*}(r) - c \cdot t - \sigma
\end{equation}
is valid for every feasible solution and is violated by the current point. 

Second, we need separation for \(\mathcal R_{H_\delta}(\mathcal X)\). Separating a candidate point \(\bar r\) from \(\mathcal R_{H_\delta}(\mathcal X)\) reduces to evaluating the support function of \(\mathcal R_{H_\delta}(\mathcal X)\): for any direction \(q \in \mathbb R^{B_\delta}\), we compare \(q \cdot \bar r\) with \(\max_{r \in \mathcal R_{H_\delta}(\mathcal X)} q \cdot r\). Because \(\mathcal R_{H_\delta}(\mathcal X)\) is the convex hull of the vectors \(\phi_{H_\delta}(x)\), linear optimization over this body is attained at an extreme point, hence
\begin{equation} \label{eq:coord-body-optimization}
 \max_{r \in \mathcal R_{H_\delta}(\mathcal X)} q \cdot r = \max_{x \in \mathcal X} q \cdot \phi_{H_\delta}(x) = \max_{x \in \mathcal X}\sum_{z \in \mathcal Z_{H_\delta}} q_z x^z .
\end{equation}
Equation~\eqref{eq:coord-body-optimization} is a degree-\(H_\delta\) polynomial optimization problem over the product of simplexes \(\mathcal X\), and can be handled by standard methods from real algebraic geometry \cite{basu2006algorithms}. Combined with the standard equivalence between weak optimization and weak separation \cite{grotschel1988geometric}, this yields the required separation oracle for \(\mathcal R_{H_\delta}(\mathcal X)\).

\begin{proposition} \label{prop:coord-price}
 Fix \(\delta, \beta, \sigma, \tau > 0\). If \(\sigma \geqslant \Gamma_\beta\), then one can compute a finite support menu with revenue at least \(\OPT - \tau\) that is \((\sigma + \tau + \Gamma_\beta)\)-AF against \(\mathcal C_\delta(\Theta)\).
 And the time complexity is $\operatorname{poly}\left(N_\beta B_\delta, \log\frac1\tau\right) \cdot \left(H_\delta\right)^{O(nm)}$.
\end{proposition}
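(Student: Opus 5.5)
The plan is to solve $\operatorname{FinPrice}^{\sigma}_{\delta,\beta}$ approximately with the ellipsoid method, lower-bound its optimum by $\OPT$ using Lemma~\ref{lem:coord-compression}(i), and then convert the returned point into a menu using Lemma~\ref{lem:coord-compression}(ii).

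\emph{Step 1: the program's optimum is at least $\OPT$.} Take an optimal (or near-optimal) exact AF direct menu of Program~\eqref{eq:unrestricted-program}. Part~(i) of Lemma~\ref{lem:coord-compression} says it induces profiles $r_g\in\mathcal R_{H_\delta}(\mathcal X)$ that, with the same prices, satisfy $\operatorname{FinPrice}^{\Gamma_\beta}_{\delta,\beta}$. Since $\sigma\geqslant\Gamma_\beta$, enlarging the slack only relaxes the constraints. Hence the optimum of $\operatorname{FinPrice}^{\sigma}_{\delta,\beta}$ is at least $\OPT$; if $\OPT$ is only a supremum, we lose an arbitrarily small amount, absorbed into $\tau$.

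\emph{Step 2: approximate solution by the ellipsoid method.} The feasible region is convex and sits inside a bounded box. Each coordinate $r_{g,z}$ lies in $[0,1]$, being an average of monomials of simplex coordinates. Each price satisfies $t_\theta\in[0,1+\sigma]$, because the constraint with $c=0$ gives $t_\theta\leqslant \widehat W_\theta(r)-\widehat V_{\theta,0}(r)+\sigma\leqslant 1+\sigma$. We supply the two separation oracles described above:
\begin{itemize}
\item the cut~\eqref{eq:benders-cut} for each of the at most $n|\mathcal C_\delta(\Theta)|\leqslant nB_\delta$ AF constraints, each evaluated in $\operatorname{poly}(N_\beta B_\delta)$ time;
\item weak separation for $\mathcal R_{H_\delta}(\mathcal X)$ via the polynomial optimization~\eqref{eq:coord-body-optimization}. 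This has $nm$ variables and degree $H_\delta$, so the Basu--Pollack--Roy bounds give time $\operatorname{poly}(B_\delta,\log(1/\tau))\cdot H_\delta^{O(nm)}$ per call.
\end{itemize}
By the Grötschel--Lovász--Schrijver theory, weak optimization then returns $(r,t)$ with objective at least $\OPT-\tau$. The point satisfies every constraint up to additive error $\tau$: it is $\tau$-close to the region, and $\widehat W,\widehat V$ are Lipschitz with constant $O(1)$ in $\ell_\infty$, after rescaling $\tau$ by a polynomial factor. So $(r,t)$ is feasible for $\operatorname{FinPrice}^{\sigma+\tau}_{\delta,\beta}$, except that $r_g$ may lie slightly outside $\mathcal R_{H_\delta}(\mathcal X)$.

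\emph{Step 3: recovering an exact profile (main obstacle).} I would handle the membership slack by projecting each $r_g$ onto a point of $\mathcal R_{H_\delta}(\mathcal X)$ within distance $\tau'$. Alternatively, the weak-optimization output can be taken as an explicit convex combination of the extreme points $\phi_{H_\delta}(x)$ found by the oracle calls, which are polynomially many; these also give the Carathéodory decomposition directly. Shrinking $\tau'$ polynomially keeps the total additive error in the AF constraints at most $\tau$ and the revenue loss at most $\tau$. This bookkeeping needs care: exact versus weak membership, and rounding the polynomial optimizer.

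Finally, apply Lemma~\ref{lem:coord-compression}(ii) with $\alpha=\sigma+\tau$. This yields a finite support menu, via the Carathéodory decomposition with $K_g\leqslant B_\delta+1$ and the state splitting or extension $\Omega^+$. The menu has the same prices, hence revenue at least $\OPT-\tau$, and is $(\sigma+\tau+\Gamma_\beta)$-AF against $\mathcal C_\delta(\Theta)$. The running time is the number of ellipsoid iterations, $\operatorname{poly}(N_\beta B_\delta,\log\frac1\tau)$, times the oracle cost, which gives the stated bound.
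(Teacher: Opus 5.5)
Your plan is essentially the paper's proof. Lemma~\ref{lem:coord-compression}(i) with \(\sigma\geqslant\Gamma_\beta\) lower-bounds the program by \(\OPT\). The ellipsoid method, which the paper runs as a binary search on the revenue level with feasibility tests, uses the cut~\eqref{eq:benders-cut} and weak separation for \(\mathcal R_{H_\delta}(\mathcal X)\) via~\eqref{eq:coord-body-optimization}. Each \(r_g\) is then replaced by an implementable \(\widetilde r_g\), and Carath\'eodory plus Lemma~\ref{lem:coord-compression}(ii) with \(\alpha=\sigma+\tau\) finishes.

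One slip to fix: \(\widehat V_{\theta,c}\) is not \(O(1)\)-Lipschitz in \(\ell_\infty\). An \(\ell_\infty\)-perturbation of size \(\rho\) can move it by \(\rho\sum_{z\in\mathcal Z_\Theta(c)}\gamma(c,z)=\rho\,m^{\|c\|_1}\leqslant\rho\,m^{H_\delta}\), and \(\widehat W_\theta\) by \(m\rho\). So the paper takes \(\rho=\tau/(m+m^{H_\delta})\), a rescaling factor exponential in \(H_\delta\) rather than polynomial. This is harmless only because it enters the running time through \(\log(1/\rho)=\log(1/\tau)+O(H_\delta\log m)\).
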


\paragraph{Step IV: Remove low price products and repair AF menu.} Step III returns a menu that is approximately AF against \(\mathcal C_\delta(\Theta)\). We then turn it into an exact AF direct menu.

\begin{lemma} \label{lem:threshold-low-prices}
 Let $\{E_\theta, t_\theta\}_{\theta \in \Theta}$ be a menu that is \(\alpha\)-AF against \(\mathcal C_\delta(\Theta)\), and define $\Theta_\delta = \{\theta \in \Theta: t_\theta \geqslant \delta\}$. Remove all products outside \(\Theta_\delta\). The remaining menu has revenue at least \(\sum_\theta\lambda_\theta t_\theta - \delta\) and is \(\alpha\)-AF.
\end{lemma}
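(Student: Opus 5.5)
The plan is to verify two things separately: the revenue loss from dropping the cheap products, and $\alpha$-AF of the reduced menu. For the latter, the cardinality threshold $H_\delta$ lets us reduce every bundle deviation to one that lies in $\mathcal C_\delta(\Theta)$.

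First I fix how the types outside $\Theta_\delta$ are treated in the reduced menu. Each $\theta\notin\Theta_\delta$ no longer has its intended product. I let such a type be served by a utility-maximizing bundle of the remaining products, which may be the empty bundle. As a fallback, if a literal direct menu is required, I would apply Proposition~\ref{prop:bundle-revelation} to the reduced menu. Such a type pays a nonnegative amount, so its revenue contribution is at least $0$. Every $\theta\in\Theta_\delta$ keeps its product and price, and I count its intended payment $t_\theta$. Since $t_\theta<\delta$ for $\theta\notin\Theta_\delta$ and $\sum_\theta\lambda_\theta=1$ (the $\lambda_\theta$ are prior masses), the revenue is at least
\[ \sum_{\theta\in\Theta_\delta}\lambda_\theta t_\theta \geqslant \sum_\theta\lambda_\theta t_\theta-\delta\sum_{\theta\notin\Theta_\delta}\lambda_\theta \geqslant \sum_\theta\lambda_\theta t_\theta-\delta. \]

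Next I prove $\alpha$-AF for each $\theta\in\Theta_\delta$ against every finite bundle $c$ supported on $\Theta_\delta$, splitting by size.

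\emph{Case $\|c\|_1\leqslant H_\delta$.} Then $c\in\mathcal C_\delta(\Theta)$ and $c$ uses only products still offered, which have unchanged experiments and prices. The hypothesis that the original menu is $\alpha$-AF against $\mathcal C_\delta(\Theta)$ applies directly. This includes $c=0$, which gives $V_\theta(E_\theta)-t_\theta\geqslant -\alpha$.

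\emph{Case $\|c\|_1>H_\delta$.} Every remaining price is at least $\delta$, so $c\cdot t>H_\delta\delta\geqslant 1$. Since $u_\theta\in[0,1]$, every experiment has value $V_\theta(E_c)\leqslant 1$, hence $V_\theta(E_c)-c\cdot t<0\leqslant V_\theta(0)-0$. Combining this with the $c=0$ case gives $V_\theta(E_\theta)-t_\theta\geqslant V_\theta(E_c)-c\cdot t-\alpha$.

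For $\theta\notin\Theta_\delta$, the AF constraint holds by construction, because the type is assigned an optimal bundle. By the case analysis above, unbounded bundles are never strictly better than the empty bundle, so it suffices to take the optimum over the finitely many bundles of size at most $H_\delta$, and the optimum is attained.

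The main obstacle is this bookkeeping for the removed types. Their assigned object must be a legitimate menu item, and reassigning them must not create new deviations for the types in $\Theta_\delta$. Neither problem arises here. The set of purchasable products is exactly $\{E_\theta\}_{\theta\in\Theta_\delta}$, whatever assignment is made, and the AF inequalities of the types in $\Theta_\delta$ only involve bundles of these products. The argument for the types in $\Theta_\delta$ therefore goes through unchanged.
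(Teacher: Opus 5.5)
Your proof is correct and matches the paper's argument. The revenue loss is at most \(\delta\sum_{\theta\notin\Theta_\delta}\lambda_\theta\leqslant\delta\); bundles with \(\|c\|_1\leqslant H_\delta\) are covered directly by the hypothesis; and larger bundles cost more than \(H_\delta\delta\geqslant 1\), so they are beaten via the \(c=0\) constraint.

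Your explicit handling of the removed types is slightly more careful than the paper's, which simply checks the inequality for every \(\theta\in\Theta\). However, drop the suggested fallback of applying Proposition~\ref{prop:bundle-revelation} at this stage. Under mere \(\alpha\)-AF a retained type may switch to a cheaper bundle, even the empty one, so the revenue bound could fail; this is exactly why the paper first scales prices in Lemma~\ref{lem:exact-repair}.
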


After low price removal, every retained product has price at least \(\delta\). We will then scale every retained price to \((1 - \eta)t_\theta\), and apply Proposition~\ref{prop:bundle-revelation}, and the resulting direct menu is exactly AF. The next lemma shows that the revenue loss from scaling is at most \(\alpha / \eta\).

\begin{lemma} \label{lem:exact-repair}
 Let $\{E_\theta, t_\theta\}_{\theta \in \Theta}$ be a menu on a retained product set \(\widehat\Theta \subseteq \Theta\). Assume that the menu is \(\alpha\)-AF against all finite bundles of retained products. Fix \(\eta \in (0, 1)\), scale $t'_\theta = (1 - \eta) t_\theta, \ \forall \theta \in \widehat\Theta$. Then applying Proposition~\ref{prop:bundle-revelation} yields an exactly AF direct menu with revenue $(1 - \eta)\left(\sum_{\theta \in \widehat\Theta} \lambda_\theta t_\theta - \frac{\alpha}{\eta}\right)$.
\end{lemma}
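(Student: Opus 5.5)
The plan is to compare, type by type, the bundle a buyer actually buys under the scaled prices with the intended product, using the $\alpha$-AF inequality for the original prices. Proposition~\ref{prop:bundle-revelation} then turns these bundle choices into an exactly AF direct menu with the same revenue. So everything reduces to a lower bound on what each type pays after scaling.

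\textbf{Step 1: each type has an optimal bundle.} Consider the scaled menu $\{(E_\theta, t'_\theta)\}_{\theta\in\widehat\Theta}$, and let each type $\theta\in\Theta$ choose a finite bundle $B_\theta$ of retained products that maximizes $V_\theta(E_B) - t'_B$. I need such a maximizer to exist.
\begin{itemize}
\item In the application every retained price is at least $\delta>0$, so $t'_\theta \geqslant (1-\eta)\delta > 0$.
\item Values lie in $[0,1]$, so any bundle with more than $1/((1-\eta)\delta)$ items is strictly worse than the empty bundle.
\item The search is therefore over finitely many count vectors, and a maximizer exists.
\end{itemize}
In the general statement I would assume all retained prices are positive. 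If they are not, I would argue through a supremum and near-optimal bundles, which only affects the bound by an arbitrarily small amount. The bound below holds for every maximizer, so tie-breaking does not matter.

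\textbf{Step 2: the key inequality.} Fix a retained type $\theta\in\widehat\Theta$ and its chosen bundle $B=B_\theta$. If $B=(\theta)$, it pays $(1-\eta)t_\theta$ and there is nothing to show. Otherwise two inequalities are available.
\begin{itemize}
\item Optimality under scaled prices gives $V_\theta(E_B)-(1-\eta)t_B \geqslant V_\theta(E_\theta)-(1-\eta)t_\theta$.
\item The $\alpha$-AF hypothesis, applied to the retained bundle $B$, gives $V_\theta(E_\theta)-t_\theta \geqslant V_\theta(E_B)-t_B-\alpha$.
\end{itemize}
Adding them cancels the values and leaves $\eta t_B \geqslant \eta t_\theta - \alpha$, that is, $t_B \geqslant t_\theta - \alpha/\eta$. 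The payment of type $\theta$ is therefore $(1-\eta)t_B \geqslant (1-\eta)(t_\theta-\alpha/\eta)$.

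\textbf{Step 3: sum over types.} Types outside $\widehat\Theta$ pay a nonnegative amount. Summing Step~2 over retained types with weights $\lambda_\theta$, and using $\sum_{\theta\in\widehat\Theta}\lambda_\theta \leqslant 1$ on the $\alpha/\eta$ term, gives total revenue at least $(1-\eta)\bigl(\sum_{\theta\in\widehat\Theta}\lambda_\theta t_\theta - \alpha/\eta\bigr)$.

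\textbf{Step 4: apply the revelation principle.} Proposition~\ref{prop:bundle-revelation}, applied to the scaled menu with buyer choices $B_\theta$, gives a direct menu assigning $E_{B_\theta}$ at price $t'_{B_\theta}$ to each type. This menu is exactly AF and has the same revenue, which proves the lemma (with "at least" the stated revenue).

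The main obstacle is the existence issue in Step~1, since there are infinitely many bundles. The lower bound on prices inherited from the low-price removal in Lemma~\ref{lem:threshold-low-prices} is what I will rely on to reduce to finitely many candidates. The algebra in Step~2 is routine.
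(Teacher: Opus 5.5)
Your proposal is correct and follows essentially the same argument as the paper. The paper also combines optimality of $B_\theta$ under the scaled prices with the $\alpha$-AF inequality to get $t_{B_\theta} \geqslant t_\theta - \alpha/\eta$, sums over retained types using nonnegative payments for the others and $\sum_{\theta\in\widehat\Theta}\lambda_\theta \leqslant 1$, and then invokes Proposition~\ref{prop:bundle-revelation}. Your Step~1 adds something the paper leaves implicit: it shows an optimal bundle exists, which the price floor $\delta$ from Lemma~\ref{lem:threshold-low-prices} guarantees. Your near-optimal-bundle fallback for zero prices would not by itself give an exactly AF menu, but that case never arises in the application.
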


Finally, Algorithm~\ref{alg:af-fptas} summarizes the FPTAS, and we prove Theorem~\ref{thm:constant-additive-fptas}.

\begin{algorithm}[htbp]
 \caption{\textsc{AF-FPTAS}$(\varepsilon)$} \label{alg:af-fptas}
 \begin{algorithmic}[1]
  \State Set \(\delta \gets \varepsilon/5\), \(\eta \gets 2\varepsilon/5\), and \(\bar\alpha \gets 7\varepsilon^2/50\).
  \State Set \(\beta \gets 5\bar\alpha/21\), \(\sigma \gets 10\bar\alpha/21\), and \(\tau \gets \bar\alpha/21\).
  \State Set \(H_\delta \gets \lceil1/\delta\rceil\), construct \(\mathcal G_\beta\), \(\mathcal C_\delta(\Theta)\), and the count spaces \(\mathcal Z_\Theta(c)\).
  \State Compute the cell payoff aggregates \(M_{g, \theta, a}\) for all \(g \in \mathcal G_\beta\), \(\theta \in \Theta\), and \(a \in A\).
  \State Solve \(\operatorname{FinPrice}^{\sigma}_{\delta, \beta}\) to additive objective tolerance \(\tau\), and recover for every cell \(g\) a decomposition \(r_g = \sum_{\ell = 1}^{K_g}\alpha_{g, \ell}\phi_{H_\delta}(x_{g, \ell})\). Interpret these decompositions as a finite support menu and let \(\widehat t\) be the recovered price vector.
  \State Let \(\widehat\Theta = \{\theta: \widehat t_\theta \geqslant \delta\}\), and discard every product outside \(\widehat\Theta\).
  \State Scale every retained primitive price to \((1 - \eta)\widehat t_\theta\).
  \State For every type \(\theta \in \Theta\), enumerate all count vectors \(c \in \mathbb Z_+^{\widehat\Theta}\) with \(\|c\|_1 \leqslant \lceil1/((1 - \eta)\delta)\rceil\), together with \(c = 0\), and choose a bundle that maximizes utility. Apply Proposition~\ref{prop:bundle-revelation} to the chosen bundles and return the resulting direct menu.
 \end{algorithmic}
\end{algorithm}

\begin{proof}[Proof of Theorem~\ref{thm:constant-additive-fptas}]
 Recall the parameter choices in Algorithm~\ref{alg:af-fptas}. Since \(\Gamma_\beta = 2\beta\), we have \(\Gamma_\beta = \sigma = 10\bar\alpha/21\), hence \(\sigma + \tau + \Gamma_\beta = \bar\alpha\). Since \(\sigma \geqslant \Gamma_\beta\), Proposition~\ref{prop:coord-price} returns a finite support menu with revenue at least \(\OPT - \tau\) and price vector \(\widehat t\) that is \(\bar\alpha\)-AF against \(\mathcal C_\delta(\Theta)\). Let \(\widehat\Theta = \{\theta : \widehat t_\theta \geqslant \delta\}\) as in Algorithm~\ref{alg:af-fptas}. By Lemma~\ref{lem:threshold-low-prices}, after discarding products outside \(\widehat\Theta\) the retained primitive menu has revenue at least $\OPT - \tau - \delta$, and is \(\bar\alpha\)-AF. Applying Lemma~\ref{lem:exact-repair} with \(\alpha = \bar\alpha\) yields an exactly AF direct menu with revenue at least $(1 - \eta)\left(\OPT - \tau - \delta - \frac{\bar\alpha}{\eta}\right)$. Since \(\OPT \leqslant 1\), $\OPT - (1 - \eta)\left(\OPT - \tau - \delta - \frac{\bar\alpha}{\eta}\right) = \eta\OPT + (1 - \eta)\left(\tau + \delta + \frac{\bar\alpha}{\eta}\right) \leqslant \eta + \tau + \delta + \frac{\bar\alpha}{\eta}$. Substituting the parameter values gives $\eta + \tau + \delta + \frac{\bar\alpha}{\eta} = \frac{2\varepsilon}{5} + \frac{\varepsilon^2}{150} + \frac{\varepsilon}{5} + \frac{7\varepsilon}{20} < \varepsilon$, $\forall \varepsilon \in (0, 1]$. Hence the returned direct menu has revenue at least \(\OPT - \varepsilon\).
 
 It remains to bound the running time. Since \(\delta = \Theta(\varepsilon)\), we have $H_\delta = \left\lceil\frac{1}{\delta}\right\rceil = O(1 / \varepsilon)$. Since \(n\) and \(m\) are constants, $B_\delta = \binom{nm + H_\delta}{H_\delta} = (1 / \varepsilon)^{O(nm)}$. Moreover, since \(\beta = \Theta(\varepsilon^2)\), \eqref{eq:rank-cover-upper} yields $N_\beta \leqslant \left(1 + 1 / \beta\right)^{nm} = (1 / \varepsilon)^{O(nm)}$. Hence, the time complexity in Proposition~\ref{prop:coord-price} is $(1 / \varepsilon)^{O(nm)}$. For Step~IV, any bundle has size at most $\left\lceil 1 / ((1 - \eta)\delta)\right\rceil = O(1 / \varepsilon)$. Thus the relevant bundles are indexed by count vectors \(c \in \mathbb Z_+^{\widehat\Theta}\) with \(\|c\|_1 \leqslant O(1 / \varepsilon)\). By the stars and bars count, the number of such vectors is $\binom{|\widehat\Theta| + O(1 / \varepsilon)}{O(1 / \varepsilon)}$. Since \(|\widehat\Theta| \leqslant n\), this is \((1 / \varepsilon)^{O(n)}\), hence also \((1 / \varepsilon)^{O(nm)}\). This proves the theorem.
\end{proof}

We also empirically validate the algorithm on realistic synthetic data trading scenarios in Appendix~\ref{app:experiments}. Across a broad range of practical parameter settings, the algorithm exhibits solid computational efficiency and competitive revenue performance.

\subsection{Approximation algorithms with restricted experiment types} \label{subsec:simple-approx}

The FPTAS in Section~\ref{subsec:grid-price} has running time exponential in the parameters \(m=|A|\) and \(n=|\Theta|\), and therefore may still be computationally infeasible when these parameters are large. We therefore study simpler approximation algorithms in this subsection.

A natural idea is to sell only the complete information experiment $F$ (the experiment that reveals the true state in every realized state) with a single posted price. For each type define $v_\theta = V_\theta(F)$, and assume that at least one \(v_\theta\) is positive. Let $v_{\min} = \min\{v_\theta:v_\theta>0\}$ and $v_{\max} = \max_{\theta \in \Theta} v_\theta$. Then seller maximizes revenue by posting the best price \(p\) and yielding revenue
\[ \Rfull = \max_{p \geqslant 0} p\sum_{\theta: v_\theta \geqslant p}\lambda_\theta. \]

Let $k = \min\{n, 1 + \ln(v_{\max}/v_{\min})\}$. The next proposition shows that computing \(\Rfull\) gives a \(k\)-approximate algorithm, and that this approximation factor is tight.
\begin{proposition} \label{prop:full-info-approx}
 If \(v_{\min}>0\), computing \(\Rfull\) gives a \(k\)-approximate algorithm, and the approximation ratio is tight. Moreover, a menu attaining \(\Rfull\) can be computed in time \(O(|\Theta|\log|\Theta|)\) if the values \(v_\theta\) are known.
\end{proposition}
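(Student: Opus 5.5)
The plan is to compare $\Rfull$ against a simple upper bound on $\OPT$ and then run a standard single-price versus welfare argument. First observe that for every AF (indeed every IR) direct menu, type $\theta$ pays at most $V_\theta(E_\theta) \leqslant V_\theta(F) = v_\theta$. The inequality holds because $F \succeqB E$ for every experiment $E$: Blackwell's theorem gives $W_\theta(F) \geqslant W_\theta(E_\theta)$, and IR is the $B=\varnothing$ case of \eqref{eq:AF}. Hence
\[ \OPT \leqslant \sum_{\theta} \lambda_\theta v_\theta =: \mathrm{SW}, \]
and it suffices to show $\mathrm{SW} \leqslant k\,\Rfull$. Types with $v_\theta = 0$ contribute nothing, so we drop them. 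We also note that the menu $\{(F,p)\}$ is AF: every bundle containing $F$ is Blackwell-equivalent to $F$ and costs at least $p$. So $\Rfull$ is achievable, and after applying Proposition~\ref{prop:bundle-revelation} we get a valid direct AF menu.

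For the bound, sort the positive values $v_{(1)} \geqslant \dots \geqslant v_{(n')}$ with masses $\lambda_{(i)}$, and let $\Lambda_i = \sum_{j \leqslant i}\lambda_{(j)}$. Posting price $v_{(i)}$ earns at least $v_{(i)}\Lambda_i$, so $\Lambda_i \leqslant \Rfull / v_{(i)}$ for every $i$. The bound $\mathrm{SW} \leqslant n\,\Rfull$ is immediate: each term satisfies $\lambda_{(i)} v_{(i)} \leqslant v_{(i)}\Lambda_i \leqslant \Rfull$. For the logarithmic bound, write $\mathrm{SW} = \int_0^{v_{\max}} D(p)\,dp$, where $D(p) = \sum_{\theta:v_\theta\geqslant p}\lambda_\theta$ and $D(p) \leqslant \min\{\Lambda_{\text{total}},\, \Rfull/p\}$. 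Split the integral at $v_{\min}$. On $[0,v_{\min}]$ we have $D \equiv \Lambda_{\text{total}}$, and $v_{\min}\Lambda_{\text{total}} \leqslant \Rfull$. On $[v_{\min}, v_{\max}]$ we have $\int \Rfull/p\,dp = \Rfull\ln(v_{\max}/v_{\min})$. This gives $\mathrm{SW} \leqslant (1+\ln(v_{\max}/v_{\min}))\Rfull$, and combining the two cases gives factor $k$.

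For tightness we need instances where $\OPT/\Rfull \to k$. We use the classical equal-revenue construction. Take $n$ types with $v_{(i)} = 1/i$ and masses chosen so that $\Lambda_i \propto i$, i.e.\ equal masses. Then every posted price $1/i$ earns $\propto 1$, while $\mathrm{SW} \propto H_n \approx 1+\ln n = 1+\ln(v_{\max}/v_{\min})$. For the regime $k = n$, use geometrically spread values $v_{(i)} = M^{-i}$ with masses $\lambda_{(i)} \propto M^{i}$ (normalized), so each price earns $\approx 1$ and $\mathrm{SW} \approx n$ as $M \to \infty$.

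The main obstacle is showing $\OPT$ actually achieves $\mathrm{SW}$ in these instances, i.e.\ that an AF menu can extract full surplus. I would design types whose decision problems concern disjoint parts of the state space: type $\theta$ cares only about states in a block $\Omega_\theta$, with payoffs scaled so that $V_\theta(F) = v_\theta$. The seller then sells type $\theta$ the experiment $E_\theta$ that reveals only the block-$\theta$ state, at price $v_\theta$. Any bundle excluding $E_\theta$ gives type $\theta$ zero value, since the other experiments are uninformative about $\Omega_\theta$; any bundle including it is no more valuable but costs at least $v_\theta$. So this menu is AF with revenue $\mathrm{SW}$. Some care is needed so that coarse products give no partial information to other types, for example by making each $E_\theta$ reveal the exact state on $\Omega_\theta$ and pool all other states into one signal. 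Finally, the running time claim follows by sorting the $v_\theta$ in $O(n\log n)$ time and scanning the prefix sums $v_{(i)}\Lambda_i$, since an optimal single price can be taken at some $v_{(i)}$.
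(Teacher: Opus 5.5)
Your upper-bound arguments, the running-time argument and the two value profiles for tightness are exactly the paper's. The upper bounds are $\OPT\leqslant\sum_\theta\lambda_\theta v_\theta$ via IR plus Blackwell, the per-type bound $\lambda_\theta v_\theta\leqslant\Rfull$ giving $n$, and the tail integral split at $v_{\min}$ giving $1+\ln(v_{\max}/v_{\min})$. The value profiles are equal masses with $v_i\propto 1/i$, and geometric values with $\lambda_i v_i$ constant. The gap is in the step you flag as the main obstacle: realizing these profiles so that an AF menu extracts the full surplus.

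\textbf{Why the block construction fails as stated.} With disjoint blocks of a partition of $\Omega$, the other products are \emph{not} uninformative about $\Omega_\theta$.
\begin{itemize}
\item A product $E_{\theta'}$ that reveals the state on $\Omega_{\theta'}$ and pools all other states tells the buyer whether $\omega\in\Omega_{\theta'}$. That moves the posterior probability of $\Omega_\theta$ (to $0$, or upward).
\item The bundle of all $E_{\theta'}$ with $\theta'\neq\theta$ reveals exactly whether $\omega\in\Omega_\theta$. If $\Omega_\theta$ is a single state, it reveals $\omega$ completely.
\end{itemize}
So the ``care'' you propose, pooling all other states into one signal, is precisely what creates the leakage. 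Whether the leakage is harmful depends on payoffs you leave unspecified, and with natural choices it breaks AF. In your geometric instance, type $1$ has value $M^{-1}$, while all other products together cost about $M^{-2}$. If the bundle of the others gives type $1$ value of order $M^{-1}$ (for example with singleton blocks, where it gives the full $v_1$), pricing each $E_\theta$ at $v_\theta$ is not AF. Then $\OPT=\mathrm{SW}$ is unsupported.

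\textbf{The paper's fix.} Use a product state space with independent coordinates:
\begin{itemize}
\item $\Omega=\{0,1\}^h$ with the uniform prior, and product $i$ reveals $\omega_i$.
\item Type $i$ has a safe action paying $1-v_i$ (with $v_i\leqslant 1/2$) and guessing actions $a_i^b$ paying $\mathbf 1\{\omega_i=b\}$.
\item Any bundle without product $i$ is independent of $\omega_i$, so the safe action stays optimal and the value is $0$.
\item Hence each type has threshold utility, and pricing product $i$ at $v_i$ is AF with revenue equal to the total surplus.
\end{itemize}

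\textbf{Alternative fix keeping blocks.} Calibrate the payoffs instead:
\begin{itemize}
\item Let each block consist of $K$ equally likely states.
\item Give type $\theta$ a safe action worth $b_\theta\in[1/K,1)$ and actions $a_x$, $x\in\Omega_\theta$, paying $\mathbf 1\{\omega=x\}$.
\item A bundle excluding $E_\theta$ has signals constant on $\Omega_\theta$. The posterior within $\Omega_\theta$ therefore stays uniform, so every $x\in\Omega_\theta$ has posterior at most $1/K\leqslant b_\theta$ and the safe action is never displaced.
\item The full-information value is then $\mu(\Omega_\theta)(1-b_\theta)$, which you scale to the desired $v_\theta$.
\end{itemize}
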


Note that when the state space is finite, the values \(v_\theta\) can be computed in time polynomial in \(|\Omega|\). When the state space is infinite, the values \(v_\theta\) can be estimated to arbitrary accuracy by sampling. Hence, selling only complete information yields a simple algorithm with a provable approximation guarantee and efficient running time. Based on the result, selling complete information is most appropriate for markets with a small number of buyer types, or for data products whose values are comparable across buyer types.

% A natural question is whether there exists a simple algorithm that gives better approximation ratio than selling only complete information. The simplification principle is to restrict the experiments in the menu so that the AF constraints in Program~\eqref{eq:unrestricted-program} collapse to a much smaller family. In Appendix~\ref{sec:more-simple-approx}, we present two ways to simplify the AF constraints. However, neither can beat the \(k\)-approximation lower bound in Proposition~\ref{prop:full-info-approx}. This result indicates that the difficulty induced by the infinite AF constraints in Program~\eqref{eq:unrestricted-program} is difficult to address by restricting the types of experiments in the menu.

\section{Blackwell arbitrage-free under Threshold Utilities} \label{sec:blackwell}

Section~\ref{sec:program} studies the general AF menu design problem. However, the AF constraints are difficult to analyze. In this section, we focus on a special case in which the AF constraints can be reduced to the more intuitive and tractable Blackwell arbitrage-free condition in some realistic settings.

\subsection{Threshold utilities collapse AF to Blackwell AF} \label{subsec:threshold}

\begin{definition}[threshold utility] \label{def:threshold-utility}
 A buyer type \(\theta\) has \emph{threshold utility} if there exist a target experiment \(T_\theta\) and a value \(v_\theta > 0\) such that
 \[ V_\theta(E) = \begin{cases}
 v_\theta, & E \succeqB T_\theta,\\
 0, & E \not\succeqB T_\theta.
 \end{cases} \]
\end{definition}

Threshold utilities model buyers who value a product only when it meets a concrete operational threshold. It is a natural and common utility structure in many applications. A natural question is whether threshold utilities can be explicitly realized in a decision problem. Proposition~\ref{prop:certified-deployment-threshold} gives a positive answer for finite deterministic queries and a negative answer for unrestricted continuous experiment families. The intuition behind the negative answer is that in the continuous setting, an experiment can approximate a target experiment arbitrarily well without Blackwell dominating it, while the value of every finite action Bayesian decision problem changes continuously along such an approximation.

Despite the negative answer, threshold utilities are still a useful abstraction for many applications, thus we study them in this section. For a dataset example, a lender may need both income and debt variables for every applicant in a target pool before a screening rule is usable: if one required field is missing, the workflow cannot be deployed. For a model example, a firm may buy a prediction API or an LLM service only if the product reaches a minimum accuracy or safety benchmark needed for deployment. In all of these cases, the buyer values information that is sufficient for the task, and treats information that is still insufficient as simply not useful for that task.

Based on the definition of threshold utility, we then show that the AF constraints for any type with threshold utility are equivalent to Blackwell AF defined below.

\begin{definition}[Blackwell AF]
 Let \(\mathcal E\) be a family of experiments closed under garbling and composition. A price function \(p:\mathcal E\to\mathbb R_+\) is \textit{Blackwell arbitrage-free (Blackwell AF)} if there do not exist experiments \(E_1, \ldots, E_k, F \in \mathcal E\) such that $E_1 \otimes \cdots \otimes E_k \succeqB F$ and $\sum_{r = 1}^k p(E_r) < p(F)$.
\end{definition}

The next proposition gives a characterization of Blackwell AF that is analogous to the characterizations of query and model AF in Section~\ref{subsec:arbitrage}. The proof is trivial and is deferred to Appendix~\ref{app:omitted-proofs-sec5}.

\begin{proposition} \label{prop:unified}
 A price function \(p\) is Blackwell AF if and only if it satisfies:
 \begin{enumerate}
 \item \textbf{Blackwell monotonicity}: if \(E \succeqB F\), then \(p(E) \geqslant p(F)\);
 \item \textbf{parallel subadditivity}: for every \(E_1, E_2\), \(p(E_1 \otimes E_2)\leqslant p(E_1) + p(E_2)\).
 \end{enumerate}
\end{proposition}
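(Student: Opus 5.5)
The plan is to prove the two directions separately. Throughout we use that \(\mathcal E\) is closed under garbling and composition, and that \(\succeqB\) is a preorder compatible with composition. Concretely, if \(E \succeqB F\) and \(E' \succeqB F'\), then \(E\otimes E' \succeqB F\otimes F'\): apply the product kernel \(\Gamma\otimes\Gamma'\), which is valid because the signals are conditionally independent given \(\omega\).

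For necessity, assume \(p\) is Blackwell AF. Monotonicity is the case \(k=1\) of the definition: if \(E\succeqB F\) and \(p(E)<p(F)\), then the single experiment \(E\) is a cheaper bundle dominating \(F\), a contradiction. Parallel subadditivity is the case \(k=2\) with \(F=E_1\otimes E_2\), which lies in \(\mathcal E\) by closure. Since \(E_1\otimes E_2\succeqB F\) trivially via the identity kernel, Blackwell AF forces \(p(F)\leqslant p(E_1)+p(E_2)\).

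For sufficiency, assume monotonicity and parallel subadditivity, and suppose \(E_1\otimes\cdots\otimes E_k\succeqB F\). The first step is to show by induction on \(k\) that
\[ p(E_1\otimes\cdots\otimes E_k)\leqslant \sum_{r=1}^k p(E_r). \]
Here we write \(E_1\otimes\cdots\otimes E_k=(E_1\otimes\cdots\otimes E_{k-1})\otimes E_k\), which is legitimate because composition is associative up to relabeling the signal space. Each partial product lies in \(\mathcal E\) by closure, so we may apply parallel subadditivity to the pair and then the induction hypothesis. The base case \(k=1\) is trivial. If the empty bundle \(k=0\) is allowed, it is the uninformative experiment; it is Blackwell-dominated by every experiment, so we must separately check that it dominates \(F\) only when \(F\) is itself uninformative. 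That case follows from monotonicity applied both ways, provided the uninformative experiment has price \(0\), which is the convention for the empty bundle. The second step is to apply monotonicity to \(E_1\otimes\cdots\otimes E_k\succeqB F\), which gives \(p(F)\leqslant p(E_1\otimes\cdots\otimes E_k)\leqslant\sum_r p(E_r)\). Hence no arbitrage exists.

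The only delicate point is bookkeeping rather than mathematics. Associativity and commutativity of \(\otimes\) hold only up to Blackwell equivalence, that is, up to a bijective relabeling of product signal spaces. So whenever we identify two bracketings of the same product, we invoke monotonicity in both directions on \(\simB\)-equivalent experiments to conclude they have equal price. With that observation, every step above is a direct application of the two axioms.
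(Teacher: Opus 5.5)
Your proof is correct and follows the paper's argument. Necessity takes $k=1$ for monotonicity and $F = E_1\otimes E_2$ for subadditivity, and sufficiency chains monotonicity with repeated subadditivity (you peel off the last factor by induction, where the paper peels off the first). Your bookkeeping is slightly more careful than the paper's on two points it leaves implicit. First, associativity of composition holds only up to Blackwell equivalence. Second, the empty bundle needs the convention that the uninformative experiment has price $0$, since the two axioms alone do not force this (a constant positive price satisfies both); without that convention the equivalence holds only for bundles with $k \geqslant 1$.
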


The next theorem is the main result of this section. It says that for any type with threshold utility, the AF constraints are exactly the Blackwell AF constraints.

\begin{theorem} \label{thm:threshold-blackwell}
 Fix a direct menu \(M=\{(E_\theta, t_\theta)\}_{\theta\in\Theta}\), and suppose every type \(\theta\) has threshold utility in the sense of Definition~\ref{def:threshold-utility} where $E_\theta = T_\theta$. Assume also that the assigned experiment is individually rational, so \(t_\theta\leqslant v_\theta\). Then for each type \(\theta\), the following are equivalent:
 \begin{enumerate}
  \item \(\theta\) has no profitable bundle deviation from its assigned experiment;
  \item every bundle \(B\) satisfying \(E_B \succeqB E_\theta\) has total price at least \(t_\theta\).
 \end{enumerate}
\end{theorem}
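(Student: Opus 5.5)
The plan is to work directly from the AF inequality \eqref{eq:AF} specialized to threshold utilities. Since $E_\theta = T_\theta$ trivially satisfies $E_\theta \succeqB T_\theta$, Definition~\ref{def:threshold-utility} gives $V_\theta(E_\theta) = v_\theta$. Hence the intended net surplus of type $\theta$ is $v_\theta - t_\theta$, which is nonnegative by the IR assumption. For an arbitrary bundle $B$, the deviation surplus is $V_\theta(E_B) - t_B$. This equals $v_\theta - t_B$ when $E_B \succeqB E_\theta$, and equals $-t_B$ otherwise. I would split every bundle deviation into these two cases and treat each direction of the equivalence separately.

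For (1) $\Rightarrow$ (2), take any bundle $B$ with $E_B \succeqB E_\theta$. Condition (1) gives $v_\theta - t_\theta \geqslant v_\theta - t_B$, that is, $t_B \geqslant t_\theta$. One subtlety is that \eqref{eq:AF} only constrains $B \neq (\theta)$. For $B = (\theta)$ itself the inequality $t_B = t_\theta \geqslant t_\theta$ is trivial, so nothing is lost.

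For (2) $\Rightarrow$ (1), I would check both kinds of bundles.
\begin{itemize}
\item If $E_B \succeqB E_\theta$, then (2) gives $t_B \geqslant t_\theta$, so $v_\theta - t_B \leqslant v_\theta - t_\theta$.
\item If $E_B \not\succeqB E_\theta$, then $V_\theta(E_B) - t_B = -t_B \leqslant 0 \leqslant v_\theta - t_\theta$. This uses nonnegativity of prices together with IR.
\end{itemize}
The empty bundle falls into the second case, and there IR is exactly what is needed. Hence no bundle is a profitable deviation, which is (1).

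There is no real obstacle here. The only points needing care are bookkeeping ones. First, the threshold value is applied to the composite experiment $E_B$ through Definition~\ref{def:threshold-utility}, which is stated for arbitrary experiments. Second, the IR hypothesis is precisely what handles the non-dominating bundles, including the empty bundle. Third, prices are nonnegative, as imposed in Program~\eqref{eq:unrestricted-program}, so $-t_B \leqslant 0$.
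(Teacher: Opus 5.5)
Your proposal is correct and matches the paper's proof. The paper likewise argues that a non-dominating bundle has value zero, so by nonnegative prices and IR it cannot beat $v_\theta - t_\theta$, while for dominating bundles AF reduces to $t_B \geqslant t_\theta$. You additionally write out the $(1)\Rightarrow(2)$ direction and the $B=(\theta)$ and empty-bundle cases, which the paper leaves implicit.
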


This theorem shows that, under threshold utilities, which are common in practice, the AF constraints can be reduced to the more intuitive Blackwell AF condition. Outside the threshold utility regime, AF is a strictly stronger constraint than Blackwell AF. The following example illustrates this point.

\begin{proposition} \label{prop:utility-specific-gap}
 There is an information selling instance in which the optimal revenue achieved by a Blackwell AF mechanism is strictly greater than the optimal revenue achieved by an AF mechanism.
\end{proposition}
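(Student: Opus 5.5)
We take the two-type instance built from the state-guessing problem of Example~\ref{ex:repeated-purchase-strict}. Blackwell AF then lets the seller extract full surplus from both types, while AF forces a strict tradeoff. The gap comes from the fact that Blackwell AF only forbids bundles that \emph{Blackwell dominate} a posted experiment, whereas AF forbids every profitable deviation, including deviations to Blackwell-incomparable products. The instance is built so that the low type's product is incomparable with the high type's product but still valuable to the high type.

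\emph{Instance.} Let $\Omega=\{0,1\}$ with uniform prior and $A=\{0,1\}$. Type $H$ has $u_H(\omega,a)=\mathbf 1\{a=\omega\}$, and type $L$ has $u_L=\rho\,u_H$ for a small $\rho\in(0,1)$. The masses are $\lambda_H,\lambda_L>0$. Since $u_L$ is a positive scaling of $u_H$, we have
\[
V_L(E)=\rho\,V_H(E)\quad\text{for every experiment }E.
\]
Let $F$ be full information, so $V_H(F)=1/2$. Let $E$ be the binary symmetric experiment with $0<\varepsilon<1/2$, so $V_H(E)=1/2-\varepsilon>0$. We will require
\[
\lambda_H(1/\rho-1)\geqslant\lambda_L .
\]

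\emph{Step 1: Blackwell AF lower bound.} Post the menu in which $H$ gets $F$ at price $1/2$ and $L$ gets $E$ at price $\rho(1/2-\varepsilon)$. Both IR constraints bind. For Blackwell AF we must check that no bundle Blackwell-dominating a posted experiment is cheaper.
\begin{itemize}
\item For $F$: no finite bundle $E^{\otimes k}$ dominates $F$, since every product of BSC kernels has full support in each state. The only other dominating bundles contain $F$ and cost at least $1/2$.
\item For $E$: every dominating bundle contains $E$ or $F$, hence costs at least $\rho(1/2-\varepsilon)$.
\end{itemize}
Monotonicity and parallel subadditivity (Proposition~\ref{prop:unified}) restricted to the posted products then hold. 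The resulting revenue is
\[
R_B=\frac{\lambda_H}{2}+\lambda_L\rho\Bigl(\frac12-\varepsilon\Bigr)>\frac{\lambda_H}{2}.
\]

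\emph{Step 2: AF upper bound.} Take any AF direct menu $\{(E_H,t_H),(E_L,t_L)\}$. We use only the singleton deviation of $H$ to $E_L$ and IR of $L$.
\begin{itemize}
\item IR for $L$ gives $t_L\leqslant V_L(E_L)=\rho V_H(E_L)$, so $V_H(E_L)\geqslant t_L/\rho$.
\item The AF constraint of $H$ against $B=(L)$ gives $V_H(E_H)-t_H\geqslant V_H(E_L)-t_L$. Combined with $V_H(E_H)\leqslant 1/2$, this yields $t_H\leqslant 1/2-(1/\rho-1)t_L$.
\end{itemize}
Hence
\[
\lambda_Ht_H+\lambda_Lt_L\leqslant\frac{\lambda_H}{2}-\bigl(\lambda_H(1/\rho-1)-\lambda_L\bigr)t_L\leqslant\frac{\lambda_H}{2}
\]
under the parameter condition. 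By Proposition~\ref{prop:bundle-revelation}, restricting to direct menus loses nothing, so every AF mechanism earns at most $\lambda_H/2<R_B$.

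\emph{Main obstacle.} The delicate point is making Step~1 rigorous with respect to the definition of Blackwell AF. That definition is stated for a price function on a family $\mathcal E$ closed under garbling and composition, while here we post a finite menu. I will interpret a Blackwell AF mechanism as a menu whose induced bundle prices admit no cheaper Blackwell-dominating bundle; this is the interpretation implicit in Theorem~\ref{thm:threshold-blackwell}. The key fact to verify is that $E^{\otimes k}\not\succeqB F$ for all $k$. I would prove this by noting that any garbling of a kernel with strictly positive entries still assigns positive probability to every signal of $F$ in both states, which contradicts the fact that $F$'s kernel separates the two states. If a full price function on the closure of the menu is required, I would extend prices by $p(G)=\min\{t_B: E_B\succeqB G\}$ and check that this extension is Blackwell AF by construction and agrees with the posted prices.
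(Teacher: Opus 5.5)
Your computations are correct. The menu with $(F,1/2)$ for $H$ and $(E,\rho(1/2-\varepsilon))$ for $L$ is IR and admits no cheaper Blackwell-dominating bundle; your full-support argument for $E^{\otimes k}\not\succeqB F$ is right. Step~2 also correctly bounds every AF menu by $\lambda_H/2$.

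However, your route differs from the paper's in a way that matters. In your witness menu, type $H$ strictly prefers the single product $E$ to $F$: net $(1-\rho)(1/2-\varepsilon)>0$ versus $0$. So the menu is not even incentive compatible. Your Step~2 likewise uses only the singleton deviation of $H$ to $E_L$ together with IR of $L$; bundling plays no role anywhere. A small wording point: $E$ is not Blackwell-incomparable with $F$, since $F\succeqB E$; what you actually use is only $E\not\succeqB F$.

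Your separation is therefore really ``IR $+$ Blackwell AF'' versus ``IC''. It proves the proposition only under the following convention: a Blackwell AF mechanism is any direct menu satisfying IR and the per-type Blackwell AF condition, with revenue read off the intended assignment. This is the convention of condition~2 in Theorem~\ref{thm:threshold-blackwell}, and of the paper's own computation of the Blackwell AF optimum, which checks only Blackwell AF and IR. So your argument is legitimate under it, but you should say so explicitly; under it the statement is nearly immediate, because Blackwell AF does not even contain the non-dominating IC constraints. If instead a mechanism must be at least IC, your Step~2 bound applies verbatim to Blackwell AF mechanisms. Since the AF optimum is exactly $\lambda_H/2$, your instance then shows no gap at all.

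The paper's instance is built to avoid exactly this. With $E_X,E_Y,F$ priced $1/4,1/4,3/4$, every type is IC and IR; type $H$ is indifferent between $F$ and $E_X$. The only profitable deviation is the genuine bundle $E_X\otimes E_Y$. It is worth as much to $H$ as $F$, yet it does not Blackwell dominate $F$, because $F$ also reveals $Z$, which $H$ does not care about.

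The paper pays for this with a slightly longer upper-bound argument: a union bound on $\Pr[C_X\cap C_Y]$ shows every AF menu earns at most $1$. In return it obtains the stronger and more relevant separation: IC $+$ IR $+$ Blackwell AF is strictly weaker than AF. That is what supports the paper's claim that the query/model AF conditions are insufficient against arbitrage.

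Your construction buys simplicity: two types, one decision problem, and a one-line upper bound. The cost is that robustness, and the fact that it says nothing about bundle arbitrage.
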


\subsection{Query and model AF as special cases of Blackwell AF}

Next, we show that the AF definitions for query and model pricing in the previous literature (Section~\ref{subsec:arbitrage}) are both special cases of Blackwell AF. On the one hand, this establishes Blackwell AF as a unifying framework; on the other hand, it indicates that the query and model AF notions in the existing literature do not necessarily align with general AF.

We first show that query AF is exactly the restriction of Blackwell AF to deterministic queries.

\begin{proposition} \label{prop:query-blackwell}
 For deterministic queries \(Q_1, Q_2, Q\):
 \begin{enumerate}
  \item $E_{Q_1} \succeqB E_{Q_2} \Longleftrightarrow P_{Q_1} \succeq P_{Q_2}$, where \(P_{Q_1} \succeq P_{Q_2}\) means that \(P_{Q_1}\) refines \(P_{Q_2}\);
  \item $E_Q \simB E_{Q_1}\otimes E_{Q_2} \Longleftrightarrow P_Q = P_{Q_1} \vee P_{Q_2}$.
 \end{enumerate}
\end{proposition}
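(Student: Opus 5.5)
The plan rests on one reduction: for deterministic queries, Blackwell dominance between the induced experiments coincides with measurability of one query with respect to the other. Concretely, I will first establish the lemma
\[
E_{Q_1}\succeqB E_{Q_2} \iff \text{there is a map } f \text{ with } Q_2 = f\circ Q_1 \text{ on } \mathcal D .
\]
Both parts of Proposition~\ref{prop:query-blackwell} then follow by translating statements about maps into statements about partitions.

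For the easy direction of the lemma, suppose $Q_2 = f\circ Q_1$. Take the deterministic garbling $\Gamma(b\mid a)=\mathbf 1\{f(a)=b\}$; composing it with $K_{Q_1}(\cdot\mid D)$ gives exactly $K_{Q_2}(\cdot\mid D)$.

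For the converse, let $\Gamma$ be a kernel with $\sum_a \Gamma(b\mid a)K_{Q_1}(a\mid D)=K_{Q_2}(b\mid D)$ for every $D$. Since $K_{Q_1}(\cdot\mid D)$ is a point mass at $Q_1(D)$, this identity reads
\[
\Gamma(\cdot\mid Q_1(D)) = \delta_{Q_2(D)} .
\]
Hence whenever $Q_1(D)=Q_1(D')$, the left-hand sides agree, so $\delta_{Q_2(D)}=\delta_{Q_2(D')}$ and $Q_2(D)=Q_2(D')$. Defining $f$ on the image of $Q_1$ by $f(Q_1(D))=Q_2(D)$ is therefore well defined.

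The main subtlety is exactly this step: a stochastic garbling might seem able to do more than a deterministic map. It cannot here, because the garbled output must be deterministic at every state in the support of the original signal. I plan to note explicitly that the Blackwell definition quantifies over all $\omega\in\Omega=\mathcal D$, so the identity holds pointwise and no prior-support issue arises.

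With the lemma in hand, part 1 is immediate. The existence of $f$ with $Q_2=f\circ Q_1$ is equivalent to $Q_1(D)=Q_1(D')\Rightarrow Q_2(D)=Q_2(D')$, which says every cell of $P_{Q_1}$ lies inside a cell of $P_{Q_2}$, i.e.\ $P_{Q_1}\succeq P_{Q_2}$.

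For part 2, I first identify $E_{Q_1}\otimes E_{Q_2}$ with the deterministic query $Q_{12}(D)=(Q_1(D),Q_2(D))$. Its kernel is the product of two point masses, which is the point mass at the pair. Two instances $D,D'$ are $Q_{12}$-equivalent iff they are both $Q_1$- and $Q_2$-equivalent, so the cells of $P_{Q_{12}}$ are the nonempty intersections of cells of $P_{Q_1}$ and $P_{Q_2}$. That is, $P_{Q_{12}}=P_{Q_1}\vee P_{Q_2}$, the coarsest common refinement.

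Part 2 then follows by applying part 1 in both directions. We have $E_Q\simB E_{Q_{12}}$ iff $P_Q\succeq P_{Q_{12}}$ and $P_{Q_{12}}\succeq P_Q$. Since refinement is antisymmetric on partitions, this holds iff $P_Q=P_{Q_{12}}=P_{Q_1}\vee P_{Q_2}$.

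If $\mathcal D$ or the answer sets are infinite, I would add a remark on measurability: the map $f$ must be measurable on the image of $Q_1$. I would handle this either by assuming countable answer spaces, as the paper's summation convention implicitly does, or by working with the partition/σ-algebra statement directly.
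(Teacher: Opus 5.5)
Your proposal is correct and takes the paper's route. Both arguments characterize Blackwell dominance between deterministic query experiments as $Q_2$ factoring through $Q_1$: the point-mass kernel forces the garbled output to be the same at instances with equal $Q_1$-answers, and a deterministic map $g$ gives the garbling in the other direction. Both then handle part~2 by identifying $E_{Q_1}\otimes E_{Q_2}$ with the pair query $(Q_1,Q_2)$, whose partition is $P_{Q_1}\vee P_{Q_2}$, and applying part~1 in both directions; your remark on measurability of the factor map is a point the paper leaves implicit.
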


Consequently, Proposition~\ref{prop:unified} restricted to deterministic queries reproduces exactly the query AF condition.

We then consider model AF. For \(J\in\mathbb S_{++}^d\), let \(G_J\) denote the Gaussian experiment \(Y = \theta + \varepsilon, \varepsilon\sim \mathcal N(0,J^{-1})\). Note that this setting generalizes the setting of \cite{chen2019towards}. If we restrict $J$ to \(J = dx I_d\), where \(x\) is the scalar precision, then we recover the AF definition of \cite{chen2019towards}.

\begin{proposition} \label{prop:gaussian-matrix}
 For Gaussian location experiments:
 \begin{enumerate}
  \item \(G_{J_1}\succeqB G_{J_2}\) if and only if \(J_1\succeq J_2\) in the Loewner order (i.e., \(J_1 - J_2\) is positive semidefinite);
  \item \(G_{J_1}\otimes G_{J_2}\simB G_{J_1+J_2}\).
 \end{enumerate}
\end{proposition}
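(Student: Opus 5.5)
Part 2 comes first, because it is the easier computation and part 1 will use it. The approach is through sufficient statistics. For \(G_{J_1}\otimes G_{J_2}\) we observe \(Y_1=\theta+\varepsilon_1\) and \(Y_2=\theta+\varepsilon_2\) independently, where \(\varepsilon_i\sim\mathcal N(0,J_i^{-1})\). The joint likelihood is proportional to
\[
\exp\bigl(-\tfrac12(Y_1-\theta)^\top J_1(Y_1-\theta)-\tfrac12(Y_2-\theta)^\top J_2(Y_2-\theta)\bigr),
\]
so \(T=(J_1+J_2)^{-1}(J_1Y_1+J_2Y_2)\) is sufficient for \(\theta\). Its law is \(\theta+\mathcal N(0,(J_1+J_2)^{-1})\), since \(\operatorname{Var}(J_1\varepsilon_1+J_2\varepsilon_2)=J_1+J_2\). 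This gives the two directions.

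\begin{itemize}
\item \(G_{J_1}\otimes G_{J_2}\succeqB G_{J_1+J_2}\): use the deterministic garbling \((Y_1,Y_2)\mapsto T\).
\item \(G_{J_1+J_2}\succeqB G_{J_1}\otimes G_{J_2}\): by sufficiency, the conditional law of \((Y_1,Y_2)\) given \(T\) does not depend on \(\theta\). Explicitly, the residual \(Y_1-Y_2=\varepsilon_1-\varepsilon_2\) is \(\theta\)-free. The vector \(\bigl(T-\theta,\ \varepsilon_1-\varepsilon_2\bigr)\) is Gaussian, and a short computation shows its cross-covariance is \((J_1+J_2)^{-1}(J_1J_1^{-1}-J_2J_2^{-1})=0\), so \(T\) and \(Y_1-Y_2\) are independent. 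The kernel is then: given \(T\), draw \(D\sim\mathcal N(0,J_1^{-1}+J_2^{-1})\) and output the pair \((Y_1,Y_2)\) solved from \(T\) and \(D\). This is linear and invertible, because \(Y_1=T+(J_1+J_2)^{-1}J_2D\) and \(Y_2=Y_1-D\).
\end{itemize}

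Part 1, sufficiency (\(\Leftarrow\)). If \(J_1=J_2\) the claim is trivial. If \(J_1\succeq J_2\) with \(J_1\succ J_2\), set \(K=J_2^{-1}-J_1^{-1}\succeq 0\) (Loewner inversion is antitone). Adding independent noise \(\mathcal N(0,K)\) to \(Y\sim G_{J_1}\) produces exactly \(G_{J_2}\), which is a Markov kernel. If \(J_1-J_2\) is only PSD, the same argument still applies, since \(J_2^{-1}-J_1^{-1}\succeq 0\) holds whenever \(J_1\succeq J_2\succ 0\). No singular case arises, because both matrices lie in \(\PD\).

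Part 1, necessity (\(\Rightarrow\)). This is the main obstacle, because Blackwell dominance has to be converted into a matrix inequality. My plan is to reduce to one dimension by projection. Fix a direction \(v\in\R^d\) and consider the subproblem in which \(\theta\) is only allowed to vary along the line \(\theta=sv\), \(s\in\R\). Blackwell dominance over all of \(\Omega=\R^d\) implies dominance after restricting the state space, since the same garbling kernel works.

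Restricted to the line, \(G_J\) has sufficient statistic \(v^\top J Y\), which is distributed as \(\mathcal N\bigl(s\, v^\top J v,\ v^\top J v\bigr)\). This is equivalent to a scalar Gaussian experiment with precision \(v^\top J v\). So it remains to show that, among scalar Gaussian location experiments, \(G_{x_1}\succeqB G_{x_2}\) forces \(x_1\geqslant x_2\). I would prove this with a single decision problem. Put a Gaussian prior \(s\sim\mathcal N(0,1)\) and use quadratic loss. The Bayes risk is \(1/(1+x)\), which is strictly decreasing in \(x\), so Blackwell's theorem forces \(x_1\geqslant x_2\). Utilities in the paper are required to lie in \([0,1]\) with finitely many actions, so quadratic loss is not directly admissible. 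Two ways around this: Blackwell's theorem holds for general bounded decision problems, or one can compare Fisher information or the Hellinger affinity between \(s=0\) and \(s=1\), which equals \(\exp(-x/8)\) and is monotone under garbling by data processing. Either gives \(v^\top J_1v\geqslant v^\top J_2 v\) for all \(v\), which is exactly \(J_1\succeq J_2\).
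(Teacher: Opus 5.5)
Your proof is correct. Part~2 and the sufficiency half of part~1 match the paper's argument. You use the same statistic \(T=(J_1+J_2)^{-1}(J_1Y_1+J_2Y_2)\), the same residual \(Y_1-Y_2\) with zero cross-covariance, the same linear reconstruction, and the same added-noise garbling with covariance \(J_2^{-1}-J_1^{-1}\). Your case split in the sufficiency step is redundant, and part~1 never actually uses part~2, despite what you say at the start.

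The real difference is the necessity direction. The paper stays inside its own model. It takes the two-point prior on \(\{v,-v\}\) with two actions and \(0\)--\(1\) utility, computes the log-likelihood ratio \(2v^\top J_r y\) directly in \(\R^d\), and obtains success probability \(\Phi(\sqrt{v^\top J_r v})\). So Blackwell's theorem applies exactly as the paper states it, and no sufficiency step is needed.

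You instead restrict to the line, pass to the sufficient statistic \(v^\top JY\) to reduce to scalar Gaussians, and then apply any strictly monotone, garbling-monotone informativeness measure. This is more modular, since it shows any such functional works. It costs two things:
\begin{itemize}
\item You need an extra Blackwell-equivalence argument for the sufficient statistic.
\item Your quadratic-loss version leaves the paper's bounded, finite-action framework. The justification you offer (Blackwell's theorem for bounded problems) does not cover quadratic loss, which is unbounded. What you actually need is only the trivial simulation direction (garbling implies weakly higher value), and that holds for any loss bounded below.
\end{itemize}
Your Hellinger alternative is the cleanest version of your route. It needs no line reduction at all, because the affinity between \(\mathcal N(0,J^{-1})\) and \(\mathcal N(v,J^{-1})\) is \(\exp(-v^\top Jv/8)\) directly in \(\R^d\).
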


Hence, a price rule on this Gaussian family is Blackwell AF if and only if \(\bar p(J) := p(G_J)\) is monotone in the Loewner order and subadditive under matrix addition.

\paragraph{Information-theoretic interpretation.} The mutual information between the underlying state and the signal of an experiment is a natural measure of informativeness. The next proposition shows that mutual information pricing is Blackwell AF, and hence is a valid AF pricing rule under threshold utilities.

\begin{proposition} \label{prop:mi-noarb}
 Fix a prior \(\mu\in\Delta(\Omega)\), and let \(X \sim \mu\) denote the random state. For any experiment \(E\), let \(Y_E\) denote the random signal generated by \(E\) conditional on \(X\), and define $p_\mu(E) = I(X; Y_E)$. Then \(p_\mu\) is Blackwell monotone and subadditive under independent composition.
\end{proposition}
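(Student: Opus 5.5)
We need to prove two properties of $p_\mu(E)=I(X;Y_E)$: Blackwell monotonicity, and subadditivity under independent composition, i.e.\ $I(X;Y_{E_1},Y_{E_2})\leqslant I(X;Y_{E_1})+I(X;Y_{E_2})$. Both will follow from standard information-theoretic inequalities once the experiments are coupled on a common probability space. Together with Proposition~\ref{prop:unified}, this shows $p_\mu$ is Blackwell AF.

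\textbf{Monotonicity.} Suppose $E\succeqB F$ with garbling kernel $\Gamma$. We build the joint law of $(X,Y_E,Y_F)$ by drawing $X\sim\mu$, then $Y_E\sim K_E(\cdot\mid X)$, then $Y_F\sim\Gamma(\cdot\mid Y_E)$. The defining identity $K_F(t\mid\omega)=\sum_s\Gamma(t\mid s)K_E(s\mid\omega)$ shows that the pair $(X,Y_F)$ has exactly the joint law induced by $F$. Since $I(X;Y_F)$ depends only on that joint law, computing it on this coupling is legitimate. By construction $X\to Y_E\to Y_F$ is a Markov chain, so the data processing inequality gives $I(X;Y_F)\leqslant I(X;Y_E)$, i.e.\ $p_\mu(F)\leqslant p_\mu(E)$.

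\textbf{Subadditivity.} For $E_1\otimes E_2$ the signal is $(Y_1,Y_2)$, with $Y_1$ and $Y_2$ independent conditional on $X$. By the chain rule,
\[ I(X;Y_1,Y_2)=I(X;Y_1)+I(X;Y_2\mid Y_1). \]
The remaining step is to show $I(X;Y_2\mid Y_1)\leqslant I(X;Y_2)$. Write $I(X;Y_2\mid Y_1)=H(Y_2\mid Y_1)-H(Y_2\mid X,Y_1)$. Conditional independence gives $H(Y_2\mid X,Y_1)=H(Y_2\mid X)$, and conditioning reduces entropy gives $H(Y_2\mid Y_1)\leqslant H(Y_2)$. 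Hence $I(X;Y_2\mid Y_1)\leqslant H(Y_2)-H(Y_2\mid X)=I(X;Y_2)$. For general (possibly continuous) signal spaces, where entropies may be infinite, I would instead use the identity
\[ I(X;Y_1,Y_2)=I(X;Y_1)+I(X;Y_2)-I(Y_1;Y_2), \]
which holds under conditional independence and can be checked directly with densities. Nonnegativity of $I(Y_1;Y_2)$ then gives the bound.

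\textbf{Expected obstacle.} The only delicate point is measure-theoretic generality, namely infinite $\Omega$ or infinite signal spaces. There I would work with the Kullback--Leibler definition of mutual information. Data processing and the chain rule for conditional mutual information hold in that generality, for example via Dobrushin's theorem on finite partitions, so the argument carries over with $+\infty$ allowed as a value.
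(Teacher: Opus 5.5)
Your proof is correct and follows the paper's route: a garbling coupling plus data processing for monotonicity, and the chain rule with $I(X;Y_2\mid Y_1)\leqslant I(X;Y_2)$ for subadditivity. Your general-space identity $I(X;Y_1,Y_2)=I(X;Y_1)+I(X;Y_2)-I(Y_1;Y_2)$ is exactly what the paper obtains by expanding $I(Y_2;X,Y_1)$ in two ways, and your entropy-based shortcut for discrete signals is only a cosmetic variant of that same step.
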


Corollary~\ref{cor:mi-query-gaussian} gives the explicit form of mutual information pricing for query and model.

\begin{corollary} \label{cor:mi-query-gaussian}
 Under the price rule \(p_\mu(E)=I(X;Y_E)\) from Proposition~\ref{prop:mi-noarb}:
 \begin{enumerate}
  \item if the random state is the database random variable \(D \sim \mu\) and \(Q\) is a deterministic query, then $p_\mu(E_Q) = I(D; Q(D)) = H(Q(D))$, where \(H(Q(D))\) is the Shannon entropy of the query answer under the prior \(\mu\).
  \item if the random state is \(\theta \sim \mathcal N(0, \Sigma_0)\), then for every Gaussian experiment \(G_J\), $p_\mu(G_J) = \frac12 \log\det(I + \Sigma_0 J)$.
 \end{enumerate}
\end{corollary}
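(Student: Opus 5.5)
The corollary asks for two explicit evaluations of the mutual information $I(X;Y_E)$. Each one follows from the decomposition $I(X;Y)=H(Y)-H(Y\mid X)$, or its differential-entropy analogue, applied to the experiments of Section~\ref{subsec:arbitrage}. Nothing beyond Proposition~\ref{prop:mi-noarb} is needed for the pricing interpretation, since that proposition already makes $p_\mu$ a valid Blackwell AF price. The work is therefore purely computational.

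\textbf{Part 1 (deterministic queries).} The experiment $E_Q$ has kernel $K_Q(a\mid D)=\mathbf 1\{Q(D)=a\}$, so $Y_{E_Q}=Q(D)$ almost surely. This gives $I(D;Y_{E_Q})=I(D;Q(D))=H(Q(D))-H(Q(D)\mid D)$.

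Conditional on $D$, the answer $Q(D)$ is a point mass, so $H(Q(D)\mid D)=0$. This leaves $H(Q(D))$, the entropy of the pushforward distribution $\mu\circ Q^{-1}$ on $\mathcal A_Q$.

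The one care point is that $\mathcal D$ may be infinite. I would require the answer space to be countable, or $H(Q(D))<\infty$, and use the general definition of mutual information through the joint law. The identity $I(D;f(D))=H(f(D))$ for measurable $f$ with discrete range holds in that generality.

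\textbf{Part 2 (Gaussian experiments).} I take $\theta\sim\mathcal N(0,\Sigma_0)$ and $Y=\theta+\varepsilon$ with $\varepsilon\sim\mathcal N(0,J^{-1})$ independent of $\theta$. Then $Y\sim\mathcal N(0,\Sigma_0+J^{-1})$, and $Y\mid\theta\sim\mathcal N(\theta,J^{-1})$.

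Using the Gaussian differential entropy $h(\mathcal N(m,\Sigma))=\tfrac12\log\bigl((2\pi e)^d\det\Sigma\bigr)$, we get
\[ I(\theta;Y)=h(Y)-h(Y\mid\theta)=\tfrac12\log\det(\Sigma_0+J^{-1})-\tfrac12\log\det(J^{-1}). \]
The right-hand side equals $\tfrac12\log\det\bigl(J(\Sigma_0+J^{-1})\bigr)=\tfrac12\log\det(I+J\Sigma_0)$.

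Sylvester's identity $\det(I+AB)=\det(I+BA)$ then converts this to $\tfrac12\log\det(I+\Sigma_0 J)$. I will state that $\Sigma_0$ is positive definite so that the differential entropies are finite. Alternatively, one can note that the formula extends by continuity to singular $\Sigma_0$ via the KL-divergence definition of $I$.

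The only mildly delicate step is this measure-theoretic bookkeeping: infinite database spaces in Part 1 and possibly degenerate priors in Part 2. The algebra itself is routine.
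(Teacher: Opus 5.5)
Your proposal is correct and follows the paper's proof essentially line for line: $H(Q(D)\mid D)=0$ gives Part 1, and $h(Y)-h(Y\mid\theta)$ with the Gaussian entropy formula gives Part 2. Your Sylvester step is just $\det(AB)=\det(BA)$ for square matrices. One minor remark: you do not need $\Sigma_0$ to be positive definite, because $J^{-1}\succ 0$ already makes $\Sigma_0+J^{-1}\succ 0$, so both differential entropies are finite for any positive semidefinite $\Sigma_0$.
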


Note that $p_\mu(E_Q)$ is consistent with the Shannon entropy pricing discussed in \citep{deep2017design}. Moreover, the map $J \mapsto \frac12 \log\det(I + \Sigma_0 J)$ is monotone in the Loewner order and subadditive under matrix addition.

\subsection{Pricing under structured menus} \label{subsec:menu-restrictions}

We now study the optimal pricing problem under threshold utilities. We first consider the general problem, and then study the problem under some realistic and tractable structured menus.

We consider a general setting where the target experiments of different types might be the same. Let target products be indexed by $i \in [M]$, type $\theta \in \Theta$ buyer has target experiment $\tau(\theta) \in [M]$, and threshold value $v_\theta > 0$. Let $\Theta_i = \{\theta \in \Theta: \tau(\theta) = i\}$ and define $D_i(x) = \sum_{\theta \in \Theta_i: \ v_\theta \geqslant x}\lambda_\theta$ as the demand function for target $i$. For a count vector $c \in \mathbb Z_+^M$, write $T^c = \bigotimes_{j = 1}^M T_j^{\otimes c_j}$, where $T_j^{\otimes 0}$ is the uninformative experiment. For each target $i$, define the set of count vectors that can synthesize it: $\mathcal C_i = \{c \in \mathbb Z_+^M: T^c \succeqB T_i\}$.

The following proposition shows that the optimal revenue under threshold utilities can be achieved by a direct menu that posts one price for each target experiment. Note that this is not in conflict with individual rationality, since for type $\theta$ with price $p_{\tau(\theta)} > v_\theta$, we can equivalently assign the uninformative experiment and zero price to this type.

\begin{proposition} \label{prop:threshold-target-lp}
 For threshold types $(\tau(\theta), v_\theta)_{\theta \in \Theta}$, the optimal revenue equals the value of
 \begin{equation} \label{eq:threshold-target-demand-program}
  \begin{aligned}
   \max_{p \geqslant 0} \quad & \sum_{i = 1}^M p_i D_i(p_i) \\
   \text{s.t.} \quad & p_i \leqslant c \cdot p, && \forall i \in[M], \ \forall c \in \mathcal C_i .
  \end{aligned}
 \end{equation}
\end{proposition}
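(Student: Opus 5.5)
We will prove the two inequalities between the optimal AF revenue and the value of \eqref{eq:threshold-target-demand-program} separately. Throughout, the key tool is Theorem~\ref{thm:threshold-blackwell}. We will also use the observation that, under threshold utilities, a type $\theta$'s value for any composite experiment depends only on whether it Blackwell-dominates $T_{\tau(\theta)}$.

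\textbf{Program value $\le$ optimal revenue.} Take a feasible $p$ and build the direct menu in which each type $\theta$ with $v_\theta \geqslant p_{\tau(\theta)}$ receives $(T_{\tau(\theta)}, p_{\tau(\theta)})$. Every other type receives the uninformative experiment at price $0$. Its revenue is $\sum_i p_i D_i(p_i)$, and IR holds by construction. For AF, first fix an assigned type $\theta$ with target $i$. Any bundle $B$ of offered products has $E_B = T^c$ for the count vector $c$ of targets purchased, and $E_B\succeqB T_i$ exactly when $c\in\mathcal C_i$. The constraint then gives $c\cdot p \geqslant p_i$, so condition 2 of Theorem~\ref{thm:threshold-blackwell} holds. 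Hence $\theta$ has no profitable deviation.

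For an unassigned type with $v_\theta < p_i$, the theorem is not directly stated, so we argue by hand. Any bundle with $V_\theta(E_B)=v_\theta$ has $c\in\mathcal C_i$, hence costs at least $p_i > v_\theta$. Every other bundle gives value $0$ at nonnegative cost. So the outside option is optimal. Products that are unassigned or duplicated do not change the menu's information set, since the offered experiments are exactly the $T_j$ for targets $j$ that some type buys. Prices of targets bought by nobody must still be consistent with the constraints: we keep those products on the menu at price $p_j$, which is harmless because the constraints involve all of $\mathcal C_i$.

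\textbf{Optimal revenue $\le$ program value.} Start from an optimal AF direct menu and use Proposition~\ref{prop:bundle-revelation}. Define $p_i$ as the cheapest cost, over finite bundles $B$ from the menu, of obtaining $E_B \succeqB T_i$, setting $p_i = +\infty$ or a large cap if no bundle works. By the threshold structure and AF, a type with target $i$ that obtains positive value pays at most $p_i$ and at most $v_\theta$. Any type paying a positive amount must be receiving value $v_\theta$, i.e., an experiment dominating $T_i$, since paying for zero value violates IR. Hence its payment is exactly $\min$ over dominating bundles, which equals $p_i$: paying more contradicts AF, and paying less contradicts minimality. So revenue is at most $\sum_i p_i D_i(p_i)$.

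It remains to check feasibility of $p$. Take $c\in\mathcal C_i$ and, for each $j$, choose a near-cheapest bundle $B_j$ with $E_{B_j}\succeqB T_j$. Blackwell dominance is preserved under $\otimes$, so the concatenation of $c_j$ copies of each $B_j$ yields an experiment dominating $T^c \succeqB T_i$. Its cost is at most $c\cdot p+\epsilon$, giving $p_i \leqslant c\cdot p$. Infinite $p_j$ are handled by capping at $\max_\theta v_\theta$, which neither changes $D_i$ nor breaks the constraints.

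\textbf{Main obstacle.} The delicate step is the preservation lemma, namely that $E\succeqB T$ and $E'\succeqB T'$ imply $E\otimes E'\succeqB T\otimes T'$. This follows by taking the product garbling kernel. The other point needing care is ensuring that the infimum defining $p_i$ is attained or approximated correctly. Both are routine, so we expect the argument to go through with standard care.
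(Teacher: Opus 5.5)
Your proof follows the paper's route step for step. You use the indirect cost $\rho_i$ of reaching $T_i$ with menu bundles, IR plus AF force every positive payment in class $i$ to equal $\rho_i$, you concatenate near-cheapest bundles to show $\rho_i \leqslant c\cdot\rho$, and for the converse you sell the targets at a feasible $p$.

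\textbf{The gap: capping at $\bar v$.} One step fails as written. Capping only the infinite entries at $\bar v := \max_\theta v_\theta$ can break the constraints. IR bounds each single menu price by $\bar v$, but it does not bound the cost of a bundle, so a finite $\rho_i$ can exceed $\bar v$.

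For example, let $X,Y,Z$ be independent uniform bits, and let $T_1,T_2,T_3,T_4$ reveal $X$, $Y$, $(X,Y)$, $(X,Y,Z)$. Types $a,b$ have targets $1,2$ and value $\bar v$; the types with targets $3,4$ have values below $\bar v$. Sell $E_X$ to $a$ and $E_Y$ to $b$ at price $\bar v$ each, and nothing to the others. This is an AF direct menu with $\rho_3 = 2\bar v$ and $\rho_4 = \infty$. Your capped vector then has $p_3 = 2\bar v > \bar v = p_4$, even though $e_4 \in \mathcal C_3$ because $T_4 \succeqB T_3$. So the vector you feed into the program is infeasible.

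\textbf{How to repair it.} There are two easy fixes.
\begin{enumerate}
\item Take the cap $P$ larger than every finite $\rho_j$ and every $v_\theta$, as the paper does. Then a $c \in \mathcal C_i$ with some $c_j > 0$ and $\rho_j = \infty$ satisfies $c\cdot p \geqslant P \geqslant p_i$. Any other $c \in \mathcal C_i$ forces $\rho_i < \infty$ and $\rho_i \leqslant c\cdot\rho$ by your concatenation argument.
\item Truncate every coordinate, $p_i = \min\{\rho_i, \bar v\}$. This stays feasible because $\min\{c\cdot\rho, \bar v\} \leqslant c\cdot\min\{\rho,\bar v\}$ for $c\in\mathbb Z_+^M$. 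It does not push the objective below the menu's revenue, because $D_i(x) = 0$ for $x > \bar v$.
\end{enumerate}

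\textbf{Two smaller points.} Your remark that the cap ``does not change $D_i$'' is false when some type in $\Theta_j$ has $v_\theta = \bar v$; this only raises the objective, so it is harmless. Also, run the upper-bound argument for an arbitrary AF direct menu rather than an optimal one, since you have not shown an optimum exists.
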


However, Program~\eqref{eq:threshold-target-demand-program} is APX-hard. The proof idea is similar to the proof of Theorem~\ref{thm:af-revenue-hardness} (i), and we defer the discussion to Proposition~\ref{prop:threshold-target-lp-hardness}. Intuitively, the main difficulty is that we need to choose which buyer types receive nonempty information, thus the problem is naturally combinatorial. The hardness results rule out a generic algorithm for threshold types. We therefore turn to structured target families, for which the problem admits efficient exact or approximate solutions.

\paragraph{Full surplus extraction.} The first structure guarantees full buyer surplus extraction, thus we can compute the optimal revenue without choosing which buyer types receive nonempty information. Let \(R_{tbs} = \sum_{\theta \in \Theta} \lambda_\theta v_\theta\) denote the total buyer surplus under threshold utilities, which is an upper bound on revenue for any individually rational mechanism. Extracting \(R_{tbs}\) is possible under special structures of the target experiments and the target values. We give two examples of such structures below.
\begin{enumerate}
 \item The target experiments are independent and the threshold value of the each target experiment is unique. Formally, for every target \(T_i\), no finite bundle of the other targets \(\{T_j: j \neq i\}\) Blackwell dominates \(T_i\). Hence any bundle that obtains \(T_i\) must include \(T_i\) itself, so charging \(p_i= v_i\) creates no arbitrage. In real applications, this structure arises when the target experiments are for different regions or different tasks. For example, a lender may need to buy a dataset for each region, and the datasets are independent because the applicants in different regions are disjoint.
 
 \item The value functions are normalized, monotone, and subadditive. Formally, if each target experiment $T_S$ gives a subset $S$ of $U$, and the value is \(v_S = r(S)\), where \(r: 2^U \to \mathbb R_+\) satisfies \(r(\emptyset) = 0\), \(A \subseteq B \Rightarrow r(A) \leqslant r(B)\), and \(r(A \cup B) \leqslant r(A) + r(B), \ \forall A, B \subseteq U\). If a finite bundle \(T_{S_1}, \ldots, T_{S_k}\) Blackwell dominates \(T_S\), then \(S\subseteq \bigcup_{\ell = 1}^k S_\ell\), thus \(r(S) \leqslant r(\bigcup_{\ell = 1}^k S_\ell) \leqslant \sum_{\ell = 1}^k r(S_\ell)\). Therefore charging \(p_S = v_S\) is Blackwell AF. In real applications: (i) $U$ is the complete database, and $T_S$ is the experiment that reveals the row subset $S$; (ii) $U = \{[n]: n \in \mathbb N\}$ is the set of numbers of API calls, and $T_S$ is the experiment that provide all numbers of data API calls in $S$. In these two applications, the value functions are naturally subadditive due to the law of diminishing marginal utility.
\end{enumerate}

\paragraph{Polynomial-time Blackwell AF violation detection.} The following proposition shows that if we can efficiently detect Blackwell AF violations, i.e., given a price vector \(p\) and a target \(i\), we can efficiently check whether there exists a count vector \(c \in \mathcal C_i\) such that \(c \cdot p < p_i\), then we have a polynomial time approximation algorithm.

\begin{proposition} \label{prop:arbitrage-detection-log-approx}
 There is no general polynomial time algorithm for detecting a Blackwell AF violation. However, if a Blackwell AF violation can be detected in polynomial time, assume all positive threshold values $v_\theta$ lie in $\{1, 2, \ldots, V\}$, then Program~\eqref{eq:threshold-target-demand-program} admits a polynomial-time \((1 + \lfloor\log_2 V\rfloor)\)-approximation.
\end{proposition}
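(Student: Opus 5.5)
The proposition has two parts, and I would treat them separately.

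\textbf{Hardness of detection.} I would reduce from the decision version of weighted Set Cover, using the deterministic subset queries already used for Theorem~\ref{thm:af-revenue-hardness}(i). Let \(U\) be a ground set with sets \(S_1,\ldots,S_q\), weights \(w_j\), and budget \(b\). Take targets \(T_{S_j}\) revealing the rows in \(S_j\), priced \(p_j=w_j\), plus one extra target \(T_U\) revealing all of \(U\), priced \(p_U=b\). By Proposition~\ref{prop:query-blackwell}, a bundle \(T^c\) Blackwell dominates \(T_U\) iff the join of the partitions is the finest one on \(U\), i.e.\ iff the purchased sets cover \(U\). So \(T_U\) has a violation \(c\in\mathcal C_U\) with \(c\cdot p<p_U\) iff there is a cover of weight \(<b\). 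Repeated copies never help a cover, so the count-vector formulation matches the set-cover instance. The succinct description of these queries gives a polynomial-time reduction, hence no general polynomial-time detector unless \(P=NP\).

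\textbf{Approximation, setup.} I would bucket the integer values into \(L=1+\lfloor\log_2 V\rfloor\) classes \(\Theta^{(k)}=\{\theta: v_\theta\in[2^k,2^{k+1})\}\), \(k=0,\ldots,L-1\). Let \(p^*\) be optimal for Program~\eqref{eq:threshold-target-demand-program}, and let \(R_k\) be the revenue \(p^*\) collects from types in \(\Theta^{(k)}\). Since \(\OPT=\sum_k R_k\), some bucket has \(R_k\geqslant \OPT/L\). The algorithm will compute, for every \(k\), an AF price vector with revenue at least \(R_k\), and output the best of these \(L\) candidates.

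\textbf{Approximation, per-bucket program.} For each bucket \(k\), the per-target sold set should be fixed in advance. For target \(i\), let \(\Theta_i^{\geqslant k}=\{\theta\in\Theta_i: v_\theta\geqslant 2^k\}\) and \(m_i^k=\min_{\theta\in\Theta_i^{\geqslant k}} v_\theta\). I would solve the LP
\[
\max_{p\geqslant 0}\ \sum_i p_i D_i(2^k)\quad\text{s.t.}\quad p_i\leqslant m_i^k,\quad p_i\leqslant c\cdot p\ \ \forall i,\ \forall c\in\mathcal C_i .
\]
Its objective is linear and its constraints are linear. The assumed violation detector is exactly a separation oracle for the infinitely many constraints \(p_i\leqslant c\cdot p\), since it returns a violated \(c\). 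The ellipsoid method of \cite{grotschel1988geometric} therefore solves the LP in polynomial time. The resulting \(p\) is Blackwell AF, and it sells target \(i\) to every type with value at least \(p_i\), so its true revenue is at least the LP value.

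\textbf{Main obstacle.} The hardest step is showing that this LP value is at least \(R_k\). I need a feasible point built from \(p^*\), and the natural candidate \(\min(p^*_i,m_i^k)\) is not obviously Blackwell AF, because truncation can break \(p_i\leqslant c\cdot p\). The first thing I would try is the uniform truncation \(p'_i=\min(p^*_i,2^{k+1})\). The map \(x\mapsto\min(x,C)\) is monotone and subadditive on \(\mathbb R_+\), so \(p'_i\leqslant\min(c\cdot p^*,C)\leqslant c\cdot p'\) whenever \(c\neq 0\), and \(p'\) stays AF. I would then compare its revenue on buckets \(\geqslant k\) against \(R_k\). Integrality of the values is what should let the constants close at exactly \(L\) rather than \(2L\). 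If that accounting fails, the fallback is to sum the uniform-price candidates \(2^kD(2^k)\) directly against the layered bound \(\OPT\leqslant\sum_k 2^k\sum_i D_i(2^k)\), which holds for integer values.
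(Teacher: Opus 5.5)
Your hardness reduction is fine. It is the same weighted Set Cover reduction the paper uses. The paper encodes each set as a binary experiment revealing whether \(\omega\in S_j\), and the target as revealing whether \(\omega=\bot\). Your full-revelation subset queries on \(\{\varnothing\}\cup\{\{u\}:u\in U\}\) work equally well for the per-target detection question.

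Your primary approximation argument fails at exactly the step you flagged. The claim that the bucket-\(k\) LP has value at least \(R_k\) is false. Take one target with two types, \(v=2\) (mass \(\varepsilon\)) and \(v=3\) (mass \(1\)). Then \(V=3\), the number of buckets is \(1+\lfloor\log_2 V\rfloor=2\), and \(\OPT=R_1=3\) (price \(3\)). But the cap \(p_i\leqslant m_i^1=2\) limits the LP to \(2(1+\varepsilon)<3\). Your candidate \(\min(p^*_i,2^{k+1})=3\) violates that same cap, so it is not LP-feasible. The feasible truncation \(\min(p^*_i,2^k)\) only guarantees \(R_k/2\); types with values \(2^k\) and \(2^{k+1}-1\) show this factor is essentially attained. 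So integrality cannot rescue a per-bucket comparison, and this accounting yields \(2(1+\lfloor\log_2V\rfloor)\) at best.

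Separately, the paper's detector only decides whether some \(c\in\mathcal C_i\) with \(c\cdot p<p_i\) exists. It does not return \(c\), so treating it as a separation oracle is an extra assumption. The ellipsoid method is also unnecessary. Your LP's feasible set has a coordinatewise maximum, the largest AF vector below \(m^k\). The paper computes that vector coordinate by coordinate by binary search, using only yes/no queries.

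Your fallback, however, is correct and is a complete proof on its own. The uniform vector with all entries \(2^k\) is feasible for Program~\eqref{eq:threshold-target-demand-program}, and it also satisfies your cap. Every \(c\in\mathcal C_i\) is nonzero, because the uninformative experiment has value \(0<v_\theta\) and so cannot dominate a target. Hence \(c\cdot p\geqslant 2^k=p_i\). Its revenue is \(2^k\sum_i D_i(2^k)\), and
\[ \sum_{k=0}^{\lfloor\log_2 V\rfloor}2^k\sum_i D_i(2^k)=\sum_\theta\lambda_\theta\bigl(2^{\lfloor\log_2 v_\theta\rfloor+1}-1\bigr)\geqslant\sum_\theta\lambda_\theta v_\theta\geqslant\OPT , \]
so the best of these uniform prices is a \((1+\lfloor\log_2V\rfloor)\)-approximation.

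The paper takes a different route. It sets \(p^{(k)}_i=\min\{v_\theta:\theta\in\Theta_i,\ v_\theta\geqslant 2^k\}\), which is exactly your \(m_i^k\). It then takes the largest AF vector below \(p^{(k)}\) and lower-bounds its revenue through the AF truncation \(\min\{p^\star_i,2^k\}\), truncating at \(2^k\) rather than \(2^{k+1}\). Finally it sums layers over the types served by \(p^\star\).

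The uniform vector \(2^k\mathbf 1\) lies below that largest AF vector, so the paper's candidates dominate uniform pricing level by level. They can earn more on particular instances, but the worst-case guarantee is identical. Your fallback thus shows that the stated ratio holds even without the detection hypothesis; detection is only needed to compute the refined price vectors.
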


The assumption that the threshold values are integers is without loss of generality, since the values can be scaled without changing the approximation ratio, and is naturally discrete when measured in currency units. 

We next give two realistic data product families where the detection problem is polynomial time solvable. The first is the \textbf{interval query}. For example, a lender may need to buy a dataset that contains all applicants in a certain income range. Formally, let every target query be an interval $I = [a, b] \subseteq \mathbb{R}$. The experiment $T_I$ reveals the fields in $I$. A bundle synthesizes $T_I$ if and only if the union of purchased intervals covers $I$. The following proposition shows that the Blackwell AF violation detection problem can be solved by a shortest path algorithm, thus giving a polynomial time approximation algorithm for interval menus.

\begin{proposition} \label{prop:interval-shortest-path}
  For an interval menu, whether a Blackwell AF violation exists can be solved by a shortest path algorithm in time $O(M^2\log M)$.
\end{proposition}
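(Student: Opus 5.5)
Given a price vector $p$ and a target interval $I_i=[a_i,b_i]$, the plan is to compute the minimum total price of a bundle of menu intervals whose union covers $I_i$, and compare it with $p_i$. A violation exists exactly when, for some $i$, this minimum is strictly less than $p_i$. By the characterization stated just before the proposition, $c\in\mathcal C_i$ iff the union of purchased intervals covers $I_i$. We may restrict to $0/1$ count vectors: repeated copies of a deterministic interval query reveal nothing new, so they only raise the price (prices are nonnegative). We may also ignore intervals disjoint from $I_i$, since removing them keeps the cover and weakly lowers cost. So detection reduces to a minimum-weight interval cover problem for each target.

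To solve minimum-weight cover by shortest paths, sort the $2M$ endpoints and build a directed graph whose nodes are endpoint coordinates. For each menu interval $I_j=[a_j,b_j]$, add an edge of weight $p_j$ from every node $x$ with $a_j\le x\le b_j$ to node $b_j$. Also add zero-weight backward edges from each coordinate to the preceding one, which model overlap: once a point is covered we can move left for free. Then $I_i$ is covered by a set of intervals of total cost $C$ iff there is a path from $a_i$ to a node $\ge b_i$ of cost at most $C$.

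\emph{Soundness} (path gives cover): every forward edge lies inside a purchased interval, and backward edges revisit covered points. So the set of intervals used on the path covers $[a_i,\text{end}]$ by induction along the path. \emph{Completeness} (cover gives path): given a minimal cover, sort its intervals by left endpoint. They chain, each starting at or before the previous right end, so the path follows their forward edges, stepping back where needed. The cost is the sum of prices.

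To reach the stated complexity, replace the quadratic "every node $x$" edges by attaching each interval only at its left endpoint. The zero-weight backward edges let a path from any reached coordinate $x\ge a_j$ walk back to $a_j$ for free. This leaves $O(M)$ nodes and $O(M)$ edges. Dijkstra's algorithm applies since all weights are nonnegative, and runs in $O(M\log M)$ per source. Running it from each $a_i$ gives all minimum cover costs in $O(M^2\log M)$, after which we test $\min\operatorname{cost}_i<p_i$. Note the target interval $I_i$ itself lies on the menu, so the minimum includes the trivial bundle; a violation therefore corresponds to a strictly cheaper cover using other intervals. I expect the main care point to be handling closed endpoints and boundary touching correctly, so that a cover is not considered to leave a gap, and checking that free backward moves never create covers that do not exist. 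This holds because backward moves only traverse already-covered territory, which is itself the induction hypothesis in the soundness argument.
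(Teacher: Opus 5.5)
Your proposal is correct and is essentially the paper's proof. Both build the graph on sorted endpoints with one edge $a_j \to b_j$ of cost $p_j$ per menu interval plus zero-cost edges back to the previous endpoint, run Dijkstra from each target's left endpoint to get the cheapest cover, and compare it with $p_i$, for $O(M^2\log M)$ in total. Your soundness induction is more explicit than the paper's one-line ``conversely'', but state its invariant as ``the used intervals cover $[a_i, \text{rightmost point reached so far}]$'', because an initial backward move from $a_i$ visits points that are not yet covered.
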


The second family is a \textbf{fixed-dimensional quality-based menu}. For example, a model service may be graded by reasoning, coding, vision, and safety quality. Formally, let every target product be a quality vector $x \in \mathcal L$, where $\mathcal L$ is the set of quality vectors. The experiment $T_x$ reveals the data product with quality $x$. Higher coordinates mean a more informative product, so $T_x \succeqB T_y$ exactly when $x \geqslant y$ coordinatewise, and \( T_x \otimes T_y \simB T_{x \vee y}, \) where $x \vee y$ is the coordinatewise maximum. The following proposition shows that the Blackwell AF violation detection problem can also be solved by a shortest path algorithm, thus giving a polynomial time approximation algorithm for fixed-dimensional quality-based menus.

\begin{proposition} \label{prop:quality-based-detection}
  For fixed dimension $d$, whether arbitrage exists in a quality-based menu can be solved by a shortest path algorithm in time $O(M^{d+1}\log M)$.
\end{proposition}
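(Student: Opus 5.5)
We reduce violation detection to a shortest path computation on the quality lattice generated by the menu. Let the target products be $x^{(1)},\ldots,x^{(M)} \in \mathcal L \subseteq \mathbb R^d$ with prices $p_1,\ldots,p_M$. Since $T_x \otimes T_y \simB T_{x\vee y}$, induction shows that any bundle $c$ satisfies $T^c \simB T_{\bigvee_{j: c_j>0} x^{(j)}}$. Repeated copies add nothing, so only the support of $c$ matters, and with nonnegative prices we may take $c \in \{0,1\}^M$. By the coordinatewise characterization of $\succeqB$, $c \in \mathcal C_i$ if and only if $\bigvee_{j\in \operatorname{supp}(c)} x^{(j)} \geqslant x^{(i)}$. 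A violation for target $i$ is therefore a set $S$ whose join dominates $x^{(i)}$ and whose total price is less than $p_i$. We must exclude the trivial bundle $S=\{i\}$, or compare against it with strict inequality, which is automatic.

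Construct the graph as follows. Every join of a subset of menu vectors has each coordinate equal to some $x^{(j)}_k$. So the relevant states are vectors in the grid $L = \prod_{k=1}^d \{0\} \cup \{x^{(j)}_k : j \in [M]\}$, which has at most $(M+1)^d = O(M^d)$ nodes. Add a directed edge from $y$ to $y \vee x^{(j)}$ with weight $p_j$ for each node $y$ and each product $j$. This gives $O(M^{d+1})$ edges. A path from the bottom node $0$ corresponds to buying its edge labels, and its endpoint is the join of the purchased vectors. Conversely, any set $S$ is realized by a path of cost at most $\sum_{j\in S} p_j$.

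Run Dijkstra once from $0$, which takes $O(|E| \log |V|) = O(M^{d+1}\log M)$ time because $\log(M^d)=O(\log M)$ for fixed $d$. For each target $i$, compute $\min\{\operatorname{dist}(y) : y \in L,\ y \geqslant x^{(i)}\}$. A naive scan over all $y$ for each $i$ costs $O(M^{d+1})$, which stays within budget. Alternatively, a suffix-minimum (upward closure) pass over the grid in $O(d M^d)$ time gives each target's value in $O(1)$. A violation exists if and only if this minimum is less than $p_i$ for some $i$. The bundle $\{i\}$ itself costs exactly $p_i$, so it never triggers a false positive.

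The main point to verify is correctness of the collapse from count vectors to grid nodes. This needs two facts: that $\otimes$ of quality experiments is Blackwell equivalent to the join, which is stated as part of the model, and that path costs can only overcount, since revisiting a product on a path adds nonnegative weight. Both directions then give equality of the minimum bundle cost with the shortest-path distance. The remaining care is the grid-size bound, which holds because coordinates of joins are drawn from the $M$ given values, so joins never leave $L$.
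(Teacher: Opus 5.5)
\textbf{Same approach, but the bottom node $0$ breaks correctness as written.} Your reduction is the paper's: a graph on joins of menu vectors, edges $y \to y \vee x^{(j)}$ of weight $p_j$, one Dijkstra run from the bottom node, then a minimum over nodes dominating each target, with the same $O(M^{d+1}\log M)$ accounting. Your suffix-minimum pass is a harmless refinement of the paper's $O(MN)$ scan. The step that fails is using $0$ as the bottom node while allowing $\mathcal L \subseteq \mathbb R^d$. The endpoint of a path is then $0 \vee \bigvee_{j \in S} x^{(j)}$, not the join of the purchased vectors, and the empty path ends at $0$. This gives false positives. Take $d = 1$ and a single product $x^{(1)} = -1$ at price $p_1 = 1$. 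The empty path reaches $0 \geqslant x^{(1)}$ at cost $0 < p_1$, so you report a violation. In fact every nonempty bundle consists of copies of product $1$, and the uninformative empty bundle does not dominate $T_{x^{(1)}}$. Even with nonnegative coordinates, a product with $x^{(i)} = 0$ is declared dominated by the empty bundle. That is right only if $T_0$ is uninformative, which the model does not assume.

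\textbf{The fix.} The paper adjoins, in each coordinate $r$, a formal value $\bot_r$ below all the $x_{jr}$, standing for ``no product purchased in coordinate $r$.'' Then $\bot \vee x^{(j)} = x^{(j)}$ and $\bot$ never dominates a target. Using $-\infty$, or any number strictly below $\min_j x^{(j)}_r$, works equally well. With that replacement, the rest of your argument goes through unchanged:
\begin{itemize}
\item the reduction of bundles to supports;
\item the equality of minimum bundle cost with shortest-path distance;
\item the $(M+1)^d$ node bound;
\item the running-time analysis.
\end{itemize}
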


\paragraph{Structures with dynamic programming solutions.} To choose buyer types to receive nonempty information efficiently, a natural idea is to use dynamic programming under some hierarchical structure of the target experiments. The first structure is \textbf{tree-structured dataset menus}. In real applications, the demand of data buyers can be organized as a hierarchy such as national, regional, city, and neighborhood data. We can model such menus by trees.

Let \(U\) be a finite set of elementary fields (e.g., all rows in a database), and the products form a tree-structured family \(\mathcal F \subseteq 2^U\), represented as a rooted tree with node \(u\) corresponding to \(S_u \in \mathcal F\). The children \(\mathrm{Ch}(u)\) are disjoint subsets whose union is \(S_u\), and \(T_u\) reveals exactly the fields in \(S_u\). The following proposition (and its proof) gives a natural dynamic programming algorithm for tree-structured dataset menus.

\begin{proposition} \label{prop:tree-dp}
  For a tree-structured dataset menu, suppose all buyer values lie in $\{1, \ldots, V\}$, then Program~\eqref{eq:threshold-target-demand-program} can be computed in time $O(|\mathcal F|V^3)$ by dynamic programming.
\end{proposition}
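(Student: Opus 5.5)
We solve Program~\eqref{eq:threshold-target-demand-program} for a tree family by reducing the Blackwell AF constraints to a local form and then running a bottom-up dynamic program over the tree with prices restricted to $\{0,1,\ldots,V\}$.

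\textbf{Step 1: characterize $\mathcal C_u$.} Since products are deterministic subset queries, Proposition~\ref{prop:query-blackwell} gives $T^c \succeqB T_u$ if and only if $S_u \subseteq \bigcup_{w: c_w>0} S_w$. In a laminar family, the sets $S_w$ meeting $S_u$ are either ancestors of $u$ (which contain $S_u$) or descendants of $u$. Taking one copy of each product suffices, so the constraints of~\eqref{eq:threshold-target-demand-program} for target $u$ reduce to two families:
\begin{itemize}
\item Blackwell monotonicity: $p_u \leqslant p_w$ for every ancestor $w$ of $u$.
\item Cover subadditivity: $p_u \leqslant \sum_{w \in \mathcal A} p_w$ for every antichain $\mathcal A$ of proper descendants of $u$ that covers $S_u$, i.e.\ every cut of the subtree below $u$.
\end{itemize}
Define the minimum cut cost $\kappa(u) = \min\bigl(p_u, \sum_{w \in \mathrm{Ch}(u)} \kappa(w)\bigr)$, with $\kappa(u)=p_u$ at leaves. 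The cover constraint for all descendants is then equivalent to the local condition $p_u \leqslant \sum_{w \in \mathrm{Ch}(u)} \kappa(w)$ at every internal node. Ancestor monotonicity is equivalent to the local condition $p_w \leqslant p_{\mathrm{parent}(w)}$. Consequently $\kappa(u) = p_u$ whenever all constraints hold. Combined with monotonicity, feasibility is exactly: for every internal node $u$, $\max_{w} p_w \leqslant p_u \leqslant \sum_{w \in \mathrm{Ch}(u)} p_w$, where the max ranges over the children $w$ of $u$.

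\textbf{Step 2: integrality of prices.} The objective $p_u D_u(p_u)$ is piecewise linear and increasing between consecutive thresholds $v_\theta$. Given an optimal $p$, round every price down to the largest value in $\{0\} \cup \{v_\theta\} \subseteq \{0,\ldots,V\}$ not exceeding it. This rounding does not lower revenue. It preserves monotonicity because rounding down is order preserving. It also preserves subadditivity, since $\lfloor x \rfloor \leqslant \sum \lfloor y_i\rfloor$ holds when the allowed values are integers and $x \leqslant \sum y_i$. This step needs care: I would round to the integer floor rather than the threshold floor, and check that the revenue at the integer floor is still at least the revenue at the original price, because $D$ only changes at integer points. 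So we may assume $p \in \{0,\ldots,V\}^{\mathcal F}$.

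\textbf{Step 3: the DP.} Let $f_u(x)$ be the best revenue in the subtree of $u$ given $p_u = x$. Then
\[
f_u(x) = x D_u(x) + \max\Bigl\{\sum_{w} f_w(y_w) : y_w \leqslant x,\ \sum_{w} y_w \geqslant x\Bigr\},
\]
where $w$ ranges over $\mathrm{Ch}(u)$. Merge the children one at a time using a table $h(s)$ indexed by the partial sum $s$, capped at $V$ since only $s \geqslant x$ with $x \leqslant V$ matters. Each merge step maximizes over pairs $(s, y)$ with $y \leqslant x$, costing $O(V^2)$ per child per value of $x$, hence $O(V^3)$ per tree edge. This gives $O(|\mathcal F| V^3)$ in total, and the answer is $\max_x f_{\mathrm{root}}(x)$.

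\textbf{Main obstacle.} The most delicate point is Step 1, showing that the local inequalities imply the constraint for every cover, including covers that mix ancestors of descendants. This is handled by the laminar structure, since any cover can be coarsened to a cut, and by an induction on $\kappa$. Leaves need no subadditivity constraint.
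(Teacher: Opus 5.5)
Your Steps 1 and 3 follow the paper's route. The paper also reduces Blackwell AF on the tree to the local constraints $p_w \leqslant p_u$ for $w \in \mathrm{Ch}(u)$ and $p_u \leqslant \sum_{w \in \mathrm{Ch}(u)} p_w$; your $\kappa$ recursion is a repackaging of its induction over subtrees. It then runs the same knapsack-style merge over children with partial sums capped at $V$. One small point in Step 3: state the leaf case $f_u(x) = xD_u(x)$ explicitly, since your recursion read literally imposes $0 \geqslant x$ at leaves.

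The genuine problem is Step 2, which fails as written. Rounding down does not preserve the subadditivity constraint. The implication $x \leqslant \sum_i y_i \Rightarrow \lfloor x\rfloor \leqslant \sum_i \lfloor y_i\rfloor$ is false: take $x = 1$, $y_1 = y_2 = 1/2$. Rounding to the threshold floor fails the same way. Concretely, let $u$ have two leaf children $w_1, w_2$, and let the only buyer type target $u$ with value $1$. Then $p_u = 1$, $p_{w_1} = p_{w_2} = 1/2$ is feasible and optimal. Rounding down gives $p_{w_1} = p_{w_2} = 0$, so the buyer can synthesize $S_u$ for free, and the rounded vector is infeasible. Your own observation that $pD_u(p)$ increases between consecutive thresholds is also an argument for rounding up, not down.

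The fix is to round up. The ceiling is monotone and subadditive, since $\lceil x\rceil \leqslant \lceil \sum_i y_i\rceil \leqslant \sum_i \lceil y_i\rceil$, so both local constraints survive. Because all values are integers, $v_\theta \geqslant x$ holds iff $v_\theta \geqslant \lceil x\rceil$. Hence $D_u(\lceil x\rceil) = D_u(x)$ and $\lceil x\rceil D_u(\lceil x\rceil) \geqslant xD_u(x)$ at every node. Next replace each price by $\min\{p, V\}$. This map is monotone and preserves $p_u \leqslant \sum_w p_w$: if some $p_w \geqslant V$, the right side is already at least $V$. It cannot lose revenue, because prices above $V$ sell nothing. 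This gives an optimal vector in $\{0,\ldots,V\}^{\mathcal F}$ and also shows the supremum is attained. The paper only asserts this integrality without proof. With the corrected rounding, your argument is complete and supplies a justification the paper omits.
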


The second structure is \textbf{data products with Gaussian noise}. In real applications, a data product may have privacy restrictions, and the seller may add Gaussian noise to ensure differential privacy, or the seller may add Gaussian noise to a model's predictions to create different accuracy levels. In these cases, the products can be modeled as Gaussian experiments with different precision matrices. Formally, a product $G_{J}$ with precision matrix $J \in \PD^d$ releases $Y = \theta^\star + \varepsilon$, where $\theta^\star \in \mathbb R^d$ is the original data and $\varepsilon \sim \mathcal N(0, J^{-1})$. The setting of \citet{chen2019towards} is the special case $J = dxI_d$, where $x$ is the scalar precision.

Consider a finite menu $G_{J_1},\ldots,G_{J_M}$, where each $J_i \in \PD^d$. By Proposition~\ref{prop:gaussian-matrix}, a price vector $p \in \mathbb R_+^M$ is Blackwell AF if and only if each target's posted price must equal the cheapest bundle cost that reaches its precision:
\begin{equation} \label{eq:matrix-af-counts}
  p_i = \min_{\substack{c \in \mathbb Z_+^M: \sum_{j = 1}^M c_j J_j \succeq J_i}} \sum_{j = 1}^M c_jp_j, \ \forall i \in [M].
\end{equation}

Solving the matrix variable problem is difficult since the Loewner order is only a partial order. To handle this, we reduce matrix precision to a one-dimensional measure of accuracy. Fix a positive definite weight matrix $W\in\PD^d$, and define the weighted accuracy of product $i$ by $s_i = \operatorname{tr}(WJ_i)>0$. Intuitively, $W$ says which directions matter when we summarize a matrix precision by a single number. After sorting the products so that $s_1 \leqslant \cdots \leqslant s_M$, we call a price vector $p$ \emph{$W$-regular} if $0 \leqslant p_1 \leqslant \cdots \leqslant p_M$ and $p_1 / s_1 \geqslant p_2 / s_2 \geqslant \cdots \geqslant p_M / s_M$. 

\begin{proposition} \label{prop:spectral-gauge-af}
  For every \(W\in\PD^d\), every \(W\)-regular price vector satisfies~\eqref{eq:matrix-af-counts}. Moreover, the revenue maximizing \(W\)-regular price vector \(p^W\) can be computed in polynomial time by dynamic programming.
\end{proposition}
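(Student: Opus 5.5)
The proof splits into two parts: (a) every $W$-regular price vector satisfies~\eqref{eq:matrix-af-counts}, and (b) the revenue-maximizing $W$-regular vector can be found by dynamic programming.

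For part (a), the inequality $p_i \geqslant \min_c c\cdot p$ is trivial: take $c = e_i$. So it suffices to show that every count vector $c$ with $\sum_j c_j J_j \succeq J_i$ satisfies $\sum_j c_j p_j \geqslant p_i$.

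The first step passes from the Loewner order to the scalar gauge. Since $W \succ 0$, the map $J \mapsto \operatorname{tr}(WJ)$ is monotone in the Loewner order: $\operatorname{tr}(W(X-Y)) = \operatorname{tr}(W^{1/2}(X-Y)W^{1/2}) \geqslant 0$ whenever $X \succeq Y$. Hence any such $c$ satisfies $\sum_j c_j s_j \geqslant s_i$.

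The second step is a scalar subadditivity argument that uses both regularity conditions. Split the bundle according to $j \geqslant i$ versus $j < i$.
\begin{itemize}
\item If some $j \geqslant i$ has $c_j \geqslant 1$, then price monotonicity gives $\sum_j c_j p_j \geqslant p_j \geqslant p_i$.
\item Otherwise every purchased index satisfies $j < i$. For these, $s_j \leqslant s_i$, and the decreasing ratio condition gives $p_j/s_j \geqslant p_i/s_i$. Therefore
\[
\sum_j c_j p_j = \sum_j c_j s_j \frac{p_j}{s_j} \geqslant \frac{p_i}{s_i}\sum_j c_j s_j \geqslant p_i .
\]
\end{itemize}
Note that in the second case we do not need the bundle's trace to equal anything specific; the inequality $\sum_j c_j s_j \geqslant s_i$ suffices. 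Ties $s_j = s_i$ are handled consistently by the sorted order.

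For part (b), the threshold structure (Proposition~\ref{prop:threshold-target-lp}) gives revenue $\sum_i p_i D_i(p_i)$. As in Proposition~\ref{prop:tree-dp}, we restrict prices to a candidate set $P$ consisting of the threshold values $v_\theta$. An optimal $p_i$ can always be raised until either it hits a demand breakpoint or one of the regularity constraints becomes tight. The constraints are $p_{i-1} \leqslant p_i$ and $p_i \leqslant p_{i-1}\, s_i/s_{i-1}$. A tight constraint therefore ties $p_i$ to a neighbour through multiplicative chains of the form $v \cdot s_k/s_l$. So we take $P = \{ v_\theta\, s_k/s_l \}$, which has polynomial size.

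The DP then processes products in sorted order with state $(i, p_i)$, where $p_i$ ranges over $P$. The recurrence is
\[
F(i, q) = q\, D_i(q) + \max\{F(i-1, q') : q' \leqslant q \leqslant q'\, s_i/s_{i-1}\},
\]
which runs in time polynomial in $M$ and $|P|$.

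I expect the main obstacle to be justifying that the candidate set suffices, that is, that optimal chains never need compound ratios $s_k/s_l \cdot s_{k'}/s_{l'}$. If this closure argument fails, the fallback is to discretize prices to a grid (integer values in $\{1,\dots,V\}$, as in Proposition~\ref{prop:tree-dp}), and accept polynomial time in $V$ for integer data.
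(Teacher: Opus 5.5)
Your part (a) is correct and is essentially the paper's argument: $J\mapsto\operatorname{tr}(WJ)$ is Loewner-monotone, and then either some purchased product is at least as accurate as $i$ (use monotonicity) or all purchased products are less accurate (use the decreasing price-per-accuracy condition). Splitting by index rather than by $s_j$ handles ties cleanly. Your recurrence in part (b) is also the paper's.

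The gap is the step you flag yourself, and it is the substantive half of part (b). You never prove that restricting prices to a finite, polynomial-size set loses no revenue. The heuristic ``raise $p_i$ until a constraint becomes tight'' does not deliver this, because tight constraints can chain in alternation. For example, with $p_1 = v$, the chain $p_2 = p_1 s_2/s_1$, $p_3 = p_2$, $p_4 = p_3 s_4/s_3$ yields exactly the compound ratio $v(s_2/s_1)(s_4/s_3)$ you worried about, and nothing in your sketch excludes such optima. The fallback does not prove the statement either. An integer grid is only pseudo-polynomial in $V$, and it is not exact, because optimal $W$-regular prices need not be integral for integral values. Take two products with $s_2/s_1 = 13/10$, one type of value $2$ for product $1$, one of value $4$ for product $2$, and equal masses: the unique optimum is $(p_1,p_2) = (2, 2.6)$.

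The missing idea is an extremal choice combined with a one-sided block perturbation.
Assume $s_1<\cdots<s_M$ (equal accuracies force equal prices). Let $\mathcal V$ be the set of buyer values and $\mathcal P_i = \{0\}\cup\mathcal V\cup\{(s_i/s_k)v : k\in[M],\ v\in\mathcal V\}$. Among optimal $W$-regular vectors in $[0,v_{\max}]^M$, pick one maximizing $\sum_j p_j$.

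Suppose some $p_i\notin\mathcal P_i$. Take the maximal block $[a,i]$ on which $p_j/s_j$ is constant and the maximal block $[i,b]$ on which $p_j$ is constant. Set $p'_j = p_j + \varepsilon s_j/s_i$ on $[a,i]$ and $p'_j = p_j+\varepsilon$ on $(i,b]$.

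Raising a price only threatens the ratio constraint with its left neighbour and the monotonicity constraint with its right neighbour. So the forced propagation follows ratio-ties leftward and equality-ties rightward, and it never alternates. The two threatened constraints at the block ends ($a-1$ versus $a$, and $b$ versus $b+1$) are strict by maximality of the blocks, so $p'$ is $W$-regular for small $\varepsilon$.

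No modified price lies in $\mathcal V$, since otherwise $p_i = v$ or $p_i = (s_i/s_j)v$. Hence every demand is unchanged and revenue does not decrease, while $\sum_j p_j$ strictly increases — a contradiction.

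Therefore only single ratios $s_i/s_k$ are needed, and $|\mathcal P_i| = O(M|\mathcal V|)$. Your recurrence run over $\mathcal P_i$ (your set $P$ together with $0$ contains it) is then polynomial.
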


The intuition behind $W$-regularity is that it transforms the constraint Equation~\eqref{eq:matrix-af-counts} into a one-dimensional problem, which can be naturally solved by dynamic programming. However, the best $W$-regular price vector may not be optimal for the original matrix problem. For two targets $i$ and $k$, define $\kappa_{ik}:=\left\lceil \lambda_{\max}\!\left(J_k^{-1 / 2}J_iJ_k^{-1 / 2}\right)\right\rceil$, meaning that $\kappa_{ik}$ copies of product $k$ are enough to synthesize product $i$. For a fixed $W$, define $\Gamma_W = \max_{i, k} \left(\kappa_{ik} \min\{1, s_k / s_i\}\right)$. This number measures how well the weighted accuracies $s_i$ capture the true matrix synthesis relations.

\begin{proposition} \label{prop:spectral-gauge-approx}
  If $\OPT$ is the optimal revenue over all price vectors satisfying~\eqref{eq:matrix-af-counts}, then $p^W$ given by Proposition~\ref{prop:spectral-gauge-af} achieves revenue at least $\OPT / \Gamma_W$.
\end{proposition}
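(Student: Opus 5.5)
Throughout, let $p^\star$ be an optimal price vector for Program~\eqref{eq:threshold-target-demand-program} restricted to the Gaussian menu, so $p^\star$ satisfies~\eqref{eq:matrix-af-counts} and earns $\OPT=\sum_i p^\star_i D_i(p^\star_i)$. The plan is to build from $p^\star$ a $W$-regular vector $q$ with $q_i \geqslant p^\star_i/\Gamma_W$ and $q_i \leqslant p^\star_i$ for every $i$. Since $D_i$ is nonincreasing, $q_i \leqslant p^\star_i$ gives $q_i D_i(q_i) \geqslant (p^\star_i/\Gamma_W) D_i(p^\star_i)$. Summing over $i$, the revenue of $q$ is at least $\OPT/\Gamma_W$. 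Because $p^W$ is revenue-maximizing among $W$-regular vectors (Proposition~\ref{prop:spectral-gauge-af}), the conclusion follows.

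\emph{Construction.} With products sorted so that $s_1\leqslant\cdots\leqslant s_M$, I would take the lower envelope
\[
q_i=\min_{k}\, p^\star_k \max\{1, s_i/s_k\}.
\]
This is the cheapest price that is both monotone in $s$ and has $p/s$ nonincreasing. Each map $s\mapsto p^\star_k\max\{1,s/s_k\}$ is nondecreasing in $s$, and after dividing by $s$ it is nonincreasing. Both properties survive taking a minimum over $k$. Hence $q$ is $W$-regular, and choosing $k=i$ gives $q_i\leqslant p^\star_i$.

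\emph{Lower bound.} It remains to show $p^\star_k\max\{1,s_i/s_k\}\geqslant p^\star_i/\Gamma_W$ for every pair $i,k$. By the definition of $\kappa_{ik}$,
\[
\kappa_{ik}J_k\succeq \lambda_{\max}\bigl(J_k^{-1/2}J_iJ_k^{-1/2}\bigr)J_k\succeq J_i .
\]
So the count vector $\kappa_{ik}e_k$ is feasible in~\eqref{eq:matrix-af-counts}, which yields $p^\star_i\leqslant \kappa_{ik}p^\star_k$. Then
\[
p^\star_k\max\{1,s_i/s_k\}\geqslant \frac{p^\star_i}{\kappa_{ik}}\max\{1,s_i/s_k\}=\frac{p^\star_i}{\kappa_{ik}\min\{1,s_k/s_i\}}\geqslant\frac{p^\star_i}{\Gamma_W}.
\]

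\emph{Expected obstacle.} The main check is whether the exact constraint~\eqref{eq:matrix-af-counts} (an equality) is needed, or only its inequality direction $p_i\leqslant c\cdot p$. Only the inequality is used above. Moreover, $q$ is itself feasible by Proposition~\ref{prop:spectral-gauge-af}, so no further repair is required.

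Ties in the $s_i$ need care, since $W$-regularity then forces equal prices on tied products. For tied $s_i=s_k$ the envelope formula gives $q_i=q_k$ automatically: the term $\max\{1,s_i/s_k\}$ depends on $i$ only through $s_i$. So ties should cause no difficulty.
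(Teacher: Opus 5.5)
Your proposal is correct and matches the paper's proof essentially line for line. It uses the same lower envelope $\min_k p^\star_k\max\{1,s_i/s_k\}$, the same check of $W$-regularity and of $p^\star_i/\Gamma_W\leqslant q_i\leqslant p^\star_i$ via $\kappa_{ik}J_k\succeq J_i$, and the same monotone-demand revenue comparison against $p^W$; your remarks on ties and on needing only the inequality direction of \eqref{eq:matrix-af-counts} are accurate.
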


This result generalizes the setting of \citet{chen2019towards}, where $J_i = dx_iI_d$. Take $W = I_d$, then $s_i = d^2x_i$, $\kappa_{ik} = \lceil x_i / x_k\rceil$, and hence $\Gamma_W \leqslant 2$, Proposition~\ref{prop:spectral-gauge-approx} recovers the $2$-approximation in \cite{chen2019towards}. In general matrix setting, the approximation ratio is governed by $\Gamma_W$. Thus the seller should choose appropriate weight matrix $W$ so that the weighted accuracies $s_i = \operatorname{tr}(WJ_i)$ reflect the true synthesis relations well.

\section{Conclusion and Future Work}

We study optimal data pricing when buyers value data through Bayesian decision making and may arbitrage by combining multiple purchases. Modeling data products as statistical experiments lets us unify query pricing, model pricing, and information selling. We show that general AF information selling problem is computationally hard, but admits a practical additive FPTAS when the numbers of types and actions are small constants. We also identify threshold utilities as a tractable regime in which AF reduces to Blackwell dominance, recovering existing query- and model-pricing conditions and enabling efficient algorithms for structured menus.

Several directions are worth exploring. First, it would be natural to extend the framework to interactive information selling mechanisms in \cite{babaioff2012optimal,chen2020selling}. Second, an important algorithmic direction is to develop approximation schemes for richer structured menus (e.g., when buyers need high-dimensional interval queries or overlapping datasets) and richer utility classes.

\bibliographystyle{plainnat}
\bibliography{reference}

\appendix

\section{Omitted Proofs in Section~\ref{sec:preliminary}} \label{app:omitted-proofs-sec2}

\begin{proof}[Proof of Proposition~\ref{prop:bundle-revelation}]
 Let the original menu be $\mathcal M = \{(E_j, t_j)\}_{j = 1}^N$. For each
 type $\theta$, let $B_\theta = (i_1^\theta, \ldots, i_{\ell_\theta}^\theta)$ be the bundle selected by type $\theta$ from $\mathcal M$. Thus $B_\theta \in \arg\max_B\{V_\theta(E_B) - t_B\}$. If several bundles maximize the buyer's utility, the buyer selects one that maximizes the seller's revenue. Define a direct menu $\mathcal M'$ as follows:
 \[ \mathcal M' = \{(E'_\theta, t'_\theta)\}_{\theta \in \Theta}, \quad E'_\theta := E_{B_\theta}, \quad t'_\theta := t_{B_\theta}. \]
 When type $\theta$ chooses its intended entry in $\mathcal M'$, it receives
 exactly the composite experiment and makes exactly the total payment that it
 selected in $\mathcal M$. Consequently, the intended choices in $\mathcal M'$ generate the same expected seller revenue as the original menu.

 It remains to prove that $\mathcal M'$ is AF. This is true since any bundle deviation in $\mathcal M'$ corresponds to a bundle deviation in $\mathcal M$, and the intended entry is already optimal against all deviations in $\mathcal{M}$. Thus $\mathcal M'$ is AF, and the proof is complete.
\end{proof}

\begin{proof}[Proof of Proposition~\ref{prop:obedience-revelation}]
 Fix an AF direct menu \(\mathcal M = \{(E_\theta, t_\theta)\}_{\theta \in \Theta}\), and write \(E_\theta = (S_\theta, K_\theta)\). For each type \(\theta\), fix a tie-breaking rule so that after every signal \(s \in S_\theta\), type \(\theta\) chooses a single action $a_\theta(s) \in \arg\max_{a \in A}\sum_{\omega \in \Omega}\mu(\omega)K_\theta(s \mid \omega)u_\theta(\omega, a)$. For every action \(a \in A\), let $S_a^\theta := \{s \in S_\theta: a_\theta(s) = a\}$. Thus the sets \(\{S_a^\theta\}_{a\in A}\) form a partition of \(S_\theta\). Following the construction in Proposition~1 of \cite{bergemann2018design}, define a new experiment \(E'_\theta=(S'_\theta,K'_\theta)\) for type \(\theta\) by taking
 \[ S'_\theta:=\{a\in A:S_a^\theta\neq\varnothing\}, \quad K'_\theta(a\mid\omega) := \sum_{s \in S_a^\theta} K_\theta(s \mid \omega), \forall a \in S'_\theta, \omega \in \Omega. \]
 Thus the new signal is simply the recommended action label. Equivalently, define the deterministic garbling kernel \(\Gamma_\theta:S_\theta\to\Delta(S'_\theta)\) by $\Gamma_\theta(a\mid s):=\mathbf 1\{s\in S_a^\theta\}$. Then, for every \(a\in S'_\theta\) and \(\omega\in\Omega\),
 \[
 K'_\theta(a\mid\omega)=\sum_{s\in S_\theta}\Gamma_\theta(a\mid s)K_\theta(s\mid\omega),
 \]
 so \(E_\theta\succeqB E'_\theta\). Then we need to show that the new menu \(\mathcal M'\) is AF. As \cite{bergemann2018design} show, $\mathcal{M'}$ preserves the intended type's value. So it suffices to show that any bundle deviation in $\mathcal{M'}$ is a garbling of a bundle deviation in $\mathcal{M}$, which is already suboptimal by the AF property of $\mathcal{M}$. Let $B = (i_1, \ldots, i_\ell)$ be any bundle. Write $E_B = \bigotimes_{r = 1}^{\ell}E_{i_r}, E'_B = \bigotimes_{r = 1}^{\ell}E'_{i_r}$ with signal spaces $S_B = \prod_{r = 1}^{\ell}S_{i_r}, S'_B = \prod_{r = 1}^{\ell}S'_{i_r}$. Define a kernel \(\Gamma_B:S_B\to\Delta(S'_B)\) as:
 \[ \Gamma_B\left((a_1, \ldots, a_\ell) \mid (s_1, \ldots, s_\ell)\right) := \prod_{r = 1}^{\ell}\Gamma_{i_r}(a_r \mid s_r), \]
 for every \((a_1,\ldots,a_\ell)\in S'_B\) and \((s_1,\ldots,s_\ell)\in S_B\). We claim that \(\Gamma_B\) garbles \(E_B\) into \(E'_B\). Indeed, for every \((a_1,\ldots,a_\ell)\in S'_B\) and every \(\omega\in\Omega\),
 \begin{align*}
  &\quad\sum_{(s_1, \ldots, s_\ell) \in S_B}
  \Gamma_B \left((a_1, \ldots, a_\ell) \mid (s_1, \ldots, s_\ell)\right) 
  K_{E_B}(s_1, \ldots, s_\ell \mid \omega) \\
  &= \sum_{(s_1, \ldots, s_\ell) \in S_B}
  \prod_{r = 1}^{\ell}\Gamma_{i_r}(a_r \mid s_r)K_{i_r}(s_r \mid \omega) = \sum_{s_1 \in S_{i_1}} \cdots \sum_{s_\ell \in S_{i_\ell}}
  \prod_{r = 1}^{\ell}\Gamma_{i_r}(a_r \mid s_r)K_{i_r}(s_r \mid \omega) \\
  &= \left(\sum_{s_1 \in S_{i_1}}\Gamma_{i_1}(a_1 \mid s_1)K_{i_1}(s_1 \mid \omega)\right)\cdots
  \left(\sum_{s_\ell \in S_{i_\ell}}\Gamma_{i_\ell}(a_\ell \mid s_\ell)K_{i_\ell}(s_\ell \mid \omega)\right) \\
  &= \prod_{r = 1}^{\ell}\sum_{s_r \in S_{i_r}}\Gamma_{i_r}(a_r \mid s_r)K_{i_r}(s_r \mid \omega) = \prod_{r = 1}^{\ell}K'_{i_r}(a_r \mid \omega)
  = K_{E'_B}(a_1, \ldots, a_\ell \mid \omega).
 \end{align*}
 Hence \(E_B\succeqB E'_B\). Thus \(\mathcal M'\) is AF and responsive. Since prices are unchanged and each intended type's value is unchanged, \(\mathcal M'\) has the same expected seller revenue as \(\mathcal M\).
\end{proof}

\section{Omitted Proofs in Section~\ref{sec:program}} \label{app:omitted-proofs-sec3}

\begin{proof}[Proof of Example~\ref{ex:repeated-purchase-strict}]
 Fix an odd integer \(m = 2r + 1\). For every \(t \geqslant 1\), write $E^{\otimes t}=(S^t, K_t)$, $K_t((s_1, \ldots, s_t) \mid \omega) = \prod_{j = 1}^t K_E(s_j \mid \omega)$. For odd \(t\), define the majority rule \(a_t: S^t \to \{0, 1\}\) by
 \[ a_t(s_1, \ldots, s_t) = \begin{cases}
  1, & \sum_{j = 1}^t s_j \geqslant (t + 1) / 2,\\
  0, & \sum_{j = 1}^t s_j \leqslant (t - 1) / 2.
  \end{cases} \]
 We first show that \(a_t\) is optimal for \(E^{\otimes t}\). Let \(k = \sum_{j = 1}^t s_j\). Under the uniform prior,
 \[ \frac{\Pr(\omega = 1 \mid s_1, \ldots, s_t)}{\Pr(\omega = 0 \mid s_1, \ldots, s_t)} = \left(\frac{1 - \varepsilon}{\varepsilon}\right)^{2k - t}.\]
 Since \(0 < \varepsilon < 1 / 2\), this ratio is greater than one if and only if \(k > t / 2\). Because \(t\) is odd, there is no tie, so the optimal action is exactly the majority rule \(a_t\).

 Let \(q_t\) denote the expected utility from \(a_t\). By symmetry between states \(0\) and \(1\), it suffices to condition on \(\omega=1\). Then the number of correct signals follows \(\mathrm{Bin}(t, 1 - \varepsilon)\), so for \(m = 2r + 1\),
 \[ q_m = \sum_{k = r + 1}^{2r + 1}\binom{2r + 1}{k}(1 - \varepsilon)^k\varepsilon^{2r + 1 - k}. \]

 To compare \(q_m\) and \(q_{m + 2}\), let \(C_m\sim\mathrm{Bin}(2r + 1, 1 - \varepsilon)\) be the number of correct signals among the first \(m\) copies, and let \(Y\sim\mathrm{Bin}(2, 1 - \varepsilon)\) be the number of correct signals among the last two copies. Under \(\omega = 1\), the variables \(C_m\) and \(Y\) are independent. The decisions under \(E^{\otimes m}\) and \(E^{\otimes(m + 2)}\) differ only in the following two cases: (1) \(C_m = r\), in which case the \(m\)-copy majority is wrong and the \((m + 2)\)-copy majority becomes correct exactly when \(Y = 2\); (2) \(C_m = r + 1\), in which case the \(m\)-copy majority is correct and the \((m + 2)\)-copy majority becomes wrong exactly when \(Y = 0\). Hence $q_{m + 2} - q_m = \Pr(C_m = r, Y = 2) - \Pr(C_m = r + 1, Y = 0)$. Using independence and the binomial formulas,
 \begin{align*}
  q_{m + 2} - q_m
  & = \binom{2r + 1}{r}(1 - \varepsilon)^r\varepsilon^{r + 1}(1 - \varepsilon)^2 - \binom{2r + 1}{r + 1}(1 - \varepsilon)^{r + 1}\varepsilon^r\varepsilon^2\\
  & = \binom{2r + 1}{r}(1 - \varepsilon)^{r + 1}\varepsilon^{r + 1}(1 - 2\varepsilon) > 0.
 \end{align*}
 
 To show \(E^{\otimes(m+2)}\succeq_B E^{\otimes m}\), we only need to define the garbling kernel \(\Gamma: S^{m + 2} \to \Delta(S^m)\) by $\Gamma((s_1, \ldots, s_m) \mid (s'_1, \ldots, s'_{m + 2})) = \mathbf 1\{(s_1, \ldots, s_m) = (s'_1, \ldots, s'_m)\}$. If the reverse dominance \(E^{\otimes m}\succeq_B E^{\otimes(m + 2)}\) also held, then the value of every decision problem would be weakly higher under \(E^{\otimes m}\) than under \(E^{\otimes(m + 2)}\), contradicting \(q_{m + 2} > q_m\). Therefore $E^{\otimes(m + 2)} \succ_B E^{\otimes m}$.
\end{proof}

\begin{proof}[Proof of Theorem~\ref{thm:af-revenue-hardness}, part~(i)]
  We reduce from Maximum Independent Set (MIS) on graphs with maximum vertex degree being a fixed constant \(\Delta \geqslant 3\). This problem has no PTAS unless \(P = NP\) \citep{alimonti2000some}. Let \(G = (V, E)\), \(N := |V|\), \(M := |E|\), and let \(\alpha\) be the size of a maximum independent set in \(G\). Since every vertex has degree at most \(\Delta\), we have $M \leqslant \frac{\Delta N}{2}$, $\alpha \geqslant \frac{N}{\Delta + 1}$, and therefore
  \begin{equation} \label{eq:mis-edge-alpha-new}
    M \leqslant \frac{\Delta(\Delta + 1)}2\,\alpha.
  \end{equation}

  \paragraph{Step 1: Construct the information selling instance.} Let $\Omega = \{\varnothing\} \cup \{\{v\}: v \in V\}$ and let the prior be uniform on \(\Omega\), i.e., $\mu(\varnothing) = \mu(\{v\}) = \frac{1}{N + 1}, \forall v \in V$. For every \(R \subseteq V\), define the deterministic subset query \(Q_R\) by $Q_R(\omega) = \omega \cap R, \ \forall \omega \in \Omega$, and let \(E_R\) be the induced experiment. For an edge \(e = \{u, v\}\), write \(E_e = E_{\{u,v\}}\).

  The buyer type space is $\Theta = V \cup E$. Each vertex \(v \in V\) gives one vertex type, and each edge \(e \in E\) gives one edge type. The action space is $A = \{a^0\} \cup \{a^v: v \in V\} \cup \{a^e: e \in E\}$. Fix an arbitrary constant \(\zeta \in (0, 1)\). Choose an integer $C \geqslant \max\left\{2, \frac{20(\Delta + 1)^2}{\zeta^2}\right\}$, and define $a = \frac{1}{3C(N + 1)}$. For a vertex type \(v\in V\), define
  \[ u_v(\omega, a^0) = 1 - \frac{1}{3C}, \forall \omega \in \Omega \quad \text{and} \quad u_v(\omega, a^v) = \begin{cases}
    1, & \omega = \varnothing, \\
    0, & \omega = \{v\}, \\
    1 - \frac{1}{3C}, & \omega = \{w\},\ w \neq v.
  \end{cases} \]
  For every \(a \notin \{a^0, a^v\}\), set \(u_v(\omega, a) = 0\). For an edge type \(e = \{u,v\}\), define
  \[ u_e(\omega, a^0) = \frac23, \forall \omega \in \Omega \quad \text{and} \quad u_e(\omega, a^e)= \begin{cases}
    1, & \omega = \varnothing, \\
    0, & \omega \in \bigl\{\{u\}, \{v\}\bigr\}, \\
    \frac23, & \omega = \{w\},\ w \notin \{u, v\}.
  \end{cases} \]
  For every \(a \notin \{a^0, a^e\}\), set \(u_e(\omega, a) = 0\).

  For a vertex type \(v\), the expected payoff difference between \(a^v\) and \(a^0\) under the prior is $\left(\frac{2}{3C} - 1\right) / (N + 1) < 0$. For an edge type \(e\), the expected payoff difference between \(a^e\) and \(a^0\) under the prior is $-\frac{1}{N + 1} < 0$. Thus every type chooses $a^0$ under the prior. Then we can compute the value of deterministic subset queries. For every \(R\subseteq V\),
  \begin{equation} \label{eq:deterministic-values-new}
    V_v(E_R) = \begin{cases}
      a, & v \in R, \\
      0, & v \notin R,
    \end{cases} \quad V_e(E_R) = \begin{cases}
      Ca, & e \subseteq R, \\
      0, & e \nsubseteq R.
    \end{cases}
  \end{equation}

  We assign type mass \(1 /(N + M)\) to every type. To simplify notation, throughout the proof we use the unnormalized revenue $\widetilde R(\mathcal M) = \sum_{\theta \in \Theta} p_\theta$, and denote \(\widetilde R^*(G)\) as the optimal unnormalized revenue in the instance constructed from \(G\). The normalized revenue is \(\widetilde R(\mathcal M) / (N + M)\). 

  \paragraph{Step 2: Lower bound from an independent set.} Let \(I\subseteq V\) be an independent set. Consider the direct menu
  \[ \mathcal M_I = \{(E_{\{v\}}, a): v \in I\} \cup \{(E_{\emptyset}, 0): v \in V \setminus I\} \cup \{(E_e, Ca): e \in E\}. \]
  We claim that \(\mathcal M_I\) is AF. For a vertex type \(v \in I\), any bundle with positive value must contain some query whose subset contains \(v\). Such a bundle costs at least \(a\), and the intended product costs exactly \(a\). For an edge type \(e = \{u, v\}\), a bundle has positive value for this type only if the union of the purchased subsets contains both \(u\) and \(v\). Since \(I\) is independent, the menu does not contain both singleton queries \(E_{\{u\}}\) and \(E_{\{v\}}\). Hence any such bundle must contain at least one edge query, and therefore costs at least \(Ca\). Therefore \(\mathcal M_I\) is AF and $\widetilde R(\mathcal M_I) = CaM + a|I|$. Taking \(I\) to be a maximum independent set gives
  \begin{equation} \label{eq:mis-lower-bound-new}
    \widetilde R^*(G) \geqslant CaM + a\alpha.
  \end{equation}

  \paragraph{Step 3: Upper bound for an arbitrary AF menu.} Let $\mathcal M = \{(F_\theta, p_\theta)\}_{\theta \in \Theta}$ be any direct AF menu. Since the maximum value of any experiment is at most \(a\) for a vertex type and at most \(Ca\) for an edge type, individual rationality gives $p_v \leqslant a$, $p_e \leqslant Ca$. Define $x_v = \frac{p_v}{a} \in [0,1]$, $\forall v \in V$. Fix a vertex \(v\). Let \(K_v\) be the kernel of \(F_v\), and let \(S_v\) be the set of signals after which action \(a^v\) is weakly better than $a^0$ for type \(v\). Equivalently,
  \[ S_v = \left\{s: \frac{1}{3C}\mu(\varnothing)K_v(s \mid \varnothing) \geqslant \left(1 - \frac{1}{3C}\right)\mu(\{v\})K_v(s \mid \{v\})\right\}. \]
  Let $r_v = K_v(S_v \mid \varnothing)$. For type \(v\), the incremental value of \(F_v\) relative to $a^0$ is
  \[ V_v(F_v) = \sum_{s\in S_v} \left(\frac{1}{3C}\mu(\varnothing)K_v(s \mid  \varnothing)- \left(1 - \frac{1}{3C}\right)\mu(\{v\})K_v(s \mid \{v\})\right) \leqslant a \, K_v(S_v \mid \varnothing) = ar_v. \]
  Since \(p_v \leqslant V_v(F_v)\), we obtain
  \begin{equation} \label{eq:vertex-certificate-new}
    x_v \leqslant r_v.
  \end{equation}
  Moreover, by the definition of \(S_v\), we have
  \begin{equation} \label{eq:vertex-contamination-new}
    \left(1 - \frac{1}{3C}\right)\mu(\{v\})K_v(S_v \mid \{v\}) \leqslant ar_v.
  \end{equation}

  Now fix an edge \(e = \{u,v\}\). Consider the deviation in which type \(e\) buys \(F_u\otimes F_v\). For $w \in \{u, v\}$, define $\rho_w = x_w / r_w$ if $r_w > 0$, and $\rho_w = 0$ if $r_w = 0$. Then $\rho_w \in [0, 1]$ by~\eqref{eq:vertex-certificate-new}. We lower bound the value of
  the deviation by considering the following randomized decision rule: after observing \((s_u, s_v)\), take action \(a^e\) with probability \(\rho_u \rho_v\) if \(s_u \in S_u\) and \(s_v \in S_v\), and take $a^0$ otherwise. Note that the randomization is only for analysis: the buyer's optimal pure action after each signal pair gives payoff at least that of any lottery over pure actions.

  Under state \(\varnothing\), the probability of taking $a^e$ is \(x_u x_v\), so the positive incremental contribution is \(Ca x_u x_v\). Under state \(\{u\}\), its negative contribution
  is at most $\frac23 \, \mu(\{u\}) \rho_u K_u(S_u \mid \{u\}) \leqslant \frac{\frac23\,a}{1 - \frac{1}{3C}}x_u$, where the inequality uses~\eqref{eq:vertex-contamination-new}. Similarly, the negative contribution under \(\{v\}\) is at most \(\frac{\frac23 a}{1 - \frac{1}{3C}}x_v\). All other states give zero incremental contribution for type \(e\). Hence
  \begin{equation} \label{eq:edge-deviation-value-new}
    V_e(F_u \otimes F_v) \geqslant \left[Ca x_u x_v - \frac{\frac23 a}{1 - \frac{1}{3C}}(x_u + x_v)\right]_+,
  \end{equation}
  where \([z]_+ = \max\{0, z\}\) for any \(z \in \mathbb R\). The AF constraint for type \(e\) gives $V_e(F_e) - p_e \geqslant V_e(F_u \otimes F_v) - p_u - p_v$. Using \(V_e(F_e) \leqslant Ca\), \(p_u + p_v = a(x_u + x_v)\), \eqref{eq:edge-deviation-value-new}, and \(p_e \leqslant Ca\), we get
  \begin{equation} \label{eq:edge-price-bound-new}
    p_e \leqslant Ca - a\left[Cx_ux_v - K(x_u + x_v)\right]_+, \quad K := 1 + \frac{\frac23}{1 - \frac{1}{3C}}.
  \end{equation}
  Since \(C \geqslant 2\), we have \(K \leqslant 2\). Summing the vertex prices and the edge price bounds~\eqref{eq:edge-price-bound-new}, every direct AF menu satisfies
  \begin{equation} \label{eq:fractional-upper-bound-new}
    \widetilde R(\mathcal M) \leqslant CaM + a F_C(x),
  \end{equation}
  where $F_C(x) = \sum_{v \in V}x_v - \sum_{\{u, v\} \in E} \left[Cx_ux_v - K(x_u + x_v)\right]_+$.

  \paragraph{Step 4: Final reduction.}
  Let $\tau = \sqrt{5 / C}$, $H = \{v \in V: x_v \geqslant \tau\}$. Denote $E[H] = \{\{u, v\} \in E: u, v \in H\}$ as the set of edges in the induced subgraph \(G[H]\). Then for every edge \(\{u, v\} \in E[H]\), we have $Cx_ux_v - K(x_u + x_v) \geqslant C\tau^2 - 2K \geqslant 1$. Therefore
  \[ F_C(x) \leqslant \tau N + \sum_{v \in H}x_v - |E[H]| \leqslant \tau N + |H| - |E[H]|. \]
  Deleting one endpoint from each edge in the induced subgraph \(G[H]\) leaves an independent set of size at least \(|H| - |E[H]|\). Hence $|H| - |E[H]| \leqslant \alpha$. Using \(N \leqslant (\Delta + 1)\alpha\), we obtain
  \begin{equation} \label{eq:rounding-upper-new}
    F_C(x) \leqslant \bigl(1 + \tau(\Delta + 1)\bigr)\alpha.
  \end{equation}
  Combining~\eqref{eq:mis-lower-bound-new}, \eqref{eq:fractional-upper-bound-new}, and \eqref{eq:rounding-upper-new}, we have
  \begin{equation} \label{eq:opt-sandwich-new}
    CaM + a\alpha \leqslant \widetilde R^*(G) \leqslant CaM + a\bigl(1 + \tau(\Delta + 1)\bigr)\alpha.
  \end{equation}
  
  Then we construct an independent set from any feasible menu. Given any feasible menu \(\mathcal M\), form \(H = \{v: x_v \geqslant \tau\}\) and delete one endpoint from every edge in \(G[H]\). Let the resulting independent set be \(I(\mathcal M)\). Then
  \begin{equation} \label{eq:extraction-bound-new}
    |I(\mathcal M)| \geqslant |H|-|E[H]| \geqslant F_C(x) - \tau N.
  \end{equation}

  Suppose that Program~(3) admits a PTAS. Let $D_\Delta = \frac{\Delta(\Delta + 1)}2$. Run this PTAS on the instance constructed above with accuracy parameter $\eta \leqslant \frac{\zeta}{2(CD_\Delta + 1)}$. Let \(\mathcal M\) be the returned direct AF menu, and let $\widehat\alpha = 
  \frac{\widetilde R(\mathcal M) - CaM}{a}$. By the approximation guarantee and the lower bound~\eqref{eq:mis-lower-bound-new}, $\widetilde R(\mathcal M) \geqslant (1 - \eta)\widetilde R^*(G) \geqslant (1 - \eta)(CaM + a\alpha)$. This implies $\widehat\alpha \geqslant \alpha - \eta(CM + \alpha) \geqslant \bigl(1 - \eta(CD_\Delta + 1)\bigr)\alpha$, where the last inequality uses~\eqref{eq:mis-edge-alpha-new}. On the other hand, \eqref{eq:fractional-upper-bound-new} gives $\widehat\alpha \leqslant F_C(x)$. Construct \(I(\mathcal M)\) from the menu as in Step~4. By~\eqref{eq:extraction-bound-new}, $|I(\mathcal M)| \geqslant \widehat\alpha - \tau N \geqslant \bigl(1 - \eta(CD_\Delta + 1) - \tau(\Delta + 1)\bigr)\alpha$. By the choice of \(C\), we have \(\tau(\Delta + 1) \leqslant \zeta / 2\), and by the choice of \(\eta\), we have \(\eta(CD_\Delta + 1) \leqslant \zeta / 2\). Hence $|I(\mathcal M)| \geqslant (1 - \zeta)\alpha$. Thus a PTAS for Program~(3) would imply a PTAS for MIS on bounded degree graphs. This contradicts the APX-hardness of bounded-degree MIS unless \(P = NP\). Therefore Program~(3) admits no PTAS.
\end{proof}

\begin{proof}[Proof of Theorem~\ref{thm:af-revenue-hardness}, part~(ii)]
 Given a Boolean formula \(\varphi(x_1, \ldots, x_n)\) on \(n \geqslant 1\). Let $\Omega = \{0, 1\}^{2n}$, and denote a state as \(\omega = (x, y)\) with \(x, y \in \{0, 1\}^n\). The prior is uniform on \(\Omega\), and can be sampled in polynomial time by drawing independent uniform bit vectors \(x, y \in \{0, 1\}^n\).

 There is one buyer type and two actions $A = \{0, 1\}$. Define the event $E_\varphi = \{(x, y) \in \Omega: \varphi(x) = 1,\ y = 0^n\}$. The utility function is
 \[ u((x, y), 1) = \mathbf 1\{(x, y) \in E_\varphi\}, \quad u((x, y),0) = \mathbf 1\{(x, y) \notin E_\varphi\}. \]
 Thus action \(1\) is correct exactly on \(E_\varphi\), and action \(0\) is correct exactly outside \(E_\varphi\).

 Let $q = \Pr(E_\varphi)$. Since \(y = 0^n\) has probability \(2^{-n}\), we have $q \leqslant 2^{-n} \leqslant \frac{1}{2}$. Therefore, without any information, the buyer optimally chooses action \(0\) and obtains utility $U^0 = 1 - q$. Now consider any experiment \(F\). The buyer's payoff after observing \(F\) is at most \(1\), so the value of \(F\) satisfies $V(F) \leqslant 1 - (1 - q) = q$. On the other hand, the complete information experiment achieves payoff \(1\) in every state, so its value is exactly $1 - (1 - q) = q$. Hence $\OPT = q$, where \(\OPT\) is the optimal revenue.

 If \(\varphi\) is unsatisfiable, then \(E_\varphi=\varnothing\), so \(q=0\) and therefore $\OPT = 0$. If \(\varphi\) is satisfiable, then some \(x^\star \in \{0, 1\}^n\) satisfies \(\varphi(x^\star) = 1\). The state \((x^\star, 0^n)\) then belongs to \(E_\varphi\), and therefore $q \geqslant 2^{-2n} > 0$. Thus $\varphi \text{ is satisfiable} \iff \OPT > 0$.

 Suppose that a polynomial time constant factor approximation algorithm existed. Let \(c \geqslant 1\) be its approximation factor, then (1) if \(\varphi\) is unsatisfiable, then every feasible menu has revenue \(0\), so the algorithm must output revenue \(0\); (2) if \(\varphi\) is satisfiable, then the algorithm must output revenue at least \(\OPT / c = q / c > 0\). Hence the algorithm decides satisfiability by checking whether the output revenue is zero or positive, which is impossible unless \(P = NP\).
\end{proof}

\begin{proof}[Proof of Lemma~\ref{lem:coord-compression}]
 We first give a basic approximation property of a payoff cell. Fix a cell \(g\), a type \(\theta\), and an action \(a\). If \(\mu(g) > 0\), let
 \[ \bar u_{g, \theta, a} = \frac{M_{g, \theta, a}}{\mu(g)} = \frac{1}{\mu(g)} \sum_{\omega \in g}\mu(\omega)u_\theta(\omega, a). \]
 According to Equation~\ref{eq:rank-cell-diameter}, we have $\left|u_\theta(\omega, a) - \bar u_{g, \theta, a}\right| \leqslant \beta, \forall \omega \in g$. Therefore, for every function \(f: g \to [0, 1]\), we have
 \begin{equation} \label{eq:cell-product-error}
  \left|\sum_{\omega \in g}\mu(\omega)f(\omega)u_\theta(\omega, a) - \bar u_{g, \theta, a}\sum_{\omega \in g}\mu(\omega)f(\omega)\right| \leqslant \beta\sum_{\omega \in g}\mu(\omega)f(\omega).
 \end{equation}
 If \(\mu(g) = 0\), every term above is zero, so~\eqref{eq:cell-product-error} is trivial.

 We now prove part~(i). Start from an AF responsive direct menu. Let \(x(\omega)\in\mathcal X\) be the recommendation profile used at state \(\omega\). For each cell \(g\) with \(\mu(g) > 0\), define
 \[ r_{g, z} = \frac{1}{\mu(g)} \sum_{\omega \in g}\mu(\omega)x(\omega)^z, \quad z \in \mathcal Z_{H_\delta}. \]
 This vector belongs to \(\mathcal R_{H_\delta}(\mathcal X)\). For cells with \(\mu(g)=0\), choose any point in \(\mathcal R_{H_\delta}(\mathcal X)\).

 Recall that $V_\theta(E_\theta) = \sum_{g \in \mathcal G_\beta}\sum_{a \in A}\sum_{\omega \in g} \mu(\omega)x_{\theta, a}(\omega)u_\theta(\omega, a)$ (we neglect the constant term \(U_\theta^0\) since it is independent of the menu). For every pair \((g,a)\), apply~\eqref{eq:cell-product-error} with \(f(\omega) = x_{\theta, a}(\omega)\). Since
 \[ \bar u_{g, \theta, a} \sum_{\omega \in g}\mu(\omega)x_{\theta, a}(\omega) = \frac{M_{g, \theta, a}}{\mu(g)} \sum_{\omega \in g}\mu(\omega)x_{\theta, a}(\omega) = M_{g, \theta, a}\,r_{g,\mathbf e_{\theta, a}}, \]
 summing over \(g\) and \(a\) gives
 \begin{equation} \label{eq:intended-compression-error}
  \left|V_\theta(E_\theta) - \widehat W_\theta(r)\right| \leqslant \sum_{g \in \mathcal G_\beta}\sum_{a \in A} \beta\sum_{\omega \in g}\mu(\omega)x_{\theta, a}(\omega) = \beta\sum_{g \in \mathcal G_\beta}\sum_{\omega \in g}\mu(\omega)\sum_{a \in A}x_{\theta, a}(\omega) = \beta.
 \end{equation}

 Next fix a count vector \(c \in \mathcal C_\delta(\Theta)\) and an action rule \(s: \mathcal Z_\Theta(c) \to A\). The expected utility of this rule after buying \(c\) is $V_{\theta, s}(E_c) = \sum_{g \in \mathcal G_\beta}\sum_{z \in \mathcal Z_\Theta(c)} \sum_{\omega \in g} \mu(\omega)P_{x(\omega)}(z \mid c)\, u_\theta(\omega, s(z))$. Denote
 \[ \widehat V_{\theta, c, s}(r) = \sum_{z \in \mathcal Z_\Theta(c)} \widehat V_{\theta, c, z, s(z)}(r) = \sum_{g \in \mathcal G_\beta}\sum_{z \in \mathcal Z_\Theta(c)} M_{g, \theta, s(z)}\, \gamma(c, z)\, r_{g, z}. \]
 For each pair \((g, z)\), apply~\eqref{eq:cell-product-error} with \(a = s(z)\) and \(f(\omega) = P_{x(\omega)}(z \mid c)\). Similar to Equation~\eqref{eq:intended-compression-error}, we have $\left|V_{\theta, s}(E_c) - \widehat V_{\theta, c, s}(r)\right| \leqslant \beta$. Let \(s^\star\) be an optimal action rule for the bundle value, so \(V_\theta(E_c) = V_{\theta, s^\star}(E_c)\). Then $V_\theta(E_c) \leqslant \widehat V_{\theta, c, s^\star}(r) + \beta \leqslant \widehat V_{\theta, c}(r) + \beta$. Conversely, let \(\widehat V_{\theta, c}(r) = \widehat V_{\theta, c, \widehat s^\star}(r)\). Then $\widehat V_{\theta, c}(r) \leqslant
 V_{\theta, \widehat s^\star}(E_c) + \beta \leqslant V_\theta(E_c) + \beta$. Therefore
 \begin{equation} \label{eq:deviation-compression-error}
  \left|V_\theta(E_c) - \widehat V_{\theta, c}(r)\right| \leqslant \beta.
 \end{equation}
 AF of the original responsive menu gives $V_\theta(E_\theta) - t_\theta \geqslant V_\theta(E_c) - c \cdot t$, $\forall \theta,\ \forall c \in \mathbb Z_+^\Theta \setminus \{e_\theta\}$. Combining this with~\eqref{eq:intended-compression-error} and~\eqref{eq:deviation-compression-error}, we obtain
 \[ \widehat W_\theta(r) - t_\theta \geqslant V_\theta(E_\theta) - \beta - t_\theta \geqslant V_\theta(E_c) - \beta - c \cdot t \geqslant \widehat V_{\theta, c}(r) - 2\beta - c \cdot t. \]
 Thus \((r, t)\) is feasible for \(\operatorname{FinPrice}^{2\beta}_{\delta,\beta}\), which proves part~(i).

 We now prove part~(ii). Suppose \((r, t)\) is feasible for \(\operatorname{FinPrice}^{\alpha}_{\delta, \beta}\). By the Carath\'eodory theorem \cite{rockafellar1997convex}, for each cell \(g\), we choose $r_g = \sum_{\ell = 1}^{K_g}\alpha_{g, \ell}\phi_{H_\delta}(x_{g, \ell})$ with $K_g \leqslant B_\delta + 1$. Then we use the implementation described in the main text to get the final finite support menu. Let \(W_\theta^{\rm rec}\) be the payoff from following the intended recommendation in this implemented menu. Then similar to~\eqref{eq:intended-compression-error}, we have $\left|W_\theta^{\rm rec} - \widehat W_\theta(r)\right| \leqslant \beta$. The actual value of the intended product can only be larger, because the buyer may ignore the recommendation, so $V_\theta(E_\theta) \geqslant W_\theta^{\rm rec} \geqslant \widehat W_\theta(r) - \beta$. Likewise, for every bundle \(c \in \mathcal C_\delta(\Theta)\), $V_\theta(E_c) \leqslant \widehat V_{\theta, c}(r) + \beta$. Since \((r, t)\) is feasible for \(\operatorname{FinPrice}^{\alpha}_{\delta, \beta}\), we have $\widehat W_\theta(r) - t_\theta \geqslant \widehat V_{\theta, c}(r) - c \cdot t - \alpha$. Combining the above discussion yields
 \[ V_\theta(E_\theta) - t_\theta \geqslant \widehat W_\theta(r) - \beta - t_\theta \geqslant \widehat V_{\theta, c}(r) - \beta - c \cdot t - \alpha \geqslant V_\theta(E_c) - c \cdot t - \alpha - 2\beta. \]
 Since \(\Gamma_\beta = 2\beta\), the implemented menu is \((\alpha + \Gamma_\beta)\)-AF against \(\mathcal C_\delta(\Theta)\).
\end{proof}

\begin{proof}[Proof of Proposition~\ref{prop:coord-price}]
 By Proposition~\ref{prop:bundle-revelation}, there exists an optimal exact AF direct menu. Since \(\sigma \geqslant \Gamma_\beta\), Lemma~\ref{lem:coord-compression}(i) implies that this menu induces a feasible point of \(\operatorname{FinPrice}^{\sigma}_{\delta, \beta}\). Therefore the optimum value of \(\operatorname{FinPrice}^{\sigma}_{\delta, \beta}\) is at least \(\OPT\).

 We solve~\eqref{eq:coord-price} by the ellipsoid method and binary search over the objective value. The search interval is bounded: using the IR constraint, \(t_\theta \leqslant 1 + \sigma\), and hence the objective is at most \(1 + \sigma\). For a candidate revenue level \(R\), we add the linear constraint \(\sum_{\theta \in \Theta}\lambda_\theta t_\theta \geqslant R\) and test feasibility. Binary search to additive accuracy \(\tau\) then gives objective value at least \(\OPT - \tau\).

 For AF separation, fix \((\theta, c)\), we test whether the corresponding cut~\eqref{eq:benders-cut} is violated. For each fixed \((\theta,c,z,a)\), evaluating \(\widehat V_{\theta,c,z,a}(r)\) requires summing over \(g\in\mathcal G_\beta\), so it takes \(O(N_\beta)\) arithmetic operations. The total number of triples \((c,z,a)\) is at most \(mB_\delta\), hence one full AF separation pass over all \(\theta\in\Theta\) costs polynomial time in \(N_\beta\) and \(B_\delta\).

 For separation over \(\mathcal R_{H_\delta}(\mathcal X)\), fix a direction \(q\). Let $\rho = \frac{\tau}{m + m^{H_\delta}}$. By~\eqref{eq:coord-body-optimization}, support function evaluation reduces to optimizing a degree-\(H_\delta\) polynomial over the product of simplexes \(\mathcal X\). Equivalently, for a threshold \(\lambda\), it suffices to decide whether there exists \(x \in \mathcal X\) such that \(\sum_{z \in \mathcal Z_{H_\delta}} q_z x^z \geqslant \lambda\). This is an existential formula over the reals with \(nm\) variables, \(O(nm)\) simplex constraints, and one polynomial inequality of degree \(H_\delta\). Standard real algebraic geometry algorithms solve this problem to accuracy \(\rho\) in time \(\operatorname{poly}(\log(1 / \rho)) \cdot H_\delta^{O(nm)} = \operatorname{poly}(\log(1 / \tau)) \cdot H_\delta^{O(nm)}\) \cite{basu2006algorithms}. Hence the weak optimization--weak separation reduction \cite{grotschel1988geometric} yields a weak separation oracle for \(\mathcal R_{H_\delta}(\mathcal X)\) with accuracy \(\rho\) in time $\operatorname{poly}(B_\delta, \log(1 / \tau)) \cdot H_\delta^{O(nm)}$. Checking all $g \in \mathcal G_\beta$ therefore costs $\operatorname{poly}(N_\beta, B_\delta, \log(1 / \tau)) \cdot H_\delta^{O(nm)}$ time. Thus the ellipsoid method returns a profile that can be replaced by an implementable profile \(\widetilde r_g \in \mathcal R_{H_\delta}(\mathcal X)\) such that \(\|r_g - \widetilde r_g\|_\infty \leqslant \rho\).

 We now bound the effect of this replacement on the AF constraints. For every type \(\theta\), since \(0 \leqslant M_{g,\theta,a} \leqslant \mu(g)\) and \(\sum_g M_{g,\theta,a} \leqslant 1\), we have $\left|\widehat W_\theta(r) - \widehat W_\theta(\widetilde r)\right| \leqslant \rho\sum_{a \in A}\sum_{g \in \mathcal G_\beta}M_{g, \theta, a} \leqslant m \rho$.
 Similarly, for fixed \((\theta, c, z, a)\), $\left|\widehat V_{\theta, c, z, a}(r) - \widehat V_{\theta, c, z, a}(\widetilde r)\right| \leqslant \gamma(c, z)\rho$. Using \(|\max_a f_a - \max_a f'_a| \leqslant \max_a |f_a-f'_a|\) and the multinomial theorem, $\left|\widehat V_{\theta, c}(r) - \widehat V_{\theta, c}(\widetilde r)\right| \leqslant \rho \sum_{z \in \mathcal Z_\Theta(c)}\gamma(c, z) = \rho m^{\|c\|_1} \leqslant \rho m^{H_\delta}$. Therefore, if \((r, t)\) satisfies the explicit AF constraints of \(\operatorname{FinPrice}^{\sigma}_{\delta,\beta}\), then \((\widetilde r,t)\) satisfies
 \[ \widehat W_\theta(\widetilde r) - t_\theta \geqslant \widehat V_{\theta, c}(\widetilde r) - c \cdot t - \sigma - (m + m^{H_\delta})\rho = \widehat V_{\theta, c}(\widetilde r) - c \cdot t - \sigma - \tau \]
 for every \(\theta \in \Theta\) and every \(c \in \mathcal C_\delta(\Theta) \setminus \{e_\theta\}\). Thus \((\widetilde r, t)\) is feasible for \(\operatorname{FinPrice}^{\sigma + \tau}_{\delta, \beta}\).

 By Carath\'eodory's theorem \cite{rockafellar1997convex}, each implementable profile \(\widetilde r_g \in \mathcal R_{H_\delta}(\mathcal X) \subseteq \mathbb R^{B_\delta}\) has a decomposition \(\widetilde r_g = \sum_{\ell = 1}^{K_g} \alpha_{g, \ell}\phi_{H_\delta}(x_{g, \ell})\) with \(K_g \leqslant B_\delta + 1\). This yields a finite support menu with the same price vector \(t\), so its revenue is at least \(\OPT - \tau\). Lemma~\ref{lem:coord-compression}(ii) applied with \(\alpha = \sigma + \tau\) shows that this recovered menu is \((\sigma + \tau + \Gamma_\beta)\)-AF against \(\mathcal C_\delta(\Theta)\).

 Finally, the binary search uses \(O(\log(1 / \tau))\) feasibility tests, and the ellipsoid works in dimension \(N_\beta B_\delta + n\) with polynomially many oracle calls. Hence the total running time is $\operatorname{poly}\left(N_\beta B_\delta, \log\frac1\tau\right) \cdot \left(H_\delta\right)^{O(nm)}$. This proves both the guarantee and time complexity.
\end{proof}

\begin{proof}[Proof of Lemma~\ref{lem:threshold-low-prices}]
 Since every deleted product has price below \(\delta\), we have $\sum_{\theta \notin \Theta_\delta} \lambda_\theta t_\theta \leqslant \delta\sum_{\theta \notin \Theta_\delta} \lambda_\theta \leqslant \delta$. Therefore the remaining revenue is at least \(\sum_\theta\lambda_\theta t_\theta - \delta\). To prove that the retained menu is \(\alpha\)-AF, fix any type \(\theta \in \Theta\) and any finite bundle \(B\) of retained products, and let \(c(B)\) be its count vector. If \(\|c(B)\|_1 \leqslant H_\delta\), then \(c(B) \in \mathcal C_\delta(\Theta)\), so the desired inequality follows directly from the assumption that the original menu is \(\alpha\)-AF against \(\mathcal C_\delta(\Theta)\). If \(\|c(B)\|_1 > H_\delta\), then \(t_B > H_\delta \delta \geqslant 1\). Since \(V_\theta(E_B) \leqslant 1\), we get \(V_\theta(E_B) - t_B < 0\). On the other hand, \(0 \in \mathcal C_\delta(\Theta) \setminus \{e_\theta\}\), so the assumption with \(c = 0\) gives \(V_\theta(E_\theta) - t_\theta \geqslant -\alpha\). Therefore \(V_\theta(E_\theta) - t_\theta \geqslant V_\theta(E_B) - t_B - \alpha\). Thus the retained menu is \(\alpha\)-AF against all finite bundles of retained products.
\end{proof}

\begin{proof}[Proof of Lemma~\ref{lem:exact-repair}]
 Fix a retained type \(\theta \in \widehat\Theta\), and define \(U_\theta = V_\theta(E_\theta) - t_\theta\). After scaling, the intended product of type \(\theta\) is offered at price \((1 - \eta)t_\theta\), so its net utility becomes $V_\theta(E_\theta) - (1 - \eta)t_\theta = U_\theta + \eta t_\theta$. Let \(B_\theta\) be the bundle that maximizes type \(\theta\)'s utility under the scaled prices, we have $V_\theta(E_{B_\theta}) - (1 - \eta)p_\theta^\star \geqslant U_\theta + \eta t_\theta$, where \(p_\theta^\star\) is the original unscaled total price of \(B_\theta\). Since the menu is \(\alpha\)-AF before scaling, we have $V_\theta(E_{B_\theta}) - p_\theta^\star \leqslant U_\theta + \alpha$. Adding \(\eta p_\theta^\star\) to both sides yields $V_\theta(E_{B_\theta}) - (1 - \eta)p_\theta^\star \leqslant U_\theta + \alpha + \eta p_\theta^\star$. Thus we have $U_\theta + \eta t_\theta \leqslant U_\theta + \alpha + \eta p_\theta^\star$, hence $p_\theta^\star \geqslant t_\theta - \frac{\alpha}{\eta}$.

 Proposition~\ref{prop:bundle-revelation} applied to the scaled menu yields an exactly AF direct menu with the same revenue as that scaled menu. Applying the lower bound \(p_\theta^\star \geqslant t_\theta - \alpha / \eta\) for retained types, we obtain
 \[ \sum_{j \in \Theta}\lambda_j p_j^\star \geqslant \sum_{\theta \in \widehat\Theta}\lambda_\theta \left(t_\theta - \frac{\alpha}{\eta}\right) = \sum_{\theta \in \widehat\Theta}\lambda_\theta t_\theta - \frac{\alpha}{\eta}\sum_{\theta \in \widehat\Theta}\lambda_\theta \geqslant \sum_{\theta \in \widehat\Theta}\lambda_\theta t_\theta - \frac{\alpha}{\eta}. \]
 Multiplying by \((1 - \eta)\) gives the claimed revenue bound.
\end{proof}

\begin{proof}[Proof of Proposition~\ref{prop:full-info-approx}]
 \textbf{Part 1: \(1 + \ln(v_{\max} / v_{\min})\)-approximation.} Let $D(p) = \sum_{\theta: v_\theta \geqslant p}\lambda_\theta$ be the demand for complete information at price \(p\), and let $S = \sum_\theta \lambda_\theta v_\theta$ be the total full information surplus. Therefore \(\OPT \leqslant S\). Also \(\Rfull = \max_{p \geqslant 0} pD(p)\), so \(D(p) \leqslant \Rfull / p\) for every \(p > 0\). By the tail integration formula, $S = \int_0^{v_{\max}} D(p)\,\textup{d}p$. For \(0 \leqslant p \leqslant v_{\min}\), we have
 \[ \int_0^{v_{\min}} D(p)\,\textup{d}p = v_{\min}D(v_{\min}) \leqslant \Rfull. \]
 For \(p \in [v_{\min}, v_{\max}]\), we have
 \[ \int_{v_{\min}}^{v_{\max}}D(p)\, \textup{d}p \leqslant \int_{v_{\min}}^{v_{\max}}\frac{\Rfull}{p}\, \textup{d}p = \Rfull\ln(v_{\max} / v_{\min}). \]
 Combining the two inequalities gives $\OPT \leqslant S \leqslant \Rfull\left(1 + \ln(v_{\max} / v_{\min})\right)$.

 \medskip\noindent\textbf{Part 2: Tightness of \(1 + \ln(v_{\max} / v_{\min})\) factor.} Fix an integer \(h \geqslant 2\). Let the state space be $\Omega = \{0, 1\}^h$ with the uniform prior. For each \(i \in [h]\), define the deterministic coordinate query $T_i(\omega) = \omega_i$. Let the common action space be $A = \{a^0\} \cup \{a_i^0, a_i^1: i \in [h]\}$. Type \(i\) has mass $\lambda_i = \frac1h$ and utility function $u_i(\omega, a^0) = 1 - \frac1{2i}$, $u_i(\omega, a_i^b) = 1$ if \(\omega_i = b\) and $u_i(\omega, a_i^b) = 0$ otherwise, and $u_i(\omega, a_j^b) = 0$ for every \(j \neq i\). Without information, every type \(i\) chooses $a^0$ and \(U_i^0 = 1 - \frac1{2i}\). For any bundle of the coordinate queries, if the bundle reveals \(\omega_i\), then type \(i\) chooses the matching action in \(\{a_i^0, a_i^1\}\) and obtains payoff \(1\); otherwise the posterior of \(\omega_i\) remains uniform, so $a^0$ remains optimal. Therefore type \(i\) has threshold utility with target \(T_i\) and value $v_i = \frac1{2i}$.

 Consider the direct menu that sells each target query \(T_i\) at price \(1 / (2i)\). We claim that this menu is AF. Fix a type \(i\), and consider an arbitrary finite bundle \(B\) of menu items. First, repeated copies of the same deterministic query do not create any new information. Then, let \(J \subseteq[h]\) be the set of coordinates purchased in the resulting bundle. If \(i \in J\), then the bundle contains \(T_i\), and the bundle cost is at least the posted price of \(T_i\). If \(i \notin J\), then type \(i\) obtains value zero from the bundle \(B\).

 The revenue of this menu is $\sum_{i = 1}^h \lambda_i v_i = \sum_{i = 1}^h \frac1h \cdot \frac1{2i} = \frac{H_h}{2h}$, where $H_h = \sum_{i = 1}^h \frac1i$. On the other hand, the total full information surplus is also $\sum_{i = 1}^h \lambda_i v_i = \frac{H_h}{2h}$. No individually rational menu can extract more than the total full information surplus. Hence $\OPT = \frac{H_h}{2h}$.

 It remains to compute \(\Rfull\). If \(p \in (1 / (2(k + 1)), 1 / (2k)]\) for some \(k < h\), then exactly the types \(1, \ldots, k\) buy, so the revenue is at most $p \cdot \frac{k}{h} \leqslant \frac1{2k} \cdot \frac{k}{h} = \frac1{2h}$. If \(p \leqslant 1 / (2h)\), then all \(h\) types buy and the revenue is at most $p \cdot 1 \leqslant \frac1{2h}$. If \(p > 1 / 2\), then no type buys. Therefore $\Rfull = \frac1{2h}$. Thus we have $\frac{\OPT}{\Rfull} = H_h$. Since $H_h = (1 - o(1))(1 + \ln h)$ and \(v_{\max} / v_{\min} = h\), thus proving that the logarithmic factor is tight.

 \medskip\noindent\textbf{Part 3: \(n\)-approximation.} Let \(\mathcal{M} = \{(E_\theta, t_\theta)\}_{\theta \in \Theta}\) be any AF direct menu. Individual rationality implies that \(t_\theta \leqslant V_\theta(E_\theta)\). Since complete information Blackwell dominates \(E_\theta\), we have \(V_\theta(E_\theta) \leqslant V_\theta(F) = v_\theta\). Therefore the revenue of \(\mathcal{M}\) satisfies $R(\mathcal{M}) = \sum_\theta \lambda_\theta t_\theta \leqslant \sum_\theta \lambda_\theta v_\theta$.

 For every type \(\theta\), the complete information price \(p = v_\theta\) sells at least to that type, so \(\Rfull \geqslant \lambda_\theta v_\theta\). Summing over all \(n = |\Theta|\) types yields \(\sum_\theta\lambda_\theta v_\theta \leqslant n\Rfull\), thus $R(\mathcal{M}) \leqslant n\Rfull$. Take the supremum over all AF direct menus \(\mathcal{M}\) to conclude \(\OPT \leqslant n\Rfull\).

 \medskip\noindent\textbf{Part 4: Tightness of the \(n\) factor.}
 Fix \(n\) and \(\rho \in (0, 1)\). Let \(\Omega = \{0, 1\}^n\) with the uniform prior. For each \(i \in [n]\), let \(T_i(\omega) = \omega_i\), and create a threshold type similar to Part~2 of this proof with target \(T_i\), value \(v_i = \frac12\rho^{i - 1}\), and mass $\lambda_i = \frac{1 / v_i}{\sum_{j = 1}^n 1 / v_j}$. Let \(c = (\sum_{j = 1}^n 1 / v_j)^{-1}\), so \(\lambda_i v_i = c\) for all \(i\). Sell each target \(T_i\) at price \(v_i\). According to the Part~2 argument, this menu is AF. The revenue of this menu is $\sum_{i = 1}^{n}\lambda_i v_i = \sum_{i = 1}^{n} c = nc = \OPT$.
 
 It remains to bound \(\Rfull\). The values satisfy \(v_1 > v_2 > \cdots > v_n\). If \(p \in (v_{k + 1}, v_k]\) for some \(k < n\), then only types \(1, \ldots, k\) buy, and the revenue is at most $v_k\sum_{i = 1}^k \lambda_i = \sum_{i = 1}^k c\frac{v_k}{v_i} = c \sum_{i = 1}^k \rho^{k - i} \leqslant \frac{c}{1 - \rho}$. If \(p \leqslant v_n\), the revenue is at most \(v_n \leqslant c / (1 - \rho)\); if \(p > v_1\), the revenue is zero. Thus \(\Rfull \leqslant c / (1 - \rho)\), and $\frac{\OPT}{\Rfull} \geqslant \frac{nc}{c / (1 - \rho)} = n(1 - \rho)$. Choosing \(\rho \to 0\) gives the claimed lower bound.

 \medskip\noindent\textbf{Part 5: running time.} First, sorting the values $v_{\theta}$ takes \(O(|\Theta|\log|\Theta|)\), and then scanning the resulting candidate prices is linear in \(|\Theta|\) since we only need to check prices \(p = v_\theta\) for \(\theta \in \Theta\).
\end{proof}

\subsection{Experimental Validation of the FPTAS} \label{app:experiments}

\paragraph{Data selling instance construction.}
\label{app:section32-modeling}
We first test the algorithm on two abstract markets. Let \(\Theta=[n]\), \(A=[m]\), and draw the type masses \(\lambda\) from a Dirichlet distribution.  For the state set, the finite market uses \(\Omega=[S]\) and the continuous market uses \(\Omega=[0,1]^d\). Before the experiment starts, the buyer's prior distribution over the market condition is uniform.

We first consider \textbf{finite latent feature markets}. In realistic scenarios, high-dimensional markets often have low-dimensional intrinsic structure.  For example, a large dataset may contain many attributes and possible states, but its value is often driven by a few common quality dimensions such as freshness, coverage, and privacy risk. So buyer utilities can be represented through low-dimensional market features:
\[
    u_{s\theta a}=\operatorname{sigmoid}\{b_{\theta a}+\rho(p_\theta^\top g_s+q_a^\top g_s)\},
\]
where \(b_{\theta a}\) is type \(\theta\)'s preference or cost for executing action \(a\); the factor \(g_s\) represents common conditions such as demand, quality, or risk; and \(p_\theta,q_a\) determine how sensitive type \(\theta\) and action \(a\) are to these conditions. We draw \(b_{\theta a}\), \(g_s\), \(p_\theta\), and \(q_a\) from Gaussian distributions.

Next we assume the market condition (state space) to be continuous, such as time and location. We formulate this as \textbf{continuous random markets} with \(\Omega=[0,1]^d\) and \(\omega=x\), where \(d\) is the dimension of the continuous state. The buyer's utility is
\[
    u_{\theta a}(x)=\operatorname{sigmoid}\{b_{\theta a}+\rho f_{\theta a}(x)+\ell_{\theta a}^{\top}x\},
\]
where \(f_{\theta a}(x)\) describes how the continuous market condition \(x\) changes the utility of action \(a\) for type \(\theta\) in a nonlinear way, and \(\ell_{\theta a}^{\top}x\) captures directional effects such as utility increasing with quality or decreasing with risk. We use different constructions for \(f_{\theta a}\):
\begin{itemize}[leftmargin=2em,itemsep=1pt,topsep=1pt]
    \item \emph{Radial basis function:} \(f^{\rm RBF}_{\theta a}(x)=\sum_k\alpha_{\theta ak}\exp(-\|x-c_k\|^2/w_k^2)\), which creates local demand or quality hot spots around centers \(c_k\).
    \item \emph{Fourier function:} \(f^{\rm F}_{\theta a}(x)=\sum_k\alpha_{\theta ak}\cos(2\pi r_k^\top x+\phi_k)\), which creates smooth global variation across the state space.
\end{itemize}

Now we consider two more concrete data market applications.  In this part, \(\Omega=[S]\), \(\Theta=[n]\), and \(A=[m]\), and draw the type masses \(\lambda\) from a Dirichlet distribution. The common prior over the \(S\) market states is uniform.

We first consider \textbf{dataset purchase markets}, modeled as an advertiser buying audience data before choosing an advertising template.  There are \(K\) user groups.  A state is a vector
\(\omega_s=(r_{s1},\ldots,r_{sK})\), where \(r_{sk}\in[0,1]\) is the conversion rate of group \(k\) in market snapshot \(s\).  Each conversion rate is drawn from one of two beta distributions, where the two distributions represent high-conversion and low-conversion regimes, respectively.  An action \(a\) is one of \(m\) advertising templates; \(x_{ak}\in[0,1]\) denotes the coverage of template \(a\) on user group \(k\).

A type \(\theta\) is an advertiser with audience weights \(w_\theta=(w_{\theta1},\ldots,w_{\theta K})\), drawn from a Dirichlet distribution.  If advertiser \(\theta\) chooses template \(a\) in state \(\omega_s\), its utility is
\[
    u_\theta(\omega_s,a)
    =\frac{\sum_{k=1}^K w_{\theta k}x_{ak}r_{sk}}
    {\sum_{k=1}^K w_{\theta k}x_{ak}+\kappa},
\]
where \(\kappa>0\) is a small numerical constant preventing division by zero.

We then consider \textbf{model-versioning markets}, motivated by a retailer that uses a demand-forecasting model to choose a replenishment or inventory level.  The state \(\omega\in\mathbb R^d\) is the true parameter of the demand model and is drawn from a Gaussian distribution.  A model version sold by the seller releases a noisy parameter \(z=\omega+\xi\); after observing \(z\), the buyer chooses one of \(m\) discrete inventory levels, where action \(a\) corresponds to the inventory level \(q_a=(a-1)/(m-1)\) on \([0,1]\).  

For type \(\theta\), given parameter \(z\), the model outputs \(y_\theta(z)=\operatorname{sigmoid}(w_\theta^\top z+b_\theta)\), where \(w_\theta\) and \(b_\theta\) are drawn from Gaussian distributions.  The retailer chooses the inventory level \(q_{\hat a_\theta(z)}\), where \(\hat a_\theta(z)\in\arg\min_{a\in[m]}|q_a-y_\theta(z)|\).  Realized performance is evaluated against the best inventory level predicted by the true parameter \(\omega\):
\[
    u_\theta(\omega,z)=1-\left(q_{\hat a_\theta(z)}-y_\theta(\omega)\right)^2.
\]

Finally, we include sanity checks that follow the two tightness constructions in Proposition~\ref{prop:full-info-approx}.  For each \((n,m)\), we compare the two bounds \(n\) and \(1+\ln(v_{\max}/v_{\min})\), then use the corresponding construction from the proposition for the case where the logarithmic bound is the tighter bound and for the case where the \(n\)-bound is the tighter bound.  The purpose is to check whether the computed revenue reaches the active upper bound predicted by the proposition, and to illustrate that in some regimes our menu can substantially outperform selling only complete information at a single price.

\paragraph{Experimental results.}

In the experiments, denote \(n=|\Theta|\) and \(m=|A|\), we sweep \(n\in\{2,4,6,8\}\) and \(m\in\{2,4\}\).  Finite latent feature and dataset purchase use a large state space with \(S=200000\) states; continuous smooth markets and model versioning markets use state dimension \(d=2\). In the FPTAS running time \((1/\varepsilon)^{O(nm)}\), the largest reported cell \((n,m)=(8,4)\) already has \(nm=32\), which is already a significant exponential term. \(\varepsilon\) is chosen as a small value that keeps the running time reasonable while making the algorithm perform as well as possible. For reporting, \(R_{\rm alg}\) is the revenue of the FPTAS, \(R_{\rm tbs}\) is the total buyer surplus, which is a upper bound on \(\OPT\), and \(\Rfull\) is the best posted price revenue from selling only the complete information experiment. The sanity checks on tight instances use the two tightness constructions in Proposition~\ref{prop:full-info-approx}; for these instances, \(R_{\rm tbs}=k\Rfull\), where \(k=\min\{n,1+\ln(v_{\max}/v_{\min})\}\).

\begin{table}[H]
\centering
\caption{Market performance ratios. Rows index the \((n,m)\), columns index the market family, and each entry reports \(R_{\rm alg}/\Rfull\pm\) s.d.; \(R_{\rm tbs}/\Rfull\pm\) s.d., where the standard deviation is over 5 repeated runs when available.}
\label{tab:section32-main-experiments}
{\scriptsize
\setlength{\tabcolsep}{2pt}
\resizebox{\linewidth}{!}{%
\begin{tabular}{@{}lcccc@{}}
\toprule
\((n,m)\) & Finite latent-feature & Continuous smooth & Dataset purchase & Model versioning \\
\midrule
\((2,2)\) & \(1.096\pm0.119;\,1.162\pm0.114\) & \(1.035\pm0.055;\,1.040\pm0.055\) & \(0.962\pm0.584;\,1.210\pm0.231\) & \(1.114\pm0.127;\,1.136\pm0.132\) \\
\((2,4)\) & \(1.044\pm0.024;\,1.065\pm0.034\) & \(1.096\pm0.061;\,1.097\pm0.062\) & \(1.157\pm0.234;\,1.177\pm0.222\) & \(1.171\pm0.065;\,1.182\pm0.057\) \\
\((4,2)\) & \(1.193\pm0.137;\,1.360\pm0.226\) & \(1.348\pm0.201;\,1.377\pm0.229\) & \(1.049\pm0.240;\,1.240\pm0.080\) & \(1.153\pm0.095;\,1.204\pm0.080\) \\
\((4,4)\) & \(1.127\pm0.097;\,1.162\pm0.087\) & \(1.453\pm0.282;\,1.490\pm0.298\) & \(1.438\pm0.288;\,1.456\pm0.298\) & \(1.223\pm0.113;\,1.310\pm0.118\) \\
\((6,2)\) & \(1.235\pm0.116;\,1.458\pm0.130\) & \(1.443\pm0.237;\,1.475\pm0.220\) & \(1.349\pm0.345;\,1.429\pm0.302\) & \(1.197\pm0.069;\,1.288\pm0.093\) \\
\((6,4)\) & \(1.162\pm0.120;\,1.247\pm0.184\) & \(1.558\pm0.381;\,1.641\pm0.399\) & \(1.704\pm0.340;\,1.718\pm0.345\) & \(1.352\pm0.109;\,1.546\pm0.116\) \\
\((8,2)\) & \(1.245\pm0.156;\,1.415\pm0.219\) & \(1.247\pm0.235;\,1.301\pm0.267\) & \(1.571\pm0.301;\,1.643\pm0.322\) & \(1.363\pm0.101;\,1.575\pm0.224\) \\
\((8,4)\) & \(1.195\pm0.055;\,1.327\pm0.094\) & \(1.624\pm0.380;\,1.727\pm0.419\) & \(1.243\pm0.016;\,1.569\pm0.064\) & \(1.390\pm0.210;\,1.667\pm0.270\) \\
\bottomrule
\end{tabular}}
}
\end{table}

\begin{table}[H]
\centering
\caption{Running time. Rows index the \((n,m)\), columns index the market family, and each entry reports \(\varepsilon;\) running time in seconds.}
\label{tab:section32-main-runtime}
{
\setlength{\tabcolsep}{3pt}
\resizebox{0.85\linewidth}{!}{%
\begin{tabular}{@{}lcccc@{}}
\toprule
\((n,m)\) & Finite latent-feature & Continuous smooth & Dataset purchase & Model versioning \\
\midrule
\((2,2)\) & \(.001;\,1.93\mathrm{s}\) & \(.005;\,1.79\mathrm{s}\) & \(.02;\,0.23\mathrm{s}\) & \(.01;\,1.72\mathrm{s}\) \\
\((2,4)\) & \(.001;\,6.08\mathrm{s}\) & \(.005;\,5.36\mathrm{s}\) & \(.02;\,0.93\mathrm{s}\) & \(.01;\,7.40\mathrm{s}\) \\
\((4,2)\) & \(.001;\,4.26\mathrm{s}\) & \(.005;\,3.78\mathrm{s}\) & \(.02;\,0.83\mathrm{s}\) & \(.01;\,3.41\mathrm{s}\) \\
\((4,4)\) & \(.001;\,9.47\mathrm{s}\) & \(.005;\,12.96\mathrm{s}\) & \(.02;\,6.21\mathrm{s}\) & \(.01;\,27.87\mathrm{s}\) \\
\((6,2)\) & \(.001;\,4.78\mathrm{s}\) & \(.01;\,9.17\mathrm{s}\) & \(.02;\,4.59\mathrm{s}\) & \(.01;\,12.07\mathrm{s}\) \\
\((6,4)\) & \(.001;\,13.00\mathrm{s}\) & \(.01;\,19.90\mathrm{s}\) & \(.02;\,5.82\mathrm{s}\) & \(.02;\,37.60\mathrm{s}\) \\
\((8,2)\) & \(.001;\,11.68\mathrm{s}\) & \(.01;\,27.49\mathrm{s}\) & \(.02;\,17.91\mathrm{s}\) & \(.02;\,24.03\mathrm{s}\) \\
\((8,4)\) & \(.001;\,20.98\mathrm{s}\) & \(.01;\,38.07\mathrm{s}\) & \(.1;\,185.89\mathrm{s}\) & \(.10;\,145.52\mathrm{s}\) \\
\bottomrule
\end{tabular}}
}
\end{table}

These results show that the proposed algorithm is computationally efficient across all tested market families: even for the largest reported instances with \((n,m)=(8,4)\).

\begin{table}[H]
\centering
\caption{Sanity checks on tight instances where \(1+\ln(v_{\max}/v_{\min})<n\)}
\label{tab:section32-log-small}
{\normalsize
\setlength{\tabcolsep}{5pt}
\begin{tabular}{@{}lcccccccc@{}}
\toprule
\((n,m)\) & \((2,2)\) & \((2,4)\) & \((4,2)\) & \((4,4)\) & \((6,2)\) & \((6,4)\) & \((8,2)\) & \((8,4)\) \\
\midrule
\(1+\ln(v_{\max}/v_{\min})\) & 1.303 & 1.256 & 1.531 & 1.384 & 1.675 & 1.553 & 1.978 & 1.724 \\
\(R_{\rm alg}/\Rfull\) & 1.262 & 1.226 & 1.487 & 1.360 & 1.631 & 1.524 & 1.913 & 1.688 \\
\(R_{\rm alg}/R_{\rm ub}\) & 0.968 & 0.976 & 0.971 & 0.983 & 0.974 & 0.981 & 0.967 & 0.979 \\
\bottomrule
\end{tabular}
}
\end{table}

\begin{table}[H]
\centering
\caption{Sanity checks on tight instances where \(1+\ln(v_{\max}/v_{\min})>n\)}
\label{tab:section32-n-small}
{\normalsize
\setlength{\tabcolsep}{5pt}
\begin{tabular}{@{}lcccccccc@{}}
\toprule
\((n,m)\) & \((2,2)\) & \((2,4)\) & \((4,2)\) & \((4,4)\) & \((6,2)\) & \((6,4)\) & \((8,2)\) & \((8,4)\) \\
\midrule
\(1+\ln(v_{\max}/v_{\min})\) & 5.255 & 5.820 & 11.949 & 14.530 & 20.817 & 27.492 & 27.406 & 33.974 \\
\(R_{\rm alg}/\Rfull\) & 1.972 & 1.984 & 3.896 & 3.956 & 5.886 & 5.970 & 7.816 & 7.928 \\
\(R_{\rm alg}/R_{\rm ub}\) & 0.986 & 0.992 & 0.974 & 0.989 & 0.981 & 0.995 & 0.977 & 0.991 \\
\bottomrule
\end{tabular}
}
\end{table}

Under the market instances we construct, \(R_{\rm alg}\)  gives a clear improvement over \(\Rfull\) in many cases. However, there are still some cases where the ratio is relatively small, because \(\Rfull\) is already very close to \(R_{\rm tbs}\). For the sanity checks on tight instances, the \(R_{\rm alg}/R_{\rm tbs}\) ratios are high in both tables, which shows that our constructions are correct.

\section{Omitted Results and Proofs in Section~\ref{sec:blackwell}} \label{app:omitted-proofs-sec5}

\begin{proposition} \label{prop:certified-deployment-threshold}
 The following two statements hold.
 \begin{enumerate}
  \item Let \(U\) be a finite set of fields. For each \(R \subseteq U\), let \(E_R\) be the deterministic experiment that reveals the coordinates in \(R\). Fix nonempty target sets \(R_1, \ldots, R_n \subseteq U\) and values \(v_1, \ldots, v_n \in (0, 1)\). For every finite family \(\mathcal R \subseteq 2^U\) of products that the seller may offer, there exists a Bayesian decision problem with a finite action space and payoffs in \([0, 1]\) such that, for every type \(j \in [n]\) and every finite bundle \(B\) from \(\mathcal R\),
  \[ V_j(E_B) = \begin{cases}
   v_j, & E_B \succeqB E_{R_j}, \\
   0, & E_B \not\succeqB E_{R_j}.
  \end{cases} \]
  \item Let \(\Omega = \{0, 1\}\) have a full support prior, and let \(T\) be the experiment that reveals \(\omega\). For \(\sigma > 0\), let \(E_\sigma\) be the experiment that sends \(Y = \omega + \sigma Z\), where \(Z \sim N(0, 1)\). There do not exist a finite action Bayesian decision problem with payoffs in \([0, 1]\) and a value \(v > 0\) such that for every \(E \in \{T\} \cup \{E_\sigma : \sigma > 0\}\),
  \[ V(E) = \begin{cases}
   v, & E \succeqB T, \\
   0, & E \not\succeqB T
  \end{cases} \]
 \end{enumerate}
\end{proposition}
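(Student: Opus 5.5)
For part 1, I would build the decision problem as a disjoint union over types. Each type $j$ has its own block of actions, as in the vertex/edge gadgets of the proof of Theorem~\ref{thm:af-revenue-hardness}(i) and Part~2 of the proof of Proposition~\ref{prop:full-info-approx}. Take the state space to be the product of finite coordinate domains, indexed by $U$, with a full-support prior. Choose it so that the coordinates in $R_j$ are, for instance, uniform and independent bits.

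Give type $j$ a safe action $a^0_j$ with constant payoff $1-v_j$. Also give it one guessing action $a_j^{y}$ for every assignment $y$ of the coordinates in $R_j$, paying $1$ if $\omega|_{R_j}=y$ and $0$ otherwise. All other actions pay $0$.

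The value of a bundle is then checked in two cases.

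\begin{itemize}
\item A deterministic bundle $B$ reveals exactly the coordinates in $\bigcup_{R\in B}R$. If this union contains $R_j$, type $j$ guesses correctly and gets payoff $1$, so $V_j=v_j$. This is also exactly the case $E_B\succeqB E_{R_j}$, by the partition-refinement criterion of Proposition~\ref{prop:query-blackwell}.
\item Otherwise some coordinate $i\in R_j$ stays unrevealed. With independent coordinates its posterior stays uniform, so every guessing action succeeds with probability at most $1/|\text{dom}_i|\le 1/2$. I need this to fall weakly below $1-v_j$.
\end{itemize}

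This last requirement is the one delicate point, since $v_j$ may be close to $1$. I would handle it by enlarging each coordinate's domain to size $K$ with $1/K\le 1-\max_j v_j$, keeping the prior uniform and independent. Then $a^0_j$ is optimal both at the prior and after any insufficient bundle, so $V_j=0$ there. Since the family $\mathcal R$ is finite and the construction does not even depend on it, this finishes part 1.

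For part 2, I would argue by contradiction using continuity of $\sigma\mapsto V(E_\sigma)$ together with a limit argument as $\sigma\to 0$. First, $E_\sigma\not\succeqB T$ for every $\sigma>0$. The two Gaussian laws $N(0,\sigma^2)$ and $N(1,\sigma^2)$ are mutually absolutely continuous. Any garbling $\Gamma$ would need $\int\Gamma(\{1\}\mid y)\,dN(1,\sigma^2)=1$ and $\int\Gamma(\{1\}\mid y)\,dN(0,\sigma^2)=0$, which is impossible. Hence the hypothesis forces $V(E_\sigma)=0$ for all $\sigma>0$, while $V(T)=v>0$.

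Next I show $W(E_\sigma)\to W(T)$ as $\sigma\to 0$. Use the decision rule that thresholds at $1/2$, playing the action optimal at $\omega=1$ when $Y>1/2$ and the action optimal at $\omega=0$ otherwise. Its expected payoff is at least $W(T)-\Pr(\text{error})$, since payoffs lie in $[0,1]$. The error probability is $\Pr(Z>1/(2\sigma))\to 0$. Therefore $V(E_\sigma)\ge v-o(1)>0$ for small $\sigma$, a contradiction.

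The main obstacle is really only the bookkeeping in part 1: making the unrevealed-coordinate posterior flat enough that guessing never beats the safe action, uniformly over types. Domain enlargement resolves this. Part 2 is routine once non-dominance of the Gaussian family is noted.
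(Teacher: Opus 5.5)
Your proposal is correct and follows the paper's proof. In part~1 you use the same construction: a uniform product state space with domain size satisfying $1/K \le 1-\max_j v_j$, a safe action paying $1-v_j$, and one guessing action per value of $\omega_{R_j}$. In part~2 you use the same contradiction, namely that $E_\sigma \not\succeqB T$ by mutual absolute continuity while $V(E_\sigma)$ approaches $v$. The one difference is how you get that limit: you lower-bound $W(E_\sigma)$ by $W(T)-\Pr(Z>1/(2\sigma))$ using the explicit threshold-at-$1/2$ rule, whereas the paper shows the posterior converges to $0$ or $1$ and applies dominated convergence to the continuous value function $g$. Your one-sided bound suffices for the contradiction and is slightly more elementary.
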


The finite restriction in the positive part is natural: the types of  appropriate deterministic queries that the seller can offer in the real data market are limited, and repeated purchases of a deterministic product do not create new information, so a finite deterministic menu induces only finitely many Blackwell equivalence classes of bundles.

\begin{proof}
 We first prove part~(1). Let \(L\) be an integer such that \(1 / L \leqslant 1 - v_j\) for every \(j \in [n]\). Let the state space be \(\Omega = [L]^U\) with the uniform prior, and write \(\omega_R\) for the restriction of \(\omega\) to coordinates in \(R\). For each \(R \subseteq U\), the experiment \(E_R\) sends the deterministic signal \(\omega_R\). If a bundle \(B = (R^1, \ldots, R^k)\) is purchased, then \(E_B \simB E_{R(B)}\), where \(R(B) = \bigcup_{\ell = 1}^k R^\ell\), with \(R(B) = \emptyset\) for the empty bundle. Therefore \(E_B \succeqB E_{R_j}\) if and only if \(R_j \subseteq R(B)\).

 The common action space is $A = \{a^0\} \cup \{a_{j, x}: j \in [n],\ x \in [L]^{R_j}\}$. For type \(j\), define \(u_j(\omega, a^0) = 1 - v_j\), \(u_j(\omega, a_{j, x}) = 1\{\omega_{R_j} = x\}\), and \(u_j(\omega, a_{\ell, x}) = 0\) for every \(\ell \neq j\). All payoffs lie in \([0, 1]\). Without information, every action \(a_{j, x}\) succeeds with probability \(L^{-|R_j|} \leqslant 1 / L \leqslant 1 - v_j\), and every action intended for another type gives payoff zero. Hence type \(j\) chooses \(a^0\), and \(U_j^0 = 1 - v_j\).

 Consider any bundle \(B\). If \(R_j \subseteq R(B)\), then after observing \(\omega_{R(B)}\), type \(j\) knows \(\omega_{R_j}\) and chooses the action \(a_{j, \omega_{R_j}}\). Thus \(W_j(E_B) = 1\), and \(V_j(E_B) = v_j\). If \(R_j \not\subseteq R(B)\), then at least one coordinate in \(R_j\) remains unrevealed. Conditional on any realized signal of \(E_B\), the vector \(\omega_{R_j}\) is still uniform over at least \(L\) possible values. Thus every action \(a_{j, x}\) has conditional expected payoff at most \(1 / L \leqslant 1 - v_j\), and $a^0$ remains optimal after every signal. Hence \(W_j(E_B) = 1 - v_j\), and \(V_j(E_B) = 0\). This proves the threshold formula on every finite bundle from \(\mathcal R\).

 We now prove part~(2). Fix any finite action Bayesian decision problem with payoffs in \([0, 1]\). For \(q \in [0, 1]\), define $g(q) = \max_{a \in A}\bigl((1 - q)u(0, a) + q u(1, a)\bigr)$. Since \(A\) is finite, \(g\) is continuous. The complete information payoff is \(W(T) = \mu(0)g(0) + \mu(1)g(1)\).

 Let \(\pi_\sigma(Y) = \Pr(\omega = 1 \mid Y)\) be the posterior under \(E_\sigma\). The payoff from \(E_\sigma\) is \(W(E_\sigma) = \mathbb E[g(\pi_\sigma(Y))]\). The likelihood ratio satisfies
 \[ \frac{\pi_\sigma(y)}{1 - \pi_\sigma(y)} = \frac{\mu(1)}{\mu(0)}\exp\left(\frac{2y - 1}{2\sigma^2}\right). \]
 Conditional on \(\omega = 0\), we have \(Y = \sigma Z\), so the exponent above is \(Z / \sigma - 1 / (2\sigma^2)\), which tends to \(-\infty\) almost surely. Conditional on \(\omega = 1\), we have \(Y = 1 + \sigma Z\), so the exponent is \(1 / (2\sigma^2) + Z / \sigma\), which tends to \(+\infty\) almost surely when \(\sigma \to 0\). Hence \(\pi_\sigma(Y) \to 0\) almost surely conditional on \(\omega = 0\), and \(\pi_\sigma(Y) \to 1\) almost surely conditional on \(\omega = 1\). Since \(g\) is bounded and continuous, dominated convergence gives \(W(E_\sigma) \to W(T)\) as \(\sigma \to 0\). Therefore \(V(E_\sigma) \to V(T)\), because the prior payoff \(U^0\) is independent of \(\sigma\).

 It remains to show that \(E_\sigma \not\succeqB T\) for every \(\sigma > 0\). Indeed, if \(E_\sigma\) Blackwell dominated \(T\), then there would be a measurable function \(h:\mathbb R \to [0, 1]\) such that \(\mathbb E[h(Y) \mid \omega = 0] = 1\) and \(\mathbb E[h(Y) \mid \omega = 1] = 0\). However, the distributions \(N(0, \sigma^2)\) and \(N(1, \sigma^2)\) are mutually absolutely continuous, so the first equality implies \(h(Y) = 1\) almost surely under both states, contradicting the second equality. Thus the noisy continuous experiment never Blackwell dominates the exact target experiment.

 If an exact threshold representation with value \(v > 0\) existed on \(\{T\} \cup \{E_\sigma : \sigma > 0\}\), then \(V(T) = v\) and \(V(E_\sigma) = 0\) for every \(\sigma > 0\). This contradicts \(V(E_\sigma) \to V(T) = v\). Hence no such Bayesian decision problem exists.
\end{proof}

\begin{proof}[Proof of Proposition~\ref{prop:unified}]
 We first prove necessity. Suppose monotonicity fails, so that \(E\succeqB F\) but \(p(E) < p(F)\), which constructs an arbitrage. Suppose instead that subadditivity fails for some \(E_1, E_2\), so that $p(E_1 \otimes E_2) > p(E_1) + p(E_2)$. Let $F = E_1 \otimes E_2$. Then \(E_1 \otimes E_2 \succeqB F\) but \(p(F) > p(E_1) + p(E_2)\), which again constructs an arbitrage.

 For sufficiency, consider any bundle $E_1 \otimes \cdots \otimes E_k \succeqB F$. By Blackwell monotonicity and repeated parallel subadditivity, $p(F) \leqslant p(E_1 \otimes \cdots \otimes E_k) \leqslant p(E_1) + p(E_2 \otimes \cdots \otimes E_k) \leqslant \cdots \leqslant \sum_{r = 1}^k p(E_r)$. Thus no arbitrage exists.
\end{proof}

\begin{proof}[Proof of Theorem~\ref{thm:threshold-blackwell}]
 1 \(\Rightarrow\) 2 is trivial according to the definition of Blackwell dominance. For 2 \(\Rightarrow\) 1, we show that profitable deviations are exactly the cheaper Blackwell substitutes. Under the threshold assumption, if \(E_B \not\succeqB E_\theta\), then \(V_\theta(E_B) = 0\), and since prices are nonnegative and \(t_\theta \leqslant v_\theta\), such a bundle cannot be better than the assigned net utility \(v_\theta - t_\theta\). Hence AF is exactly the requirement that no bundle \(B\) with \(E_B \succeqB E_\theta\) has total price less than \(t_\theta\).
\end{proof}

\begin{proof}[Proof of Proposition~\ref{prop:utility-specific-gap}]
 Let the state of the world be \(\omega = (X, Y, Z) \in \{0, 1\}^3\), where \(X, Y, Z\) are independent bits. There are three buyer types, each buyer type has the same uniform prior on \(\{0, 1\}^3\), and the three type masses are \(\lambda_A = \lambda_B = \lambda_H = 1\). Type \(A\) wants to guess \(X\): it receives payoff \(1/2\) for a correct guess and \(0\) otherwise, so the maximum value of any experiment for type \(A\) is \(1 / 4\). Type \(B\) is symmetric and wants to guess \(Y\), again with maximum value \(1 / 4\). Type \(H\) wants to guess the ordered pair \((X, Y)\). It receives payoff \(1\) if both bits are guessed correctly and \(0\) otherwise, so the maximum value of any experiment for type \(H\) is \(3 / 4\).

 Consider three experiments $E_X, E_Y, F$. Experiment \(E_X\) reveals \(X\), \(E_Y\) reveals \(Y\), and \(F\) reveals the full state \((X, Y, Z)\). The posted prices are $t_X = 1/4, t_Y = 1/4, t_F = 3/4$. If type \(A\) buys \(E_X\), type \(B\) buys \(E_Y\), and type \(H\) buys \(F\), then the Blackwell AF constraints (note that $E_X$ and $E_Y$ cannot be used to construct a cheaper Blackwell substitute for $F$) and IR constraints are satisfied, and the menu obtains total revenue $1/4 + 1/4 + 3/4 = 5/4$. No IR menu can obtain more, because the price for each type cannot exceed its maximum value. Therefore the Blackwell AF optimum is \(5 / 4\).

 However, the AF constraints are not satisfied. Type \(H\) can buy \(E_X\) and \(E_Y\), learn \((X, Y)\), thus obtaining the same value as from \(F\), but pay only \(1/2\). We now prove that this deviation forces a revenue loss for \emph{any} AF menu. Consider any direct menu with assigned experiments \(E_A, E_B, E_H\) and prices \(t_A, t_B, t_H\) satisfying AF. Let \(S = t_A+t_B\). IR gives $t_A \leqslant 1/4, t_B\leqslant 1/4, t_H\leqslant 3/4$. For any experiment \(E\), let \(p_X(E)\) be the maximum probability of correctly guessing \(X\) after observing \(E\) under the common prior. Then the value of \(E\) for type \(A\) is $V_A(E) = (p_X(E)-1/2)/2$.
 Since type \(A\) buys \(E_A\) at price \(t_A\), IR implies \(V_A(E_A)\geqslant t_A\), hence $p_X(E_A)\geqslant 1/2 + 2t_A$. Similarly, if \(p_Y(E_B)\) is the best probability of guessing \(Y\) after \(E_B\), then $p_Y(E_B) \geqslant 1/2 + 2t_B$. 

 Now let type \(H\) buy the bundle \(E_A\otimes E_B\). It can use the optimal rule for guessing \(X\) from \(E_A\) and the optimal rule for guessing \(Y\) from \(E_B\), and then output the pair of guesses. Let \(C_X\) be the event that the first rule guesses \(X\) correctly and \(C_Y\) the event that the second rule guesses \(Y\) correctly. By the union bound, $\Pr[C_X \cap C_Y] \geqslant (1/2 + 2t_A) + (1/2 + 2t_B) - 1 = 2S$. Type \(H\) can also ignore all purchased signals and get outside success probability \(1/4\). Therefore $V_H(E_A\otimes E_B) \geqslant \max\{0,2S-1/4\}$. AF for type \(H\) requires $V_H(E_H)-t_H \geqslant V_H(E_A \otimes E_B)-S$.
 Since \(V_H(E_H) \leqslant 3/4\), we get $3/4 - t_H \geqslant \max\{0, 2S - 1/4\}-S$. If \(S < 1/8\), then total revenue is at most \(S + 3/4 < 7/8 < 1\). If \(S \geqslant 1/8\), then the maximum term equals \(2S - 1/4\), so $3/4 - t_H \geqslant S - 1/4$, hence $t_H \leqslant 1 - S$. Thus $t_A+t_B+t_H=S+t_H\leqslant 1$.
 So every AF menu has total revenue at most \(1\). The bound is tight: post \((E_X,1/4)\), \((E_Y,1/4)\), and \((F,1/2)\). Therefore the AF optimum is \(1\).
\end{proof}

\begin{proof}[Proof of Proposition~\ref{prop:query-blackwell}]
 We first prove part~(1). Assume that $E_{Q_1}\succeqB E_{Q_2}$. Then there is a post processing map from the output of $Q_1$ to the output distribution of $Q_2$. If $Q_1(D) = Q_1(D')$, the input to this post processing map is identical under $D$ and $D'$. Therefore the post processed output distribution must also be identical. Since $E_{Q_2}$ is deterministic, this implies $Q_2(D) = Q_2(D')$. Hence every cell of $P_{Q_1}$ is contained in a cell of $P_{Q_2}$, so $P_{Q_1} \succeq P_{Q_2}$.

 Conversely, suppose $P_{Q_1} \succeq P_{Q_2}$. Then $Q_2$ is constant on every cell of $P_{Q_1}$. Hence there exists a deterministic map $g: \mathcal A_{Q_1} \to \mathcal A_{Q_2}$ such that $g(Q_1(D)) = Q_2(D)$ for every $D\in\mathcal D$. Applying $g$ to the signal of $E_{Q_1}$ produces exactly the signal of $E_{Q_2}$, so $E_{Q_1} \succeqB E_{Q_2}$.

 We now prove part~(2). Define the pair query $\widetilde Q(D) = \left(Q_1(D), Q_2(D)\right)$. We first show that $E_{Q_1} \otimes E_{Q_2} = E_{\widetilde Q}$. Indeed, for every \(D \in \mathcal D\) and every \((a_1, a_2)\in \mathcal A_{Q_1} \times \mathcal A_{Q_2}\),
 \[ \begin{aligned}
   K_{E_{Q_1} \otimes E_{Q_2}}\left((a_1, a_2) \mid D\right) &= K_{Q_1}(a_1 \mid D)\, K_{Q_2}(a_2 \mid D) = \mathbf 1\{Q_1(D) = a_1\}\mathbf 1\{Q_2(D) = a_2\} \\
   &= \mathbf 1\{\widetilde Q(D) = (a_1, a_2)\} = K_{E_{\widetilde Q}}\left((a_1, a_2) \mid D\right).
  \end{aligned} \]

 Next we identify the partition induced by \(\widetilde Q\). For \(D,D'\in\mathcal D\), $D \sim_{\widetilde Q} D' \iff \widetilde Q(D) = \widetilde Q(D') \iff Q_1(D) = Q_1(D')\ \text{and}\ Q_2(D) = Q_2(D')$. This is exactly the condition that \(D\) and \(D'\) lie in the same cell of \(P_{Q_1} \vee P_{Q_2}\). Therefore $P_{\widetilde Q} = P_{Q_1} \vee P_{Q_2}$.
 
 If \(P_Q=P_{Q_1}\vee P_{Q_2}\), then \(P_Q=P_{\widetilde Q}\). Applying part~(1) first to \((Q,\widetilde Q)\) and then to \((\widetilde Q,Q)\), we obtain $E_Q \simB E_{\widetilde Q} = E_{Q_1} \otimes E_{Q_2}$. Conversely, suppose $E_Q \simB E_{Q_1} \otimes E_{Q_2} = E_{\widetilde Q}$. Applying part~(1) in both directions gives \(P_Q = P_{\widetilde Q} = P_{Q_1} \vee P_{Q_2}\), proving part~(2).
\end{proof}

\begin{proof}[Proof of Proposition~\ref{prop:gaussian-matrix}]
 We first prove part~(1). Suppose $J_1 \succeq J_2$. Since inversion reverses the Loewner order on positive definite matrices, we have $J_2^{-1} - J_1^{-1} \succeq 0$. Given a sample $Y_1 = \theta + \varepsilon_1$ from $G_{J_1}$, draw independent noise $\eta \sim \mathcal N(0, J_2^{-1} - J_1^{-1})$ and output $Y_1 + \eta$. Conditional on $\theta$, this output has distribution $\mathcal N(\theta, J_1^{-1} + J_2^{-1} - J_1^{-1}) = \mathcal N(\theta, J_2^{-1})$. Thus $G_{J_2}$ is obtained from $G_{J_1}$ by post processing, and $G_{J_1} \succeqB G_{J_2}$.

 Conversely, suppose \(G_{J_1} \succeqB G_{J_2}\). To prove \(J_1 \succeq J_2\), it is sufficient to show that $v^\top J_1 v \geqslant v^\top J_2 v, \forall v \in \R^d$. Fix \(v \in \R^d\). The case \(v = 0\) is trivial, so assume \(v \neq 0\). Consider the binary decision problem with prior $\Pr(\theta = v) = \Pr(\theta = -v) = \frac12$, action set \(A = \{+, -\}\), and utility $u(\theta, +) = \mathbf 1\{\theta = v\}, u(\theta, -) = \mathbf 1\{\theta = -v\}$. Thus the buyer's objective is simply to identify whether the state is \(v\) or \(-v\).

 For \(r \in \{1, 2\}\), let \(Y_r = \theta + \varepsilon_r\), where \(\varepsilon_r \sim \mathcal N(0, J_r^{-1})\). Write $q_r = v^\top J_r v$. Under the two possible states, the signal \(Y_r\) has distributions $Y_r \mid (\theta = v) \sim \mathcal N(v, J_r^{-1}), Y_r \mid (\theta = -v) \sim \mathcal N(-v, J_r^{-1})$. The log-likelihood ratio between these two Gaussian distributions is
 \[ \log\frac{f_{r, +}(y)}{f_{r, -}(y)} = -\frac12(y - v)^\top J_r(y - v) + \frac12(y + v)^\top J_r(y + v) = 2\, v^\top J_r y. \]
 Since the prior probabilities are equal, a Bayes optimal decision rule chooses \(+\) if and only if \(v^\top J_r y\geqslant 0\). Define $Z_r = v^\top J_r Y_r$. When \(\theta = v\), we have $Z_r = v^\top J_r v + v^\top J_r\varepsilon_r$, so $Z_r \mid (\theta = v) \sim \mathcal N(q_r, q_r)$, since $\operatorname{Var}(v^\top J_r\varepsilon_r) = v^\top J_rJ_r^{-1}J_r v = v^\top J_r v = q_r$. By symmetry, $Z_r \mid (\theta = -v) \sim \mathcal N(-q_r, q_r)$. Therefore the expected utility from \(G_{J_r}\) in this decision problem is $\Pr(Z_r \geqslant 0 \mid \theta = v) = \Pr\!\left(\mathcal N(q_r, q_r) \geqslant 0\right) = \Phi(\sqrt{q_r})$, where \(\Phi\) is the standard normal cdf. Since \(G_{J_1} \succeqB G_{J_2}\), Blackwell's theorem implies that \(G_{J_1}\) gives weakly higher value than \(G_{J_2}\) in every prior and decision problem. Hence $\Phi(\sqrt{q_1}) \geqslant \Phi(\sqrt{q_2})$. Since \(\Phi\) is strictly increasing on \(\R\), this yields \(q_1 \geqslant q_2\), that is, $v^\top J_1 v \geqslant v^\top J_2 v$.

 We next prove part~(2). Let $Y_r = \theta + \varepsilon_r$, where $\varepsilon_r \sim \mathcal N(0, J_r^{-1})$ for $r = 1, 2$, with $\varepsilon_1, \varepsilon_2$ independent conditional on $\theta$. Set $J = J_1 + J_2$ and define the weighted average $T = J^{-1}(J_1Y_1 + J_2Y_2)$. Then $T = \theta + J^{-1}(J_1\varepsilon_1 + J_2\varepsilon_2)$, and the covariance of the noise term is $J^{-1}(J_1J_1^{-1}J_1 + J_2J_2^{-1}J_2)J^{-1} = J^{-1}(J_1 + J_2)J^{-1} = J^{-1}$. Hence $T$ is distributed as $G_J$, so $G_{J_1} \otimes G_{J_2} \succeqB G_J$.

 For the reverse Blackwell domination, define the residual $R = Y_1 - Y_2 = \varepsilon_1 - \varepsilon_2$. The covariance between $T - \theta$ and $R$ is $\operatorname{Cov}\left(J^{-1}(J_1\varepsilon_1 + J_2\varepsilon_2), \varepsilon_1 - \varepsilon_2\right) = J^{-1}(J_1J_1^{-1} - J_2J_2^{-1}) = 0$. Since the vector is jointly Gaussian, $T$ and $R$ are independent conditional on $\theta$. Moreover, $R \sim \mathcal N(0, J_1^{-1} + J_2^{-1})$, which does not depend on $\theta$. Hence, given only a draw $T$ from $G_J$, a garbling can independently sample such an $R$ and reconstruct $Y_1 = T + J^{-1}J_2R, Y_2 = T - J^{-1}J_1R$ (these equations are obtained by solving $T = J^{-1}(J_1Y_1 + J_2Y_2)$ and $R = Y_1 - Y_2$). Therefore $G_J \succeqB G_{J_1} \otimes G_{J_2}$, proving Blackwell equivalence.
\end{proof}

\begin{proof}[Proof of Proposition~\ref{prop:mi-noarb}]

 We first prove Blackwell monotonicity. Suppose \(E \succeqB F\). By definition, there exists a garbling kernel $\Gamma: S_E \to \Delta(S_F)$
 such that we can generate a signal \(Y_E\) from \(E\) conditional on the state \(X\), and then, conditional on \(Y_E = s\), generate \(Y_F\) from \(\Gamma(\cdot \mid s)\). Hence the triple \((X, Y_E, Y_F)\) forms a Markov chain $X \to Y_E \to Y_F$. By the data processing inequality, we have $I(X; Y_E) \geqslant I(X; Y_F)$. Equivalently, $p_\mu(E) \geqslant p_\mu(F)$.

 We next prove subadditivity. Let \(E_1\) and \(E_2\) be two experiments. Conditional on the state \(X = \omega\), let \(Y_1\) be drawn from \(K_{E_1}(\cdot \mid \omega)\) and let \(Y_2\) be drawn independently from \(K_{E_2}(\cdot \mid \omega)\). By the definition of \(E_1 \otimes E_2\), we have $p_\mu(E_1 \otimes E_2) = I(X; Y_1, Y_2)$. Applying the chain rule for mutual information gives $I(X; Y_1, Y_2) = I(X; Y_1) + I(X; Y_2 \mid Y_1)$. It remains to show that $I(X; Y_2 \mid Y_1) \leqslant I(X; Y_2)$.

 Since \(Y_1\) and \(Y_2\) are conditionally independent given \(X\), we have $I(Y_1; Y_2 \mid X) = 0$. Now apply the chain rule to the triple \((Y_2, X, Y_1)\) in two different ways: $I(Y_2; X, Y_1) = I(Y_2; X) + I(Y_2; Y_1 \mid X)$, and $I(Y_2; X, Y_1) = I(Y_2; Y_1) + I(Y_2; X \mid Y_1)$. Since \(I(Y_2; Y_1 \mid X) = 0\), comparison of these two identities yields $I(X; Y_2 \mid Y_1) = I(X; Y_2) - I(Y_1; Y_2) \leqslant I(X; Y_2)$. Thus we have $I(X; Y_1, Y_2) \leqslant I(X; Y_1) + I(X; Y_2)$. Equivalently, $p_\mu(E_1 \otimes E_2) \leqslant p_\mu(E_1) + p_\mu(E_2)$.
\end{proof}

\begin{proof}[Proof of Corollary~\ref{cor:mi-query-gaussian}]
 For part~(1), a deterministic query satisfies \(H(Q(D) \mid D) = 0\). Hence $p_\mu(E_Q) = I(D; Q(D)) = H(Q(D)) - H(Q(D) \mid D) = H(Q(D))$.

 For part~(2), let the signal of \(G_J\) be $Y_{G_J} = \theta + \varepsilon, \varepsilon \sim \mathcal N(0, J^{-1})$, where \(\varepsilon\) is independent of \(\theta\). Then $Y_{G_J} \sim \mathcal N(0, \Sigma_0 + J^{-1})$. Therefore $p_\mu(G_J) = I(\theta; Y_{G_J}) = h(Y_{G_J}) - h(Y_{G_J} \mid \theta)$. Using the Gaussian differential entropy formula $h(Z) = \frac12 \log((2\pi e)^d \det\!\Sigma_Z)$, we obtain $p_\mu(G_J) = \frac12 \log\det\!\left((\Sigma_0 + J^{-1})J\right) = \frac12 \log\det(I + \Sigma_0 J)$.
\end{proof}

\begin{proof}[Proof of Proposition~\ref{prop:threshold-target-lp}]
  By Proposition~\ref{prop:bundle-revelation}, it is enough to consider direct AF menus. Fix such a menu \(\mathcal M=\{(E_\theta, t_\theta)\}_{\theta \in \Theta}\). For each target \(i\), define its indirect cost in \(\mathcal M\) by
  \[ \rho_i = \inf\left\{\sum_{\theta \in \Theta} c_\theta t_\theta: \bigotimes_{\theta \in \Theta} E_\theta^{\otimes c_\theta} \succeqB T_i,\; 
  c \in \mathbb Z_+^\Theta\right\}. \]
  If no finite bundle can synthesize \(T_i\), set \(\rho_i = \infty\). Consider a type \(\theta \in \Theta_i\) that is willing to buy its assigned product \((E_\theta, t_\theta)\) under \(\mathcal M\). Then we have \(E_\theta \succeqB T_i\). According to the definition of $\rho_i$, \(\rho_i \leqslant t_\theta\). If \(\rho_i < t_\theta\), then it contradicts AF. Hence every positive payment made by a type in \(\Theta_i\) equals \(\rho_i\), and the revenue extracted from target class \(i\) is at most \(\rho_iD_i(\rho_i)\), interpreted as zero when \(\rho_i = \infty\).

  We claim that for any direct AF menu \(\mathcal M\), its revenue is no larger than that of another menu that only sells the target products \(T_1, \ldots, T_M\) with price vector \(p\), where \(p_i = \rho_i\) if \(\rho_i < \infty\), and \(p_i = P\) otherwise, for some \(P\) larger than all finite \(\rho_j\) and all values \(v_\theta\). We first prove that \(p\) is feasible for~\eqref{eq:threshold-target-demand-program}: fix \(i \in [M]\) and \(c \in \mathcal C_i\), so \(T^c \succeq_B T_i\). If \(c_j > 0\) for some \(j\) with \(\rho_j = \infty\), then \(c \cdot p \geqslant P \geqslant p_i\), and the constraint holds. Then consider the case where \(c_j > 0\) only for targets \(j\) with \(\rho_j < \infty\). If \(\|c\|_1 = 0\), then the empty bundle Blackwell dominates \(T_i\), so \(p_i = \rho_i = 0 \leqslant c \cdot p\). If \(\|c\|_1 > 0\), then for every \(\varepsilon > 0\) and every \(j\) with \(c_j > 0\), by the definition of \(\rho_j\), there exists a finite bundle \(B^j\) in \(\mathcal M\) such that \(E_{B^j} \succeq_B T_j\) and \(t_{B^j} \leqslant \rho_j + \varepsilon / \|c\|_1\). Taking \(c_j\) copies of each \(B^j\) yields a finite bundle \(B\) in \(\mathcal M\) satisfying \(E_B \succeq_B T^c \succeq_B T_i\), with total price at most \(\sum_j c_j \rho_j + \varepsilon\). Therefore \(\rho_i \leqslant \sum_j c_j \rho_j + \varepsilon\). Letting \(\varepsilon \to 0\), we obtain \(\rho_i \leqslant \sum_j c_j \rho_j\). In particular, \(\rho_i < \infty\), and thus \(p_i = \rho_i \leqslant \sum_j c_j p_j = c \cdot p\). Hence \(p\) satisfies all constraints of Program~(4). Also, the revenue of the target-product menu with price vector \(p\) is \(\sum_i p_i D_i(p_i)\), which is exactly the objective of Program~(4).
  
  Conversely, if \(p\) is feasible for the program~\eqref{eq:threshold-target-demand-program}, then directly selling the target menu \(\{(T_i, p_i): i \in [M]\}\) is AF and its revenue is exactly \(\sum_i p_iD_i(p_i)\). Therefore, the optimization problem over direct Blackwell AF menus is equivalent to the program~\eqref{eq:threshold-target-demand-program}.
\end{proof}

\begin{proposition} \label{prop:threshold-target-lp-hardness}
  Unless \(P = NP\), there is no PTAS for \eqref{eq:threshold-target-demand-program}.
\end{proposition}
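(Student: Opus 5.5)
We reduce from Maximum Independent Set on graphs of maximum degree $\Delta\geqslant 3$, mirroring the proof of Theorem~\ref{thm:af-revenue-hardness}(i), but now working directly with Program~\eqref{eq:threshold-target-demand-program}. This is simpler because prices are attached to targets and the synthesis sets $\mathcal C_i$ are combinatorial. Given $G=(V,E)$ with $N=|V|$, $M=|E|$ and independence number $\alpha$, take fields $U=V$. Introduce one target per vertex, $T_v=E_{\{v\}}$, and one target per edge, $T_e=E_{\{u,v\}}$, where $E_R$ is the deterministic experiment revealing the coordinates in $R$. The instance has one threshold type per target, each with mass $1/(N+M)$: vertex types have value $1$ and edge types have value $C$, where $C$ is a large constant depending only on $\Delta$ and the target accuracy $\zeta$. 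By Proposition~\ref{prop:certified-deployment-threshold}(1), these are genuine threshold utilities realized by a finite decision problem. As in Proposition~\ref{prop:query-blackwell}, $T^c\succeqB T_i$ holds iff the union of the sets purchased in $c$ covers the set of $i$. Hence $\mathcal C_e$ consists of bundles containing $T_e$, or covering $\{u,v\}$ by vertex and other edge targets. Since $D_i$ is a single step, revenue is $\sum_{i:\,p_i\leqslant v_i} p_i$, up to normalization.

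For the lower bound, given an independent set $I$, set $p_v=1$ for $v\in I$, $p_v=2C$ (unsold) for $v\notin I$, and $p_e=C$. Feasibility must be checked against all covers. A vertex target is covered only by bundles containing itself or an edge through it, and edges cost $C\geqslant1$. An edge target $\{u,v\}$ is covered either by using some edge (cost $\geqslant C$) or by buying both $T_u,T_v$; the latter costs at least $1+2C>C$ because $u,v$ are not both in $I$. This gives $\OPT\geqslant CM+\alpha$.

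For the upper bound, consider any feasible $p$. Cap each price at its value, which only loses revenue from unsold items; each edge then contributes at most $C$. The key constraint is $p_e\leqslant p_u+p_v$. If both endpoints are sold ($p_u,p_v\leqslant 1$), then $p_e\leqslant 2$, so this edge loses at least $C-2$ relative to the maximum $C$. Let $H$ be the set of sold vertices. Then revenue is at most $CM+|H|-(C-2)|E[H]|\leqslant CM+|H|-|E[H]|$. Deleting one endpoint of each edge in $G[H]$ shows $|H|-|E[H]|\leqslant\alpha$. The same extraction is constructive: from any menu with revenue $CM+\widehat\alpha$ we obtain an independent set of size at least $\widehat\alpha$. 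Unlike Theorem~\ref{thm:af-revenue-hardness}(i), there is no fractional $x_v$ or threshold $\tau$ here, because sold means exactly $p_v\leqslant 1$.

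Finally, suppose a PTAS with accuracy $\eta$ returns revenue at least $(1-\eta)(CM+\alpha)$. Then $\widehat\alpha\geqslant\alpha-\eta(CM+\alpha)\geqslant(1-\eta(CD_\Delta+1))\alpha$, using $M\leqslant D_\Delta\alpha$ with $D_\Delta=\Delta(\Delta+1)/2$ as in \eqref{eq:mis-edge-alpha-new}. Choosing $\eta\leqslant \zeta/(CD_\Delta+1)$ with $C=3$ fixed yields a $(1-\zeta)$-approximation for bounded-degree MIS, contradicting \cite{alimonti2000some}. The main obstacle is the feasibility check in the lower bound: we must ensure no mixed cover, for example a chain of edge targets, gives a cheaper synthesis than a posted price. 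I would handle this with the observation that every cover of an edge set either contains an edge target (cost $\geqslant C$) or contains both endpoint vertex targets. The upper-bound step also needs care in capping unsold prices without breaking the inequality $p_e\leqslant p_u+p_v$.
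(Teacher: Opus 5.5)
Your reduction is essentially the paper's. You use vertex and edge threshold types over deterministic subset queries with values in ratio $1:3$; the paper uses $1/3$ and $1$, so your $C=3$ instance is the paper's scaled by $3$. The other ingredients also match: the lower bound $\OPT\geqslant CM+\alpha$ from an independent set, and the upper bound from $p_e\leqslant p_u+p_v$ on edges of $G[H]$ followed by deleting one endpoint per induced edge. The paper argues over direct AF menus and transfers the result to \eqref{eq:threshold-target-demand-program} through Proposition~\ref{prop:threshold-target-lp}. You work in the program directly. Your explicit extraction of an independent set of size at least $\widehat\alpha$ is slightly more complete than the paper's last step, which only recovers the number $\widehat\alpha$.

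However, one step fails as written: your lower-bound price vector is not feasible. In \eqref{eq:threshold-target-demand-program} every target, including one nobody buys, must satisfy $p_i\leqslant c\cdot p$ for all $c\in\mathcal C_i$. Take $v\notin I$ with an incident edge $e\ni v$. The single target $T_e$ reveals $v$, so $e_e\in\mathcal C_v$, and feasibility forces $p_v\leqslant p_e=C$, contradicting $p_v=2C$. Your check of vertex targets compared edge prices only with the price $1$ of vertices in $I$.

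The repair is one line: set $p_v=C$ for $v\notin I$. Such a vertex type is still priced out since $C>1$, and every cover of $T_v$ costs at least $C$. A vertex-only cover of an edge target $\{u,v\}$ costs $p_u+p_v\geqslant 1+C>C$, because $u$ and $v$ are not both in $I$. Hence $\OPT\geqslant CM+\alpha$ still holds and the rest of your argument goes through unchanged.

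Two smaller remarks. First, the capping you worry about in the upper bound is unnecessary. Revenue is $\sum_{i:p_i\leqslant v_i}p_i$, and the inequality $p_e\leqslant p_u+p_v\leqslant 2$ is applied to the original feasible $p$, so nothing needs to remain feasible. Second, Proposition~\ref{prop:certified-deployment-threshold}(1) requires values in $(0,1)$. Rescale all values (say by $1/(2C)$) before invoking it; this does not change approximation ratios. Also, $C$ need not be large: the argument only uses $C\geqslant 3$, as your final paragraph already reflects.
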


\begin{proof}
  Given the equivalence established in Proposition~\ref{prop:threshold-target-lp}, we only need to prove APX-hardness of Program~\eqref{eq:unrestricted-program} under the threshold regime. We reduce from Maximum Independent Set on graphs with maximum vertex degree being a fixed constant \(\Delta \geqslant 3\). Let $G = (V, E), N := |V|, M := |E|$. Let $\alpha$ be the size of a maximum independent set in \(G\).

 \paragraph{Step 1: Construct the threshold instance.} Let the state space be $\Omega = \{\varnothing\} \cup \left\{\{v\}: v \in V\right\}$. For every subset \(R \subseteq V\), define the deterministic subset query $Q_R(\omega) := \omega \cap R, \omega \in \Omega$, and let \(E_R\) be the induced experiment. Let the buyer type space be \(\Theta = V \cup E\). For each type \(\theta \in \Theta\), define its target set \(T_\theta \subseteq V\) and target value \(v_\theta\) by
 \[ T_\theta = \begin{cases}
  \{v\}, & \theta = v \in V,\\
  \{u,v\}, & \theta = \{u, v\} \in E,
  \end{cases} \quad v_\theta = \begin{cases}
  1 / 3, & \theta \in V, \\
  1, & \theta \in E.
  \end{cases} \]
 
 We assume the prior mass of each type is \(1 / (N + M)\). To keep the algebra simple, we sometimes work with the unnormalized revenue objective $\widetilde R = \sum_{\theta \in \Theta} t_\theta$, and denote \(\widetilde R^*(G)\) as the optimal unnormalized revenue in the instance constructed from \(G\).

 \paragraph{Step 2: Lower bound from an independent set.} Let \(I \subseteq V\) be a maximum independent set. Consider the menu $\mathcal M_I = \{(E_{\{v\}}, 1 / 3): v \in I\} \cup \{(E_e, 1): e \in E\}$. We claim that \(\mathcal M_I\) is AF. For a vertex type \(v \in I\), any bundle that reaches the target \(E_{\{v\}}\) must contain \(\{v\}\). Such a bundle must therefore contain either the singleton query \(E_{\{v\}}\), which costs \(1 / 3\), or an edge query containing \(v\), which costs at least \(1\). Hence no bundle is cheaper than the intended purchase. For an edge type \(e = \{u, v\}\), any bundle that reaches the target \(E_{\{u, v\}}\) must contain both \(u\) and \(v\). Since \(I\) is independent, the menu does not contain both singleton queries \(E_{\{u\}}\) and \(E_{\{v\}}\). Therefore any bundle whose union contains \(\{u, v\}\) must contain at least one edge query, and hence costs at least \(1\). The intended edge query \(E_e\) itself costs exactly \(1\), so there is no profitable deviation. Thus \(\mathcal M_I\) is AF and earns unnormalized revenue $\widetilde R(\mathcal M_I) = M + \frac{|I|}{3} = M + \frac{\alpha}{3}$. Hence
 \begin{equation} \label{eq:mis-lower-bound}
  \widetilde R^*(G) \geqslant M + \frac{\alpha}{3}.
 \end{equation}

 \paragraph{Step 3: Upper bound for an arbitrary direct AF menu.} Let \(\mathcal M = \{(F_\theta,p_\theta)\}_{\theta\in\Theta}\) be any direct AF menu. Define $I = \{v \in V: F_v \succeq_B E_{\{v\}}\}$. Write \(E[I] = \{\{u, v\} \in E : u, v \in I\}\) for the set of edges with both endpoints in \(I\), and \(G[I] = (I, E[I])\) for the induced subgraph on \(I\). If \(v \notin I\), then vertex type \(v\) obtains value zero from its intended product, so individual rationality forces \(p_v = 0\). If \(v \in I\), then its value is \(1 / 3\), so individual rationality gives \(p_v \leqslant 1 / 3\). Therefore revenue from vertex types is at most
 \begin{equation} \label{eq:vertex-revenue-upper}
  \sum_{v \in V} p_v \leqslant \frac{|I|}{3}.
 \end{equation}

 Now fix an edge \(e = \{u, v\} \in E\). If both \(u, v \in I\), then $F_u \otimes F_v \succeq_B E_{\{u\}} \otimes E_{\{v\}} \sim_B E_{\{u, v\}}$. Thus edge type \(e\) could buy the bundle \((F_u, F_v)\) and obtain its full threshold value \(1\). AF therefore implies $p_e \leqslant p_u + p_v \leqslant \frac23$. If at least one of \(u, v\) does not belong to \(I\), then we only use individual rationality, which gives $p_e \leqslant 1$. Hence total revenue from edge types is at most
 \begin{equation} \label{eq:edge-revenue-upper}
  \sum_{e \in E} p_e \leqslant \frac23 \,|E[I]| + \left(M - |E[I]|\right).
 \end{equation}
 Combining~\eqref{eq:vertex-revenue-upper} and~\eqref{eq:edge-revenue-upper}, we have $\widetilde R(\mathcal M) \leqslant M + \frac{|I| - |E[I]|}{3}$. From the induced subgraph \(G[I]\), deleting one endpoint from each edge leaves an independent set of size at least \(|I| - |E[I]|\). Therefore $|I| - |E[I]| \leqslant \alpha$, and so
 \begin{equation} \label{eq:mis-upper-bound}
  \widetilde R(\mathcal M) \leqslant M + \frac{\alpha}{3}.
 \end{equation}
 Since \(\mathcal M\) was arbitrary, \eqref{eq:mis-lower-bound} and \eqref{eq:mis-upper-bound} imply $\widetilde R^*(G) = M + \frac{\alpha}{3}$.

 \paragraph{Step 4: Final reduction.} Returning to normalized type masses \(1 / (N + M)\), write \(R^*(G)\) for the optimal normalized revenue of the instance constructed from \(G\). Then $R^*(G) = \frac{M + \alpha / 3}{N + M}$. Suppose a PTAS for revenue maximization under AF existed. Then for every fixed \(\eta > 0\), it would output a revenue \(R\) satisfying $R \geqslant (1 - \eta) R^*(G)$.
 
 Define $\widehat\alpha = 3\left((N + M)R - M\right)$. Since \(R \leqslant R^*(G)\), we have \(\widehat\alpha \leqslant \alpha\). On the other hand,$\widehat\alpha \geqslant 3\left((1 - \eta)\left(M + \frac{\alpha}{3}\right) - M\right) = \alpha - 3\eta M - \eta\alpha$. Using~\eqref{eq:mis-edge-alpha-new}, we have $\widehat\alpha \geqslant \alpha - \eta\left(\frac{3\Delta(\Delta + 1)}2 + 1\right)\alpha$. Thus, by choosing \(\eta\) to be a sufficiently small constant, one would obtain a PTAS for Maximum Independent Set on bounded-degree graphs, a contradiction. This proves the result.
\end{proof}

\begin{proof}[Proof of Proposition~\ref{prop:arbitrage-detection-log-approx}]
  We first prove the hardness by reducing from weighted set cover.  An
  instance consists of $n$ sets $S_1, \ldots, S_n \subseteq U$, integer costs $c_1, \ldots, c_n$, and a budget $K$.  Let the state space be $\Omega = \{\bot\}\cup U$.  For every set $S_j$, create a binary experiment $E_j$ that reveals whether $\omega\in S_j$, and set its price to $c_j / H$, where $H = 4|U|(K + 1)$.  Also create the target experiment $F$, which reveals whether $\omega\in\{\bot\}$ or $\omega \in U$, at price $(K + 1 / 2) / H$.

  Consider bundles $E_J = \otimes_{j \in J}E_j$ for subsets $J \subseteq[n]$. If $\bigcup_{j \in J}S_j = U$, then $E_J$ perfectly distinguishes $\bot$ from all states in $U$, so $E_J \succeqB F$. The bundle is cheaper than $F$ exactly when $\sum_{j \in J} c_j / H < (K + 1 / 2) / H$, which is equivalent to $\sum_{j \in J} c_j \leqslant K$ since the costs are integral. Conversely, if $J$ is not a cover,  $E_J$ cannot Blackwell dominate $F$. Thus a Blackwell AF violation exists if and only if the set cover instance has a cover of cost at most $K$.

  Then consider the approximation algorithm. Note that we can restrict attention to integral price vectors \(p\) in \(\{0, 1, \ldots, V\}^M\) without loss of generality. For a pricing vector \(p = (p_i)_{i = 1}^M\), define \(m_p\) by \(m_p(i) = \min_{a \in \mathcal C_i} a \cdot p\), meaning that \(m_p(i)\) is the minimum total price of a finite bundle of menu targets that Blackwell dominates \(T_i\). Since \(e_i \in \mathcal C_i\), we have \(m_p \leqslant p\). Note that the vector \(m_p\) is Blackwell AF, and any AF vector \(\tilde p \leqslant p\) satisfies \(\tilde p_i \leqslant a \cdot \tilde p \leqslant a \cdot p\) for every \(a \in \mathcal C_i\), and hence \(\tilde p_i \leqslant m_p(i)\).

  Then we show that the vector \(m_p\) can be computed in polynomial time using the AF violation detection oracle. Since \(p\) is integral, \(m_p(i)\) is an integer in \(\{0, \ldots, p_i\}\). For an integer \(x \leqslant p_i\), set \(\tilde p_i = x\) and \(\tilde p_j = p_j\) for \(j \neq i\). A violation for target \(i\) under \(\tilde p\) exists if and only if \(m_p(i) < x\). Therefore binary search over \(x\) computes each \(m_p(i)\) using \(O(\log V)\) detection calls.

  Let \(\mathcal L = \{1, 2, 4, \ldots, 2^{\lfloor\log_2V\rfloor}\}\). For each \(L \in \mathcal L\), set \(p_i^L = \min\{v_\theta: \theta \in \Theta_i, \ v_\theta \geqslant L\}\) if this set is nonempty, and set \(p_i^L = L\) otherwise. If \(m_{p^L}\) is a price vector, its revenue is \(R(L) = \sum_{i = 1}^M m_{p^L}(i)D_i(m_{p^L}(i))\). Let \(L^\star\) maximize \(R(L)\) over \(L \in \mathcal L\). We show that \(m_{p^{L^\star}}\) achieves a \((1 + \lfloor\log_2V\rfloor)\)-approximation.

  Let \(p^\star\) be an integral optimal AF vector for Program~\eqref{eq:threshold-target-demand-program}, and let \(A^\star = \{\theta: p^\star_{\tau(\theta)} \leqslant v_\theta\}\). Fix \(L \in \mathcal L\). If \(\theta \in A^\star\) and \(p^\star_{\tau(\theta)} \geqslant L\), we want to prove that \(L \leqslant m_{p^L}(\tau(\theta)) \leqslant v_\theta\). This means that every type who buys under the optimal price vector at a price at least \(L\) also buys under \(m_{p^L}\), and pays at least \(L\). For the upper bound, since \(v_\theta \geqslant p^\star_{\tau(\theta)} \geqslant L\), we have \(p_{\tau(\theta)}^L \leqslant v_\theta\) by definition. Since \(m_{p^L} \leqslant p^L\), we have \(m_{p^L}(\tau(\theta)) \leqslant v_\theta\). For the lower bound, let \(q_i^L = \min\{p_i^\star, L\}\). The vector \(q^L\) is AF since for every \(a \in \mathcal C_i\), $q_i^L \leqslant \min\{a \cdot p^\star, L\} \leqslant \sum_j a_j \min\{p_j^\star, L\} = a \cdot q^L$. Since \(q_i^L \leqslant L \leqslant p_i^L\) for every \(i\) and \(m_{p^L}\) is the largest AF vector whose every coordinate is at most the corresponding coordinate of \(p^L\), we have \(q^L \leqslant m_{p^L}\). Since \(p^\star_{\tau(\theta)} \geqslant L\), we have \(q_{\tau(\theta)}^L = L\), hence \(L \leqslant m_{p^L}(\tau(\theta))\). 
  
  Therefore, $R(L) \geqslant \sum_{\theta \in A^\star: p^\star_{\tau(\theta)} \geqslant L} \lambda_\theta L$. According to the construction of \(\mathcal L\) and $p^\star_{\tau(\theta)}$ is an integer, we have
  \[ \sum_{L \in \mathcal L} R(L) \geqslant \sum_{\theta \in A^\star} \lambda_\theta \sum_{L \leqslant p^\star_{\tau(\theta)}}L \geqslant \sum_{\theta \in A^\star}\lambda_\theta p^\star_{\tau(\theta)} = \OPT. \]
  Since \(|\mathcal L| = 1 + \lfloor\log_2V\rfloor\), level \(L^\star\) satisfies \(R(L^\star) \geqslant \OPT / (1 + \lfloor\log_2V\rfloor)\).
\end{proof}

Note that if the value range is continuous and we enlarge the \(\mathcal L\) to all values in \([1, V]\), then the same construction can achieve the same approximation ratio as Proposition~\ref{prop:full-info-approx}. Indeed, this construction contains the case of selling only the complete information. For the trade-off between running time and precision, choosing the powers of $2$ is a convenient choice.

\begin{proof}[Proof of Proposition~\ref{prop:interval-shortest-path}]
  Fix a posted price vector \(p\). Let \(X\) be the sorted set of all endpoints appearing in the menu intervals. Build a directed graph on \(X\). For every menu interval \(J = [\ell, r]\), add an edge \(\ell \to r\) with cost \(p_J\). Also add a zero-cost edge from each endpoint to the previous endpoint in the sorted order except the smallest one.

  We claim that, for every target interval \(I = [a, b]\), the shortest path distance from \(a\) to \(b\) equals the minimum cost of a bundle of menu intervals whose union covers \(I\). Let \(\mathcal A\) be any collection of menu intervals covering \([a, b]\), then \(\mathcal A\) contains a subcollection that can be ordered as \([\ell_1, r_1], \ldots, [\ell_t, r_t]\), with \(\ell_1 \leqslant a \leqslant r_1\), \(\ell_{s + 1} \leqslant r_s\) for \(s < t\), and \(r_t \geqslant b\). Starting from \(a\), use zero-cost left edges to move to \(\ell_1\), take the interval edge to \(r_1\), then repeat the same operation for the next interval; at the end, if \(r_t > b\), move left to \(b\). Thus any cover of \([a, b]\) can be viewed as a path from \(a\) to \(b\). Conversely, any path from \(a\) to \(b\) gives a cover. Therefore the shortest path distance is exactly the cheapest covering cost.

  Thus, target \(i\) has a Blackwell AF violation exactly when the shortest path distance from \(a_i\) to \(b_i\) is less than \(p_i\). The graph has \(O(M)\) vertices and \(O(M)\) edges. Running Dijkstra's algorithm from each target left endpoint to the corresponding right endpoint gives total time \(O(M^2\log M)\).
\end{proof}

\begin{proof}[Proof of Proposition~\ref{prop:quality-based-detection}]
  For each coordinate \(r \in [d]\), sort the \(M\) numbers \(x_{1r}, \ldots, x_{Mr}\) and remove repetitions. Let \(L_r\) be this set together with an initial value \(\bot_r\), which represents no purchased product in coordinate \(r\). Then \(|L_r| \leqslant M + 1\). Define \(\mathcal C = \left\{\bigvee_{j \in A}x_j: A \subseteq [M]\right\}\) and \(\bot = (\bot_1, \ldots, \bot_d)\). Then \(N := |\mathcal C| \leqslant \prod_{r = 1}^d |L_r| \leqslant (M + 1)^d\). Build a directed graph on \(\mathcal C\). For every \(s \in \mathcal C\) and every product \(x_j\), add an edge \(s \to s \vee x_j\) of cost \(p_j\). A bundle \(j_1, \ldots, j_k\) corresponds to the path
  \[ \bot \to x_{j_1} \to x_{j_1} \vee x_{j_2} \to \cdots \to \bigvee_{\ell = 1}^k x_{j_\ell} \]
  with the same total cost. Conversely, every path from \(\bot\) records a sequence of purchased products, and its endpoint is the join of those products. Thus the minimum cost of a bundle that Blackwell dominates target \(i\) is \(\min_{s \geqslant x_i}\operatorname{dist}_p(s)\), where \(\operatorname{dist}_p(s)\) is the shortest path distance from \(\bot\) to \(s\). An arbitrage violation exists exactly when \(p_i > \min_{s \geqslant x_i}\operatorname{dist}_p(s)\) for some target \(i\). 

  The graph has \(N\) vertices and at most \(NM\) edges. Dijkstra's algorithm computes all distances from \(\bot\) in \(O(NM\log N)\) time, and checking all $s \geqslant x_i$ for all $i$ takes \(O(MN)\) additional time. Since \(d\) is fixed and \(N \leqslant(M + 1)^d\), the total running time is \(O(M^{d + 1}\log M)\).
\end{proof}

\begin{proof}[Proof of Proposition~\ref{prop:tree-dp}]
  We claim that Blackwell AF is equivalent to the two local constraints: for every child \(w \in \mathrm{Ch}(u)\), \(p_w \leqslant p_u\); and for every internal node \(u\), \(p_u \leqslant\sum_{w \in \mathrm{Ch}(u)}p_w\). These constraints are necessary because the parent dataset \(S_u\) reveals each child dataset \(S_w\), and all child datasets together reveals \(S_u\). They are also sufficient. Fix a node \(u\), and consider any bundle whose union covers \(S_u\). If the bundle contains an ancestor \(a\) of \(u\), then \(p_a \geqslant p_u\). Otherwise, the selected products lie in the descendant of \(u\). Since the children partition \(S_u\), the bundle must cover every \(S_w\) with \(w\in\mathrm{Ch}(u)\). Using induction, if covering \(S_w\) inside the subtree of \(w\) costs at least \(p_w\), the bundle costs at least \(\sum_{w\in\mathrm{Ch}(u)}p_w\geqslant p_u\). Thus the local constraints are exactly the Blackwell AF constraints for the tree.

  Now we optimize over these local constraints. For each node \(u\), recall \(D_u(p)\) is the total weight of types in \(\Theta_u\) whose value is at least \(p\), and set \(r_u(p) = pD_u(p)\). Besides, note that some optimal solution has prices in \(\{0, 1, \ldots, V\}\).

  For a node \(u\), let \(F_u(p_u)\) be the maximum revenue obtainable from the subtree rooted at \(u\). If \(u\) is a leaf, then \(F_u(p_u) = r_u(p_u)\). Otherwise, its child prices \(x_w\) must satisfy \(0 \leqslant p_w \leqslant p_u\) and \(\sum_w p_w \geqslant p_u\). Once these prices are fixed, the child subtrees are independent, so
  \[ F_u(p_u) = r_u(p_u) + \max_{\substack{0 \leqslant p_w \leqslant p_u, \forall w \\ \sum_w p_w \geqslant p_u}} \sum_{w \in \mathrm{Ch}(u)} F_w(p_w). \]

  For fixed \(u\) and \(p_u\), we can compute the inner maximization by a dynamic programming algorithm. Let $\mathrm{Ch}(u) = \{w_1, \ldots, w_q\}$. For each child \(w_\ell\), let \(B_\ell(t)\) be the maximum total revenue from the children $\{w_1, \ldots, w_\ell\}$ when the sum of their prices is \(t\), where sums larger than \(V\) are recorded as \(V\). The algorithm is described in Algorithm~\ref{alg:child-value-maximization}. This takes \(O(|\mathrm{Ch}(u)|V^2)\) time for fixed \(p_u\), and hence \(O(|\mathrm{Ch}(u)|V^3)\) time to compute all entries \(F_u(0), \ldots, F_u(V)\). Summing over all nodes gives \(O(|\mathcal F|V^3)\), and maximizing \(F_r(p_r)\) over the root price \(p_r\) returns the optimal revenue, and storing the maximizing child prices recovers an optimal price vector.
\end{proof}

\begin{algorithm}[htbp]
    \caption{\textsc{Child Value Maximization}} \label{alg:child-value-maximization}
    \begin{algorithmic}[1]
      \State $B_0(0) \gets 0$ and $B_0(t) \gets -\infty$ for all $t = 1, \ldots, V$.
      \For{$\ell = 1, \ldots, q$}
          \State Initialize $B_\ell(t) \gets -\infty$ for all $t = 0, \ldots, V$.
          \For{$t = 0, \ldots, V$}
              \For{$p_{w_\ell} = 0, \ldots, p_u$}
                  \State $t' \gets \min\{V, t + p_{w_\ell}\}$, $B_\ell(t') \gets \max\{B_\ell(t'), B_{\ell - 1}(t) + F_{w_\ell}(p_{w_\ell})\}$.
              \EndFor
          \EndFor
      \EndFor
      \State \Return \(\max_{t \geqslant p_u}B_q(t)\).
    \end{algorithmic}
  \end{algorithm}

\begin{proof}[Proof of Proposition~\ref{prop:spectral-gauge-af}]
  Fix \(W \in \PD^d\), and sort the products so that \(s_i = \operatorname{tr}(WJ_i)\) is nondecreasing. Let \(p\) be \(W\)-regular. Consider any target \(i\) and any count vector \(c \in \mathbb Z_+^M\) such that \(\sum_j c_jJ_j \succeq J_i\). Multiplying by \(W\) and taking traces gives \(\sum_j c_js_j \geqslant s_i\).
  If the bundle contains some product \(j\) with \(s_j \geqslant s_i\), then the bundle costs at least \(p_i\) by monotonicity. Otherwise all purchased products satisfy \(s_j < s_i\). Using \(p_j / s_j \geqslant p_i / s_i\), the bundle cost satisfies \(\sum_j c_jp_j \geqslant (p_i/s_i)\sum_j c_js_j\geqslant p_i\). Thus every bundle that Blackwell dominates \(G_{J_i}\) costs at least \(p_i\). Taking \(c = e_i\) is feasible and has cost \(p_i\), so~\eqref{eq:matrix-af-counts} holds.
  
  Now we optimize over \(W\)-regular prices. Let \(\mathcal V = \bigcup_i\{v_\theta: \theta \in \Theta_i\}\) be the set of buyer values. For each index \(i\), define \(\mathcal P_i = \{0\} \cup \mathcal V \cup \{(s_i / s_k)v: k \in [M], \ v \in \mathcal V\}\). We next show that it suffices to optimize over the finite sets \(\mathcal P_i\). Note that if \(s_i = s_k\) for \(i < k\) (\(i > k\) is similar), then \(W\)-regularity implies \(p_i \leqslant p_k\) and \(p_k \leqslant p_i\), so \(p_i = p_k\). Therefore, we can assume that \(s_1 < s_2 < \cdots < s_M\) without loss of generality. Let \(v_{\max} = \max \mathcal V\), then we can restrict attention to price vectors with \(0 \leqslant p_i \leqslant v_{\max}\) for all \(i\). Among all optimal \(W\)-regular price vectors in this compact set, choose one that maximizes \(\sum_i p_i\). We prove that this price vector satisfies \(p_i \in \mathcal P_i\) for every \(i\) by contradiction.

  Suppose otherwise that \(p_i \notin \mathcal P_i\) for some \(i\). In particular, \(p_i \notin \{0\} \cup \mathcal V\). Let \(p_a / s_a = p_{a + 1} / s_{a + 1} = \cdots = p_i / s_i\), and either \(a = 1\) or \(p_{a - 1} / s_{a - 1} > p_a / s_a\), and let \(p_i = p_{i + 1} = \cdots = p_b\), and either \(b = M\) or \(p_b < p_{b + 1}\). Define \(d_j = s_j / s_i\) for \(a \leqslant j \leqslant i\), \(d_j = 1\) for \(i < j \leqslant b\), and \(d_j = 0\) otherwise. For sufficiently small \(\varepsilon > 0\), let \(p'_j = p_j + \varepsilon d_j\) for every \(j\). We first check that \(p'\) is still \(W\)-regular. For \(j \in \{a, \ldots, i\}\), all unit prices \(p_j / s_j\) are increased by the same amount \(\varepsilon / s_i\), so $p'_a / s_a = \cdots = p'_i / s_i$, and for \(j \in \{i, \ldots, b\}\), $p'_i = \cdots = p'_b$. Since \(s_1 < \cdots < s_M\), we have $p'_a < \cdots < p'_i$ and $p'_i / s_i > \cdots > p'_b / s_b$. For $j < a (a > 1)$ or $j > b (b < M)$, we choose \(\varepsilon\) small enough so that \(p_a / s_a + \varepsilon / s_i \leqslant p_{a - 1} / s_{a - 1}\) and \(p_b + \varepsilon \leqslant p_{b + 1}\), and all other constraints are unchanged or become easier to satisfy. Hence \(p'\) is \(W\)-regular.

  It remains to check that the demand is unchanged for sufficiently small \(\varepsilon\). For every \(j \in \{i,\ldots,b\}\), we have \(p_j = p_i\), so \(p_j \notin \mathcal V\). For every \(j \in \{a,\ldots,i\}\), we have \(p_j = (s_j/s_i)p_i\). If \(p_j \in \mathcal V\), then \(p_i = (s_i/s_j)p_j \in \mathcal P_i\), contradicting \(p_i \notin \mathcal P_i\). Thus no modified price \(p_j\) lies in \(\mathcal V\). Since \(\mathcal V\) is finite, we can choose \(\varepsilon > 0\) small enough so that no modified price crosses any value in \(\mathcal V\), and hence \(D_j(p'_j) = D_j(p_j)\) for every modified index \(j\). Therefore, $\sum_j p'_j D_j(p'_j) = \sum_j p'_j D_j(p_j) \geqslant \sum_j p_j D_j(p_j)$. Moreover, \(d_i = 1\), so \(\sum_j p'_j > \sum_j p_j\). If the revenue strictly increases, this contradicts the optimality of \(p\); if the revenue is unchanged, this contradicts the choice of \(p\) as an optimal \(W\)-regular price vector maximizing \(\sum_j p_j\). Hence no such \(i\) exists, and there is an optimal \(W\)-regular price vector with \(p_i \in \mathcal P_i\) for every \(i\).

  We can now use dynamic programming. Let \(A_i(\alpha)\) be the maximum revenue from products \(1, \ldots, i\), conditional on \(p_i = \alpha\), where \(\alpha \in \mathcal P_i\). Initialize \(A_1(\alpha) = \alpha D_1(\alpha)\). For \(i \geqslant 2\), suppose \(p_i = \alpha\) and \(p_{i - 1} = \beta\). The \(W\)-regular constraints are equivalent to \(\alpha s_{i - 1} / s_i \leqslant \beta \leqslant \alpha\). Thus
  \[ A_i(\alpha) = \alpha D_i(\alpha) + \max_{\beta \in \mathcal P_{i - 1}: \ \alpha s_{i - 1} / s_i \leqslant \beta \leqslant \alpha} A_{i - 1}(\beta). \]
  The optimal \(W\)-regular revenue is \(\max_{\alpha \in \mathcal P_M}A_M(\alpha)\). Since \(|\mathcal P_i| = O(M|\mathcal V|)\), the dynamic program has polynomial size.
\end{proof}

\begin{proof}[Proof of Proposition~\ref{prop:spectral-gauge-approx}]
  Let \(p^\star\) be an optimal price vector that satisfies Equation~\eqref{eq:matrix-af-counts}, so \(\operatorname{Rev}(p^\star) = \OPT\). For the fixed weight matrix \(W\), define \(\widehat p_i = \min_k p_k^\star\max\{1, s_i / s_k\}\). We first show that \(\widehat p\) is \(W\)-regular. If \(s_i \leqslant s_j\), then \(\max\{1, s_i / s_k\} \leqslant \max\{1, s_j / s_k\}\) for every \(k\), so \(\widehat p_i \leqslant \widehat p_j\). Also, \(s^{-1}\max\{1, s / s_k\} = \max\{1 / s, 1 / s_k\}\) is nonincreasing in \(s\), so taking the minimum over \(k\) gives \(\widehat p_i / s_i \geqslant \widehat p_j / s_j\). Thus \(\widehat p\) is \(W\)-regular.

  Choosing \(k = i\) gives \(\widehat p_i \leqslant p_i^\star\). We then need to find the lower bound of \(\widehat p_i\). Since \(\kappa_{ik}J_k \succeq J_i\),  Blackwell AF of \(p^\star\) gives \(p_i^\star \leqslant \kappa_{ik}p_k^\star\) for all \(k\). By the definition of \(\Gamma_W\), \(\kappa_{ik} \leqslant \Gamma_W \max\{1, s_i / s_k\}\). Hence \(p_i^\star \leqslant \Gamma_Wp_k^\star \max\{1, s_i / s_k\}\) for all \(k\), and taking the minimum over \(k\) yields \(\widehat p_i \geqslant p_i^\star / \Gamma_W\).

  For every buyer type \(\theta \in \Theta_i\) who buys the product under \(p^\star\), we have \(p_i^\star \leqslant v_\theta\). Since \(\widehat p_i \leqslant p_i^\star\), $\theta$ will also buy the product under \(\widehat p\), and since \(\widehat p_i \geqslant p_i^\star / \Gamma_W\), its payment is at least a \(1 / \Gamma_W\) fraction of the original payment. Hence \(\operatorname{Rev}(\widehat p) \geqslant \operatorname{Rev}(p^\star) / \Gamma_W = \OPT / \Gamma_W\). Since \(p^W\) maximizes revenue over all \(W\)-regular price vectors, it also achieves revenue at least \(\OPT / \Gamma_W\).
\end{proof}

\end{document}